\newtheorem{thm}{Theorem}[section]
\newtheorem{prop}[thm]{Proposition}
\newtheorem{lemma}[thm]{Lemma}
\newtheorem{cor}[thm]{Corollary}
\newtheorem{definition}[thm]{Definition}
\newtheorem{remark}[thm]{Remark}
\newcommand{\beq}{\begin{equation}\begin{aligned}}
\newcommand{\eeq}{\end{aligned}\end{equation}}
\newcommand{\ii}{\mathrm{i}}
\title{\boldmath Spin(7)-manifolds in compactifications to four dimensions}
\author[a]{Mariana Gra\~na,}
\author[a]{C. S. Shahbazi}
\author[b]{and Marco Zambon}
\affiliation[a]{Institut de Physique Th\'eorique, CEA Saclay.}
\affiliation[b]{Universidad Aut\'onoma de Madrid and ICMAT(CSIC-UAM-UC3M-UCM), Madrid.}
\emailAdd{mariana.grana@cea.fr}
\emailAdd{carlos.shabazi@cea.fr}
\emailAdd{ marco.zambon@icmat.es, marco.zambon@uam.es}
\abstract{\vspace{0.1cm}\\ We describe off-shell $\mathcal{N}=1$  M-theory compactifications down to four dimensions in terms of eight-dimensional manifolds equipped with a topological $Spin(7)$-structure. Motivated by the exceptionally generalized geometry formulation of M-theory compactifications, we consider an eight-dimensional manifold $\mathcal{M}_{8}$ equipped with a particular set of tensors $\mathfrak{S}$ that  allow to naturally embed in $\mathcal{M}_{8}$ a family of $G_{2}$-structure seven-dimensional manifolds as the leaves of a codimension-one foliation. Under a different set of assumptions, $\mathfrak{S}$ allows to make $\mathcal{M}_{8}$ into a principal $S^{1}$ bundle, which is equipped with a topological $Spin(7)$-structure if the base is equipped with a topological $G_{2}$-structure. We also show that $\mathfrak{S}$ can be naturally used to describe regular as well as a singular elliptic fibrations on $\mathcal{M}_{8}$,  which may be relevant for F-theory applications,  
and prove several mathematical results concerning the relation between topological $G_{2}$-structures in seven dimensions and topological $Spin(7)$-structures in eight dimensions. }
\begin{document}
\maketitle
\flushbottom



\section{Introduction}
\label{sec:introduction}


Supersymmetric String-Theory compactifications have always been a beautiful source of connections between physics and differential and algebraic geometry\footnote{See for example the reviews \cite{Greene:1996cy,Grana:2005jc,Denef:2008wq,Koerber:2010bx} for more details and further references.}. The supersymmetry equations impose topological as well as differential conditions on the space-time manifold that can be nicely codified using the different tools existing in differential geometry, and in particular, the notion of topological and geometric $G$-structures \cite{Chern}. For instance, in the seminal paper \cite{Candelas:1985en}, it was proven that the internal space of a four-dimensional compactification of flux-less heterotic String Theory must be a Calabi-Yau three-fold, that is, a six-dimensional real manifold of $SU(3)$-holonomy. Likewise, the internal space of a flux-less, $\mathcal{N}=1$ four-dimensional M-Theory compactification, must be a seven-dimensional manifold of $G_{2}$-holonomy \cite{Duff:1986hr}, and the internal manifold of a flux-less, $\mathcal{N}=1$ four-dimensional F-theory compactification, must be a Calabi-Yau four-fold, namely an eight-dimensional manifold of $SU(4)$-holonomy \cite{Vafa:1996xn,Becker:1996gj}. As a general rule, the supersymmetry conditions on a flux-less compactification of String/M/F-Theory imply that the internal space has to be a manifold of special holonomy \cite{Salamon,Joyce2007}. Remarkably enough, mathematicians had started studying manifolds of special holonomy thirty years before their appearance in String Theory \cite{Bonan,Berger}, and the study of these manifolds is nowadays still an active field of research in mathematics. In order to write a lower-dimensional effective theory encoding the dynamics of the massless degrees of freedom of the compactified theory, it is necessary to know the moduli space of the compactification manifold \cite{Duff:1986hr}. For manifolds of special holonomy, the moduli space is relatively well understood; for example, the moduli space of a Calabi-Yau three-fold is itself a K\"ahler manifold that factorizes as the product of two projective Special-K\"ahler manifolds \cite{Candelas:1990pi}. For the case of $G_{2}$-holonomy structures, several results are summarized in reference \cite{Joyce2007}, where the moduli space is given as a finite-dimensional subspace of the infinite-dimensional space of sections of a particular bundle over the seven-dimensional internal manifold. Characterizing the moduli space of geometric structures satisfying a differential condition on a manifold as the finite subspace of the infinite-dimensional space of sections of an appropriate bundle will prove to be the right way to proceed also in more complicated examples, namely in the presence of fluxes.

Compactifications with non-trivial fluxes are considerably more complicated than its flux-less counterparts \cite{Kachru:2002he,Grana:2005jc,Douglas:2006es,Denef:2007pq}. Although the supersymmetry conditions in the presence of fluxes have been mostly worked out, starting from \cite{Strominger:1986uh} for the case of the heterotic string, solving them is extremely complicated. This means that already the task of obtaining, or characterizing in any meaningful way, the vacuum of the compactification may be an impossible task. The first consequence of having non-trivial fluxes is that the internal manifold is not going to be in general a manifold of special holonomy. In addition, the moduli space of manifolds satisfying more cumbersome differential conditions are poorly understood, and therefore the task of obtaining an explicit effective action for the compactification theory in the presence of fluxes becomes much harder. 

Still, lot of effort has been devoted in order to better understand flux compactifications, and remarkable progress has been made. Recently, two mathematical tools have been developed, to wit, generalized complex geometry (GCG)  \cite{2002math......9099H,2004math......1221G} and exceptional generalized geometry (EGG) \cite{Hull:2007zu,Pacheco:2008ps}, that turned out to be very powerful, among their many other applications, in order to study supersymmetric solutions and supersymmetric compactifications of String and M-Theory. The key point in order to apply GCG and EGG to String/M-Theory is to use the fact that supersymmetry conditions on the space-time manifold are sometimes better characterized not using tensors but sections of different vector bundles over the internal compactification space, which in addition imply the corresponding topological reductions on the associated principal bundles. On-shell supersymmetry can then be expressed through differential conditions on the sections of this extended bundles. In the case of GCG this new bundle is the sum of the tangent and the cotangent bundle of the internal space, whereas in EGG it is a more complicated extension of the tangent bundle, such that all the charges in String Theory (including RR ones) or in M-theory are geometrized. 

M-Theory compactifications with (a priori off-shell) ${\mathcal N}=1$ supersymmetry are described using a 912-rank vector bundle $E\to\mathcal{M}_{7}$ where the structure group is $E_{7(7)}$ acting on the ${\bf 912}$ representation $E_{{\bf 912}}$ and $\mathcal{M}_{7}$ denotes the internal seven-dimensional manifold \cite{Pacheco:2008ps}.  
The ${\bf 912}$ representation is naturally decomposed in terms of the fundamental representation of $Sl(8,\mathbb{R})$ acting on an eight-dimensional vector space $V$, as follows
\begin{equation} 
\label{eq:912intro}
E_{{\bf 912}} = S^2 V\oplus \left(\Lambda^{3}V\otimes V^{\ast}\right)_{0}\oplus S^2 V^{\ast}\oplus \left(\Lambda^{3}V^{\ast}\otimes V\right)_{0}\, .
\end{equation}
\noindent
where $S^2$ denotes symmetric two-tensors and the subindex 0 denotes traceless. The eight-dimensional vector space $V$ cannot be straightforwardly identified with the tangent space of the internal compactification manifold, being the latter seven-dimensional. It is then suggestive to try to find an \emph{eight-dimensional structure} encoding the natural decomposition of $E_{{\bf 912}}$ in terms of the fundamental representation of $Sl(8,\mathbb{R})$. In this paper we will pursue this idea, originally proposed in \cite{Grana:2012zn}, by studying an eight-dimensional manifold $\mathcal{M}_{8}$ equipped at every point with a set of tensors $\mathfrak{S}$ defining the decomposition of $E_{{\bf 912}}$ in terms of $Sl(8,\mathbb{R})$, as given in \eqref{eq:912intro}. We will study the geometric properties of $\mathcal{M}_{8}$ relating it to the internal manifolds used in M and F-Theory compactifications. Remarkably enough, the structure $\mathfrak{S}$ prescribed by the decomposition \eqref{eq:912intro}, is precisely appropriate in order to embed in $\mathcal{M}_{8}$ a family of manifolds of $G_{2}$ structure, relevant in M-Theory compactifications, as well as, at the same time, proving $\mathcal{M}_{8}$ to have a topological $Spin(7)$ structure, which allows to relate $\mathfrak{S}$ to supersymmetry. In addition, this gives a particular relation between $G_{2}$-structure seven-dimensional and $Spin(7)$- structure eight-dimensional manifolds, which may be of physical interest in the context of String/M/F-Theory dualities. Notice that in order to conclude that $\mathcal{M}_{8}$ is an admissible internal space in F-Theory, it has to be elliptically fibered. It turns out that $\mathfrak{S}$ is an appropriate structure to define, under some mild assumptions, a regular as well as a singular elliptic fibration in $\mathcal{M}_{8}$.

In this paper we will consider exclusively off-shell supersymmetry and therefore the structures involved will always be topological. Clearly, more effort is needed in order to understand better the relevance of $\mathcal{M}_{8}$ and $\mathfrak{S}$ in String/M/F-Theory, the first step being to consider on-shell supersymmetry and therefore differential conditions on $\mathfrak{S}$. We leave that for a future publication. What we will  unravel  here is the more general question of how to build $Spin(7)$-structure eight-dimensional manifolds from $G_{2}$-structure seven-dimensional manifolds, or, more in general how $G_{2}$-structure and $Spin(7)$-structure manifolds are related. This is a question of very much physical interest, given that $G_{2}$-structure manifolds are important in M-Theory compactifications and $Spin(7)$-structure manifolds are important in F-Theory compactifications. Therefore the link between both structures should have an interpretation in terms of string dualities.

The paper is organized as follows. In section \ref{sec:EGGM} we review M-Theory compactifications preserving some off-shell supersymmetry (focusing in particular on ${\mathcal N}=1$) and give the corresponding exceptional generalized geometric formulation, which is also used to motivate the set of tensors $\mathfrak{S}$ defined on $\mathcal{M}_{8}$. Since it will be important later in order to study the geometry of $\mathcal{M}_{8}$ equipped with $\mathfrak{S}$, in section \ref{sec:G2in7} we give a fairly complete review of $G_{2}$-structures on seven-dimensional vector spaces and $Spin(7)$-structures on eight-dimensional vector spaces, and obtain several results about the relation between them. In section \ref{sec:8manifold} we precisely define $\mathfrak{S}$ on $\mathcal{M}_{8}$ and state the corresponding existence obstructions. In section \ref{sec:nd} we begin the proper study of $\mathcal{M}_{8}$ equipped with $\mathfrak{S}$, obtaining several results concerning $Spin(7)$-structures on $\mathcal{M}_{8}$ and the relation of $\mathcal{M}_{8}$ with seven-dimensional manifolds of $G_{2}$-structure. In section \ref{sec:generalintermediate} we consider the relation of $\mathfrak{S}$ to regular as well as singular elliptic fibrations on $\mathcal{M}_{8}$, in order to evaluate the viability of $\mathcal{M}_{8}$ as an internal space in F-Theory compactifications. We conclude in section \ref{sec:conclusions}. More details on $G_2$ and $Spin(7)$ manifolds can be found respectively in Appendices \ref{sec:G2appendix} and \ref{sec:Spin7appendix}.


\section{M-theory compactifications}
\label{sec:EGGM}


The effective, low-energy, description of M-theory \cite{Witten:1995ex} is believed to be given by eleven-dimensional $\mathcal{N}=1$ Supergravity \cite{Cremmer:1978km}, whose field content is given by a Lorentzian metric $\mathsf{g}$, a three-form gauge field $C$ and a Majorana gravitino $\Psi$. We are interested in bosonic solutions to eleven-dimensional Supergravity, and therefore we will give from the onset a zero vacuum expectation value to the Majorana gravitino $\Psi$. The bosonic action of classical eleven-dimensional Supergravity reads

\begin{equation}
\label{eq:11sugra}
S = \int_{\mathcal{M}}\left\{ \mathsf{R}\, d\mathcal{V} - \frac{1}{4} \mathsf{G}\wedge\ast\,\mathsf{G} + \frac{1}{12} \mathsf{G}\wedge\mathsf{G}\wedge \mathsf{C} \right\} \, ,
\end{equation}

\noindent
where $\mathcal{M}$ denotes the eleven-dimensional space-time differentiable\footnote{By differentiable manifold we mean a Hausdorff, second-countable, topological space equipped with a \emph{differentiable structure}.}, orientable and spinnable\footnote{Since it is spinnable the frame bundle $F(\mathcal{M})\to\mathcal{M}$ of $\mathcal{M}$ admits a spin structure $\tilde{F}(\mathcal{M})$ such that its associated vector bundle $S\to\mathcal{M}$ is the spin bundle over $\mathcal{M}$ manifold. In eleven dimensions with signature $(1,10)$, the spin bundle $S_{p}\, , \,\, p\in\mathcal{M}$ is the thirty-two-dimensional Majorana representation $\Delta^{\mathbb{R}}_{1,10}$ and sections $\epsilon\in\Gamma\left( S\right)$ of $S$ are Majorana spinors.} manifold, $d\mathcal{V}$ is the canonical volume form induced by the metric $\mathsf{g}$, and $G$ is the closed four-form flux associated to $C$, i.e. locally we can write $\mathsf{G} = d\mathsf{C}$. 
In a bosonic background $(\mathsf{g},\mathsf{C})$, the only non-trivial supersymmetry transformation is the gravitino one, given by\footnote{We denote by $\flat$ and $\sharp$ the musical isomorphisms defined by the manifold metric.}

\begin{equation}
\label{eq:susytransformation}
\delta_{\epsilon} \psi(v)= \nabla^{S}_{v} \epsilon + \frac{1}{6} \iota_{v} \mathsf{G}\cdot \epsilon +\frac{1}{12} v^{\flat}\wedge \mathsf{G}\cdot \epsilon \, , \qquad  
\end{equation}

\noindent
where $\psi(v)=\iota_v \psi$, $\epsilon\in\Gamma\left( S\right)$, and $\cdot$ denotes the Clifford multiplication. 

In this letter we study compactifications of M-theory down such that the resulting four dimensional theory is ${\cal N}=1$ supersymmetric off-shell.  Although we speak about \emph{compactifications}, we are not going to assume that $\mathcal{M}_{7}$ is compact for two basic reasons. First, it may be consistent to compactify in non-compact space-times with finite volume and appropriate behavior of the laplacian operator \cite{Nicolai:1984jga}. And secondly, the results that we will obtain in the rest of the paper involve seven-dimensional manifolds that may not be necessarily compact.
We we will assume that the space-time manifold $\mathcal{M}$ can be written as the direct product of four-dimensional Lorentzian  space-time $\mathcal{M}_{1,3}$ and a seven dimensional, Riemannian, orientable and spinnable manifold $\mathcal{M}_{7}$

\begin{equation}
\label{eq:productmanifold}
\mathcal{M} = \mathcal{M}_{1,3}\times\mathcal{M}_{7}\, ,
\end{equation}

\noindent
and we will therefore take the Lorentzian metric $g_{11}$ on $\mathcal{M}$ to be given by

\beq
\label{eq:11dmetricproduct}
g_{11}= g_{1,3}\times g_{7} \, . 
\eeq

\noindent
Given the product structure (\ref{eq:productmanifold}) of the space-time manifold $\mathcal{M}$, the tangent bundle splits as follows\footnote{We omit the pull-backs of the canonical projections.}

\begin{equation}
T\mathcal{M} = T\mathcal{M}_{1,3}\oplus T\mathcal{M}_{7}\, ,
\end{equation}

\noindent
which allows, using (\ref{eq:11dmetricproduct}), a decomposition of the structure group $\mathrm{Spin}(1,10)$ of $\mathcal{M}$ in terms of $\mathrm{Spin}(1,3)\times \mathrm{Spin}(7)\subset \mathrm{Spin}(1,10)$ representations. The corresponding branching rule is

\begin{equation}
\label{eq:branchingrule}
\Delta^{\mathbb{R}}_{1,10} = \Delta^{+}_{1,3} \otimes \left(\Delta^{\mathbb{R}}_{7}\right)_{\mathbb{C}}  \oplus \Delta^{+ \ast}_{1,3} \otimes \left(\Delta^{\mathbb{R}}_{7}\right)^{\ast}_{\mathbb{C}}\, ,
\end{equation}

\noindent
where $\Delta^{+}_{1,3}$ denotes the positive-chirality complex Weyl representation of $\mathrm{Spin}(1,3)$, $\Delta^{\mathbb{R}}_{7}$ denotes the real Majorana representation of $\mathrm{Spin}(7)$ and the subscript $\mathbb{C}$ denotes the complexification of the corresponding real representation. Let us respectively denote by $S^{+}_{1,3}$ and $S^{\mathbb{R}}_{7}$ the corresponding spin bundles over $\mathcal{M}_{1,3}$ and $\mathcal{M}_{7}$. Using equation (\ref{eq:branchingrule}) we deduce that the supersymmetry spinor $\epsilon\in\Gamma\left( S\right)$ decomposes as follows 

\begin{equation}
\label{eq:11dspinor}
\epsilon=\xi_+ \otimes \eta +  \xi_{+}^c \otimes {\eta}^c\, ,
\end{equation}

\noindent
where $\xi_{+}\in \Gamma\left( S^{+}_{1,3}\right)$ and 
\beq
\eta = \eta_{1} + i\eta_{2} \ , \qquad \eta_{1},\eta_{2}\in \Gamma\left( S^{\mathbb{R}}_{7}\right) \ . 
\eeq
\noindent Given the decomposition (\ref{eq:11dspinor}), $\mathcal{M}_{7}$ is equipped with a globally defined no-where vanishing complex spinor $\eta$, that is a globally defined section of the complexified spin bundle $S^{\mathbb{R}}_{7}\otimes\mathbb{C}\to\mathcal{M}_{7}$. Its real components $\eta_{1}$ and $\eta_{2}$ can a priori vanish at points or become parallel, as long as they do not simultaneously vanish.


\subsection{The seven dimensional manifold ${\cal M}_7$}
\label{sec:sevendimmanifolds}


Generically, the existence of globally defined nowhere vanishing spinors implies a topological reduction of the structure group of the frame bundle from $SO(7)$ (or rather Spin(7) for a spin manifold) to a given subgroup. In the case of seven-dimensional spin manifolds, however, the reduction of the structure group is guaranteed due to the following proposition\footnote{$G_2$-structures will be introduced in detail in section \ref{sec:G2in7}. Further definitions and properties are given in  Appendix \ref{sec:G2appendix}.}
\begin{prop}
\label{prop:topologicalreductions7noncompact}
Let $\mathcal{M}$ be a seven-dimensional manifold. Then the following conditions are equivalent

\begin{enumerate}

\item $\mathcal{M}$ admits a topological $\mathrm{Spin}(7)$-structure.

\item The first and the second Stiefel-Whitney class of $\mathcal{M}$ vanish, that is $\omega_{1} = 0$ and $\omega_{2} = 0$.

\item $\mathcal{M}$ admits a topological $G_{2}$-structure.

\end{enumerate}
\end{prop}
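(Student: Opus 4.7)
The plan is: (1) $\Leftrightarrow$ (2) by standard obstruction theory for spin structures, while (1) $\Leftrightarrow$ (3) is reduced to a spinorial characterization of $G_{2}$.

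The equivalence (1) $\Leftrightarrow$ (2) is classical. On an orientable 7-manifold the orthonormal frame bundle has structure group $SO(7)$, and a topological $\mathrm{Spin}(7)$-structure is by definition a lift of this bundle along the double cover $\mathrm{Spin}(7)\to SO(7)$; the primary (and only) obstruction to such a lift is the second Stiefel--Whitney class $\omega_{2}$, while orientability of $\mathcal{M}$ is encoded in $\omega_{1}=0$. The implication (3) $\Rightarrow$ (1) then follows from the inclusion $G_{2}\subset \mathrm{Spin}(7)$, realized as the stabilizer in $\mathrm{Spin}(7)$ of any nonzero vector in the $8$-dimensional real spin representation $\Delta^{\mathbb{R}}_{7}$: any $G_{2}$-structure $P_{G_{2}}\to\mathcal{M}$ induces a $\mathrm{Spin}(7)$-structure by extension of the structure group, $P_{\mathrm{Spin}(7)} := P_{G_{2}}\times_{G_{2}}\mathrm{Spin}(7)$.

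The content-bearing implication is (1) $\Rightarrow$ (3). Given a $\mathrm{Spin}(7)$-structure, I would form the real spin bundle $S^{\mathbb{R}}_{7}=P_{\mathrm{Spin}(7)}\times_{\mathrm{Spin}(7)}\Delta^{\mathbb{R}}_{7}$, a real rank-$8$ vector bundle over the 7-manifold $\mathcal{M}$, and construct a nowhere-vanishing section by obstruction theory applied to its unit sphere bundle, whose typical fiber is $S^{7}$. Since $\pi_{k}(S^{7})=0$ for $0<k<7$, the intermediate obstructions vanish, and the top obstruction lies in $H^{8}(\mathcal{M};\pi_{7}(S^{7}))=H^{8}(\mathcal{M};\mathbb{Z})$, which vanishes for dimensional reasons. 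The resulting nowhere-vanishing spinor $\eta\in\Gamma(S^{\mathbb{R}}_{7})$ then cuts out a principal $G_{2}$-subbundle of $P_{\mathrm{Spin}(7)}$, namely the set of frames in which $\eta$ takes a prescribed unit value.

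The main obstacle, apart from the bookkeeping above, is the purely algebraic fact that the isotropy subgroup of a nonzero element of $\Delta^{\mathbb{R}}_{7}$ under $\mathrm{Spin}(7)$ is precisely $G_{2}$. This is classical: $\mathrm{Spin}(7)$ acts transitively on the unit sphere $S^{7}\subset\Delta^{\mathbb{R}}_{7}$, and the dimension count $\dim\mathrm{Spin}(7)-\dim G_{2}=21-14=7=\dim S^{7}$, together with the connectedness of the stabilizer, identifies the latter as $G_{2}$. This identification is also the cornerstone of the spinorial description of $G_{2}$-structures, which is reviewed in section \ref{sec:G2in7}. Granted this, the three implications above assemble into the required cyclic equivalence.
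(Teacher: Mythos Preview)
Your proof is correct and follows essentially the same route as the paper's. The paper also treats $(1)\Leftrightarrow(2)$ and $(3)\Rightarrow(1)$ as immediate, and for $(1)\Rightarrow(3)$ observes that the rank-$8$ real spin bundle over a $7$-manifold admits a nowhere-vanishing section, whose stabilizer is $G_2$; your obstruction-theory argument simply makes explicit why such a section exists.
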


\begin{proof}
The equivalence $1\Leftrightarrow 2$ and the implication $3\Rightarrow 1$  are obvious. We have to show therefore that the existence of a topological $\mathrm{Spin}(7)$-structure implies the existence of a topological $G_{2}\subset \mathrm{Spin}(7)$ structure. Let $S^{\mathbb{R}}$ denote the real spin bundle associated to the $\mathrm{Spin}(7)$ structure. Since its real dimension is eight, and the real dimension of $\mathcal{M}$ is seven, there exists a global section $\psi\in\Gamma\left( S^{\mathbb{R}}\right)$ of unit length. Since the $G^{c}_{2}$ can be defined as the isotropy group of a given real spinor of $\mathrm{Spin}(7)$, $\psi$ can be used to define a topological $G^{c}_{2}$ structure on $\mathcal{M}$. 
\end{proof}

On a seven-dimensional spin manifold we have therefore always one globally defined no-where vanishing spinor. 
For compact seven dimensional manifolds, the implications of a spin structure are even stronger, namely

\begin{prop}
\label{prop:topologicalreductions7compact}
Let $\mathcal{M}$ be a compact seven-dimensional manifold. Then the following conditions are equivalent

\begin{enumerate}

\item $\mathcal{M}$ admits a topological $\mathrm{Spin}(7)$-structure.

\item The first and the second Stiefel-Whitney class of $\mathcal{M}$ vanish, that is $\omega_{1} = 0$ and $\omega_{2} = 0$.

\item $\mathcal{M}$ admits a topological $G_{2}$-structure.

\item $\mathcal{M}$ admits a topological $SU(3)$-structure.

\item $\mathcal{M}$ admits a topological $SU(2)$-structure.

\end{enumerate}
\end{prop}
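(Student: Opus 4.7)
The equivalences $(1)\Leftrightarrow(2)\Leftrightarrow(3)$ follow immediately from Proposition \ref{prop:topologicalreductions7noncompact}, which does not use compactness. The chain of inclusions $SU(2)\subset SU(3)\subset G_{2}$ furnishes the trivial implications $(5)\Rightarrow(4)\Rightarrow(3)$. What remains, and is where compactness plays a role, are the converses $(3)\Rightarrow(4)$ and $(4)\Rightarrow(5)$.

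For $(3)\Rightarrow(4)$ the plan is obstruction-theoretic. Since $G_{2}$ acts transitively on the unit sphere of its defining $7$-dimensional representation with stabilizer $SU(3)$, we have the homogeneous-space identification $G_{2}/SU(3)\cong S^{6}$. A reduction of a $G_{2}$-bundle to an $SU(3)$-bundle is therefore equivalent to a section of the associated $S^{6}$-bundle, and — using the Riemannian metric provided by the $G_{2}$-structure — this associated bundle is just the unit tangent bundle of $\mathcal{M}$. A section exists if and only if $\mathcal{M}$ admits a nowhere-vanishing vector field, and the Poincar\'e-Hopf theorem together with the vanishing of $\chi(\mathcal{M})$ on any closed orientable odd-dimensional manifold (a consequence of Poincar\'e duality) supplies one.

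For $(4)\Rightarrow(5)$ the same strategy applies one step further. An $SU(3)$-structure yields a splitting $T\mathcal{M}=L\oplus E$ with $L$ a trivial real line bundle and $E$ a rank-$6$ bundle with $SU(3)$-reduction. Since $SU(3)/SU(2)\cong S^{5}$, an $SU(2)$-reduction of $E$ amounts to a nowhere-vanishing section of $E$. The primary obstruction is the Euler class $e(E)=c_{3}(E)\in H^{6}(\mathcal{M};\mathbb{Z})$, while the secondary obstruction sits in $H^{7}(\mathcal{M};\pi_{6}(S^{5}))\cong H^{7}(\mathcal{M};\mathbb{Z}/2)$. The spin condition $w_{1}=w_{2}=0$ combined with Wu's formulae on a $7$-manifold forces all Stiefel-Whitney classes $w_{i}(T\mathcal{M})$ to vanish (one first derives $w_{3}=\mathrm{Sq}^{1}w_{2}=0$, and then Wu propagates this to $w_{4},\ldots,w_{7}$); in particular $w_{6}(E)=w_{6}(T\mathcal{M})=0$ so that $e(E)\equiv 0\pmod{2}$, and $w_{7}(T\mathcal{M})=0$ annihilates the secondary obstruction.

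The genuine obstacle I foresee is upgrading $e(E)\equiv 0\pmod{2}$ to integral vanishing $e(E)=0\in H^{6}(\mathcal{M};\mathbb{Z})$, since the mod-$2$ argument leaves a potential $2$-torsion residue. This is where real work is needed: one would like to match $c_{3}(E)$ against integral characteristic classes of $T\mathcal{M}$ through the splitting $T\mathcal{M}=L\oplus E$, exploit the dimensional vanishing of higher Pontryagin classes on $\mathcal{M}^{7}$ (the classes $p_{2}$ and $p_{3}$ live in degrees exceeding $7$), and use Poincar\'e duality in degree six — or else invoke a span-type theorem for compact spin $7$-manifolds in the spirit of E.~Thomas. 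A more spinor-theoretic alternative, well suited to the physics setting, is to realise the successive $G_{2},\,SU(3),\,SU(2)$ reductions as the common stabilisers of one, two, three linearly independent sections of the rank-$8$ real spin bundle, whose existence on a compact spin $7$-manifold with vanishing Stiefel-Whitney classes is again controlled by the obstructions shown to vanish above.
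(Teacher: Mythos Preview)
The paper does not give a self-contained argument for this proposition: its entire proof is the sentence ``See proposition 3.2 in \cite{nearlyparallel}'' (Friedrich--Kath--Moroianu--Semmelmann). Your write-up therefore goes further than the paper does, and your overall strategy---reduce along the tower $SU(2)\subset SU(3)\subset G_{2}$ via the sphere fibrations $G_{2}/SU(3)\cong S^{6}$ and $SU(3)/SU(2)\cong S^{5}$ and analyse obstructions---is exactly the standard route used in that reference. Your $(3)\Rightarrow(4)$ step is complete and correct.

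That said, the gap you yourself flag in $(4)\Rightarrow(5)$ is genuine and is the heart of the matter. The Wu/Stiefel--Whitney calculation you sketch gives at best $c_{3}(E)\equiv 0\ (\mathrm{mod}\ 2)$; it does not force the integral class $c_{3}(E)\in H^{6}(\mathcal{M};\mathbb{Z})$ to vanish, and there is no dimensional or Pontryagin-class reason for it to do so (with $c_{1}(E)=0$ one only gets $p_{1}(E)=-2c_{2}(E)$ and $p_{2}(E)=c_{2}(E)^{2}=0$, neither of which constrains $c_{3}$). Two further points you should be aware of: (i) the secondary obstruction in $H^{7}(\mathcal{M};\mathbb{Z}/2)$ is only defined after a choice making the primary vanish, and it is not a priori equal to $w_{7}(T\mathcal{M})$, so ``$w_{7}=0$ kills it'' is not yet an argument; (ii) your assertion that Wu's formulae force \emph{all} $w_{i}(T\mathcal{M})$ to vanish from $w_{1}=w_{2}=0$ on a $7$-manifold is stronger than what the one-line Steenrod identity $w_{3}=Sq^{1}w_{2}$ gives, and would itself need justification.

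The way the cited reference (and your own ``spinor-theoretic alternative'') closes the gap is to work on the real spinor bundle rather than on $E$. A $G_{2}$-structure identifies the rank-$8$ spinor bundle with $\underline{\mathbb{R}}\oplus T\mathcal{M}$, so the successive reductions to $SU(3)$ and $SU(2)$ become the question of finding two, respectively three, pointwise independent sections of $\underline{\mathbb{R}}\oplus T\mathcal{M}$---equivalently, a $2$-field of tangent vectors on $\mathcal{M}$. On a compact spin $7$-manifold the relevant primary and secondary obstructions to such a $2$-field can be identified with characteristic numbers that are shown to vanish; this is precisely the computation carried out in \cite{nearlyparallel} (and is in the spirit of E.~Thomas' span results that you mention). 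Absent that computation, your argument for $(4)\Rightarrow(5)$ remains a plausible outline rather than a proof.
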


\begin{proof}See proposition 3.2 in \cite{nearlyparallel}.
\end{proof} 

\noindent
Therefore, if $\mathcal{M}_{7}$ is orientable, compact and spin, it admits a topological $SU(2)$ structure, or equivalently three globally defined nowhere vanishing and nowhere parallel spinors. 

In particular, as shown in appendix \ref{sec:G2appendix}, there is a one to one correspondence between $G_{2}$ structures, positive three-forms\footnote{See appendix \ref{sec:G2appendix} for the definition of positive three-form.} and $Spin(7)$ real spinors  on a seven-dimensional manifold. Therefore, \emph{for every} $G_{2}$\emph{-structure} there is automatically a positive three-form and a $Spin(7)$ real spinor globally defined on $\mathcal{M}$. 

For compactifications preserving off-shell supersymmetry, ${\cal M}_7$ is equipped with two spinors $\eta_{1}\, , \eta_{2} \in \Gamma\left(S^{\mathbb{R}}_{7}\right)$, which may allow for a further reduction of the structure group $G_{2}$ of $\mathcal{M}_{7}$. Since every orientable and spin $G_{2}$ manifold already has a globally defined section $\psi\in\Gamma\left(S^{\mathbb{R}}_{7}\right)$ of the spin bundle, we might have three globally defined spinors on $\mathcal{M}_{7}$, namely $\eta_{1}\, , \eta_{2}$ and $\psi$, which globally give rise to three different possibilities

\begin{itemize}

\item The three spinors are nowhere vanishing and nowhere linearly dependent, which implies that they define a topological $SU(2)$-structure on $\mathcal{M}_{7}$.

\item  Only two spinors are nowhere vanishing and linearly independent, which implies that they define a topological $SU(3)$-structure on $\mathcal{M}_{7}$.

\item  The three spinors are linearly dependent everywhere, which implies that they define a topological $G_{2}$-structure on $\mathcal{M}_{7}$, something that is always guaranteed due to proposition \ref{prop:topologicalreductions7noncompact}.

\end{itemize}

Of course, there exists the possibility that the spinors become linearly dependent only at some points in $\mathcal{M}_{7}$. In such situation there is no globally well-defined topological reduction of the frame bundle further than the $G_2$ one, but there is, in the case of two $\mathrm{Spin}(7)$ spinors, a well defined global reduction in the generalized bundle $\mathbb{E} = T\mathcal{M}_{7}\oplus T^{\ast}\mathcal{M}_{7}$ from $\mathbb{R}^{\ast}\times\mathrm{Spin}(7,7)$ to $G_{2}\times G_{2}$ \cite{2005math......2443W,2006CMaPh.265..275W}. 

In the case when the three spinors are nowhere vanishing (no matter what their respective inner product is), due to the isomorphism $S^{\mathbb{R}}_{7}\otimes S^{\mathbb{R}}_{7} \simeq \Lambda^{\bullet} T^{\ast}\mathcal{M}_{7}$ we can write

\begin{equation}
\label{eq:etaforms}
\eta_{a}\otimes\eta_{a} = 1 + \phi_{a} + \ast\phi_{a} + \mathcal{V}_{g_{a}}\, , \qquad a=1,2\, ,
\end{equation}

\begin{equation}
\label{eq:psiforms}
\psi\otimes\psi = 1 + \phi_{\psi} + \ast\phi_{\psi} + \mathcal{V}_{g_{\psi}}\, ,
\end{equation}

\noindent
where $\phi_{a}$ and $\phi_\psi$ are the corresponding positive forms associated to $\eta_a$ and $\psi$ and $\mathcal{V}_{g_{a}}\, , \mathcal{V}_{g_{\psi}}$ are the volume forms associated to the metric defined by the corresponding positive three-form. In terms of spinor bilinears we have

\begin{equation} \label{eq:3forma}
\phi_{a} = \ii \eta^{T}\gamma_{(3)}\eta\, , \quad a=1,2\, ,\qquad \phi_{\psi} = \ii \psi^{T}\gamma_{(3)}\psi\, ,
\end{equation}
where $\gamma_{(3)}$ is the anti-symmetrized product of three gamma matrices.
\noindent
In fact, using $\eta_{a}$ and $\psi$ we can construct many more forms on $\mathcal{M}_{7}$ than those appearing in equations (\ref{eq:etaforms}) and (\ref{eq:psiforms}). They can be used to alternatively define the corresponding reductions of the structure group of $\mathcal{M}_{7}$ \cite{Kaste:2003zd}. 

Backgrounds preserving ${\mathcal N}=1$ supersymmetry (on-shell), should be invariant under a supersymmetry transformation. In the case of bosonic backgrounds, the only non-trivial one is that of the gravitino transformation, given in equation (\ref{eq:susytransformation}). Supersymmetry requires equation (\ref{eq:susytransformation}) to vanish for any vector $v$, and therefore implies differential equations on the supersymmetry spinors $\eta_{1}$ and $\eta_{2}$.
In the case where $\eta_{1} = \eta_{2} = \psi$ we can only construct a single positive three-form on $\mathcal{M}_{7}$, $\phi_3$. The holonomy of $\mathcal{M}_{7}$ will be $G_{2}$ if and only if

\begin{equation}
\nabla\phi_{3} = 0\, ,
\end{equation}

\noindent
which, from equation (\ref{eq:susytransformation}), is the case for supersymmetric backgrounds in the absence of fluxes (that is, when $\mathsf{G} = 0$). Equivalently, see appendix \ref{sec:G2appendix}, $\mathcal{M}_{7}$ will have $G_{2}$ holonomy if and only if

\begin{equation}
d\phi_{3} = 0\, , \qquad d\ast\phi_{3} = 0\, .
\end{equation}

\noindent
In the presence of fluxes, the situation is more subtle, namely there should be a connection with $G_2$ holonomy, but it is in general not the Levi-Civita one. In this case the manifold does not have $G_2$ holonomy, but still it does have a  $G_2$-structure \cite{Gauntlett:2003cy,Kaste:2003zd,House:2004pm}. The exterior derivatives $d \phi_3$, $d *\phi_3$ can be decomposed into $G_2$ representations, defining the torsion classes. In a supersymmetric compactification, the different torsion classes are related to the $G_2$ representations of the four-form flux (see \cite{Gauntlett:2003cy,Kaste:2003zd,House:2004pm} for details). 


\subsection{The underlying eight-dimensional manifold ${\mathcal M}_8$}


The manifold $\mathcal{M}_{7}$ is of course seven-dimensional. However, as it will be explained in section \ref{sec:gengeometric}, the main purpose of this work is to introduce an eight-dimensional manifold $\mathcal{M}_{8}$, whose existence is motivated by the Exceptional Generalized geometric formulation of $\mathcal{N}=1$ four-dimensional M-theory compactifications. We will be see that $\mathcal{M}_{8}$ can be related to $\mathcal{M}_{7}$ in a very natural way.   As we did in section \ref{sec:sevendimmanifolds} for seven-dimensional manifolds, here we will present some of the properties of eight-dimensional manifolds admitting nowhere vanishing spinors. 

The frame bundle of an orientable, spin, eight-dimensional manifold admits a reduction to $\mathrm{Spin}(8)$. In addition, if the manifold is equipped with an \emph{admissible} four-form $\Omega\in\Gamma\left(\mathcal{M}_{8}\right)$ (see appendix \ref{sec:Spin7appendix} for more details and further references), then the structure group is reduced to $\mathrm{Spin}(7)$. The manifold $\mathcal{M}_{8}$ has $\mathrm{Spin}(7)$ holonomy if and only if

\begin{equation}
d\Omega = 0\, .
\end{equation}

\noindent
The failure of $\Omega$ to be closed is a measure of the deviation of $\mathcal{M}_{8}$ to have $\mathrm{Spin}(7)$ holonomy. The $\mathrm{Spin}(7)$ structure can be alternatively defined by a globally defined Majorana-Weyl spinor, which implies $\mathrm{Spin}(7)$ holonomy if and only if it is covariantly constant respect to the Levi-Civita spinor connection. 

If there are two globally defined Majorana-Weyl spinors, then the structure group can be further reduced, depending on the relative properties of the spinors. If the two spinors have opposite chirality, the structure group of ${\cal M}_8$ is reduced to $G_2$, but the corresponding Riemannian metric is not irreducible. If the spinors have the same chirality and are never parallel, then the structure group is reduced to $SU(4)$. Notice that in the case of $SU(4)$ holonomy, i.e. where the spinors are covariantly constant, then the manifold os a Calabi-Yau four-fold. In the general case, where the two spinors might become parallel at some points, there is no global topological reduction further than the $Spin(7)$ one. However, as it happened in seven dimensions, there is a well-defined global reduction from the point of view of the generalized tangent space $TX\oplus T^{\ast}X$, from $\mathbb{R}^{\ast}\times Spin(8,8)$ to $Spin(7)\times Spin(7)$.


\subsection{Motivation for $\mathcal{M}_{8}$: the generalized geometric formulation}
\label{sec:gengeometric}


\noindent
A geometric formulation of the bosonic sector of eleven-dimensional Supergravity compactified down to four-dimensions was developed in \cite{Hull:2007zu,Pacheco:2008ps}, extending the idea underlying a similar formulation for the NS sector of Type-II Supergravity, based on Generalized Complex Geometry. In the latter, the diffeomorphisms and the gauge transformations of the B-field, generated respectively by vectors and one-forms, are combined into sections of the generalized tangent space, which is locally the sum of the tangent plus the cotangent space. In the former, diffeomorphisms are combined with two-form gauge transformations of the three-form gauge field $C$. In order to complete a closed orbit under the U-duality group $E_{7(7)}$\footnote{$E_{7(7)}$ is the maximally non-compact real form of the complex exceptional Lie group $E_{7}$.}, one needs however to include also the gauge transformations of the dual six-form field $\tilde C$, given by five-forms, as well as gauge transformations for the dual graviton, parameterized by the tensor product of one-forms times seven-forms \cite{Pacheco:2008ps,Coimbra:2011ky}. The total number of degrees of freedom is 56, corresponding to the fundamental representation of $E_{7(7)}$. The exceptional generalized tangent space $E$ is locally given by\footnote{To abbreviate we use $T_7$ to denote $T{\cal M}_7$.}  

\begin{equation}\label{decomposition_gl7_fund}\begin{aligned}
 E &=   (\Lambda^7 T_7)^{1/2} \otimes \left(T_7 \oplus \Lambda^2 T_7^* \oplus \Lambda^5 T_7^* \oplus (T_7^* \otimes \Lambda^7 T_7^*) \right) \ , \\
 {\bf 56} &=  {\bf 7} \oplus {\bf 21} \oplus {\bf 21} \oplus {\bf 7}\, ,
\end{aligned}\end{equation}

\noindent
where the overall volume factor gives the proper embedding in $E_{7(7)}$. This representation is also the one that combines the charges of the theory, namely momentum, M2 and M5-brane charge and Kaluza-Klein monopole charge. Note that the 
${\bf 21}$ and ${\bf 7}$ representations can be combined into the ${\bf 28}$ of $SL(8,\mathbb{R})$, corresponding to two-forms or two-vectors in eight dimensions. To be more precise, defining 

\begin{equation} \label{decomposition_gl7_V}\begin{aligned}
T_8^* &= (\Lambda^7 T_7^*)^{-1/4} \otimes (T_7^* \oplus \Lambda^7 T_7^* ) \, ,  \\
{\bf 8} &=  {\bf 7} \oplus {\bf 1}\, , 
\end{aligned}\end{equation} 

\noindent
we have
\begin{equation}\label{decomposition_sl8_fund} \begin{aligned}
 E =&\Lambda^2 T_8 \oplus \Lambda^2 T_8^* \, , \\
  {\bf 56} = & {\bf 28} \oplus {\bf 28'}  \, .
  \end{aligned}
\end{equation}

\noindent 
At each point over the seven-dimensional manifold $\mathcal{M}_{7}$, the fibre of the extended  vector bundle can be naturally decomposed in terms of the eight dimensional vector space where $Sl(8,\mathbb{R})$ acts in the fundamental representation. However, this eight dimensional vector space cannot be naturally identified, at each point $p\in\mathcal{M}_{7}$, with the tangent vector space of $\mathcal{M}_{7}$, because of the obvious dimensional mismatch. Since $Sl(8,\mathbb{R})$ is the structure group of an eight-dimensional orientable manifold, we consider that is natural to propose an eight-dimensional orientable manifold $\mathcal{M}_{8}$ such that, at each point $p\in\mathcal{M}_{8}$, carries a decomposition of the $E_{7(7)}$ appropriate representation in terms of the fundamental representation of $Sl(8,\mathbb{R})$ acting on the eight-dimensional tangent space $T_{p}\mathcal{M}_{8}$. This way, the tangent space of $\mathcal{M}_{8}$ can be \emph{connected} to the rank eight  vector bundle, with structure group $Sl(8,\mathbb{R})$, proposed in \cite{Grana:2012zn}. We will elaborate later about this connection. 

In order for $\mathcal{M}_{8}$ to carry at each point a decomposition of the appropriate $E_{7(7)}$ representation, it must be equipped, at every point, with the tensors appearing in the given decomposition. Therefore, $\mathcal{M}_{8}$ must be equipped with globally defined tensors, determined by the decomposition of the corresponding $E_{7(7)}$ representation in terms of $Sl(8,\mathbb{R})$ representations. There are three relevant $E_{7(7)}$ representations appearing in the Exceptional Generalized formulation of M-theory compactified to four dimensions, and thus have different possibilities depending on which $E_{7(7)}$ representation we consider, to wit

\begin{itemize}

\item The fundamental, symplectic, representation {\bf 56} of $E_{7(7)}$. The corresponding decomposition, in terms of $Sl(8,\mathbb{R})$ representations, is given by (\ref{decomposition_sl8_fund}), namely

\begin{equation}
E_{{\bf 56}} =  \Lambda^2 V \oplus \Lambda^2 V^{\ast}\, ,
\end{equation}

\noindent
where $V$ is an eight-dimensional real vector space. Therefore, if we wanted $\mathcal{M}_{8}$ to carry a representation of $E_{{\bf 56}}$ in terms of $Sl(8,\mathbb{R})$ representations, it should be equipped with a bivector $\beta$ field and a two-form $\omega$

\begin{equation}
\omega\in\Gamma\left(\Lambda^2 T^{\ast}\mathcal{M}_{8}\right)\, , \qquad \beta\in\Gamma\left(\Lambda^2 T\mathcal{M}_{8}\right)\, .
\end{equation}

\item The adjoint representation {\bf 133} of $E_{7(7)}$. The corresponding decomposition, in terms of $Sl(8,\mathbb{R})$ representations, is given by

\begin{equation}
E_{{\bf 133}} = \left( V \otimes V^{\ast}\right)_{0} \oplus\Lambda^4 V^{\ast}\, ,
\end{equation}

\noindent
where $V$ is an eight-dimensional real vector space and the subindex 0 denotes traceless. Therefore, if we wanted $\mathcal{M}_{8}$ to carry a representation of $E_{{\bf 133}}$ in terms of $Sl(8,\mathbb{R})$ representations, it should be equipped with the following sections

\begin{equation}
\mu\in\Gamma\left(T\mathcal{M}_{8}\otimes T^{\ast}\mathcal{M}_{8}\right)_0\, , \qquad \Omega\in\Gamma\left(\Lambda^4 T^{\ast}\mathcal{M}_{8}\right)\, .
\end{equation}

\item The {\bf 912} representation of $E_{7(7)}$. The corresponding decomposition, in terms of $Sl(8,\mathbb{R})$ representations, is given by

\begin{equation} \label{rep912}
E_{{\bf 912}} = S^2 V\oplus \left(\Lambda^{3}V\otimes V^{\ast}\right)_{0}\oplus S^2 V^{\ast}\oplus \left(\Lambda^{3}V^{\ast}\otimes V\right)_{0}\, ,
\end{equation}

\noindent
where $V$ is an eight-dimensional real vector space and $S^2$ denotes symmetric two-tensors. Therefore, if we wanted $\mathcal{M}_{8}$ to carry a representation of $E_{{\bf 912}}$ in terms of $Sl(8,\mathbb{R})$ representations, it should be equipped with the following sections

\begin{eqnarray}
g\in\Gamma\left( S^2 T\mathcal{M}_{8}\right)\, , \qquad \phi\in\Gamma\left(\Lambda^3 T\mathcal{M}_{8}\otimes T^{\ast}\mathcal{M}_{8}\right)_{0}\, , \nonumber\\
\tilde{g}\in\Gamma\left( S^2 T^{\ast}\mathcal{M}_{8}\right)\, , \qquad \tilde{\phi}\in\Gamma\left(\Lambda^3 T^{\ast}\mathcal{M}_{8}\otimes T\mathcal{M}_{8}\right)_{0}\, .
\end{eqnarray}

\end{itemize}

\noindent
In this letter we are going to consider the ${\bf 912}$ representation, since it is the relevant one to describe the moduli space of $\mathcal{N}=1$ supersymmetric M-theory compactifications. In doing so, we will be able to translate the information about the moduli space contained in the generalized $E_{7(7)}$-bundle to the tangent bundle of the eight-dimensional manifold $\mathcal{M}_{8}$, therefore giving an \emph{intrinsic} formulation in terms of tensor bundles instead of \emph{extrinsic} bundles, which generically are more difficult to handel: using the $E_{7(7)}$-bundle one can characterize the moduli space using its space of sections together with the appropriate differential conditions and equivalence relation, whereas using $\mathcal{M}_{8}$ one can characterize the moduli space through the space of sections of several of its tensor bundles, together again with the appropriate differential conditions and equivalence relation. The space of sections needed to characterize the moduli space using $\mathcal{M}_{8}$ is what we will define later to be an \emph{intermediate structure}, see definition \eqref{def:intermediatemanifold}. Therefore, we propose that the study of the moduli space of $\mathcal{N}=1$ M-theory compactifications to four-dimensional Minkowski space-time can be rephrased as the study of the moduli space of intermediate structures on the corresponding eight-dimensional manifold $\mathcal{M}_{8}$. The appearance of the {\bf 912} representation can be justified as follows. 

The presence of the supersymmetry spinors $\eta_{1}, \eta_{2}\in\Gamma\left( S^{\mathbb{R}}_{7}\right)$ in $\mathcal{M}_{7}$ implies a global reduction of an appropriate vector bundle which is an extension of the tangent space. In order to perform the reduction, we have to identify the complex $Spin(7)$-spinor $\eta = \eta_{1} + i\eta_{2}$ as a complex Weyl $Spin(8)$ spinor. As explained in the previous section, with respect to $Spin(7)$, the real and imaginary parts of the spinor define each a $G_2$ structure and together they define a reduction of the structure group of the generalized bundle $\mathbb{E} = T\mathcal{M}_{7}\oplus T^{\ast}\mathcal{M}_{7}$ to $G_{2}\times G_{2}$. In $Sl(8,\mathbb{R})$, the real and imaginary parts of the complex $Spin(8)$ spinor $\eta$ define a pair of $Spin(7)$ structures. In $E_{7(7)}$ the complex spinor transforms in the fundamental of $SU(8)$, and defines a single $SU(7)$ structure, since within $E_{7(7)}$ the complex spinor transforms in the ${\bf 8}$ of $SU(8)$, and is stabilized by an $SU(7)$ subgroup.  This $SU(7)$ structure can be equivalently defined by a nowhere vanishing section of the ${\bf 912}$ representation of $E_{7(7)}$, which decomposes into the $Sl(8,\mathbb{R})$ representations as in (\ref{rep912}).  The following object, constructed from the internal spinor $\eta$, is indeed stabilized by $SU(7)\subset SU(8) \subset E_{7(7)}$ \cite{Pacheco:2008ps}

\begin{equation}\label{embedding_spinor_912}
 \psi =  (2 \eta \otimes \eta, 0, 0,0) \, .  
\end{equation}

\noindent
Using an $Sl(8,\mathbb{R})$ metric $g_8$, this object has the following $Sl(8,\mathbb{R})$ representations \cite{Grana:2012zn}

\begin{equation} \label{phi912}
\psi= ( {\rm Re} c \, g_8^{-1}, g_8 \cdot ({\rm vol}^{-1}_8 \llcorner  {\rm Re} \phi_4) , {\rm Im} c \, g_8 , g_8^{-1} \cdot  {\rm Im} \phi_4)\, ,
\end{equation}

\noindent
where $c= \eta^T \eta$ and the four-form $\phi_4$ is

\begin{equation}
\label{eq:phi4}
\phi_{4} = \eta^{T} \gamma_{(4)} \eta \, .
\end{equation}

\noindent
In terms of the 7+1 split in (\ref{decomposition_gl7_V}), this is

\begin{equation}
\label{eq:Omegabundledecomposition}
\phi_{4} =\rho_8 \wedge \phi_3  + \ast_7 \phi_3 \, ,
\end{equation}

\noindent
where $\rho_8$ is a one-form along the ${\bf 1}$ in (\ref{decomposition_gl7_V}), and $\phi_3$ is a complex three-form which reduces to a real three-form in the $G_2$-structure case (i.e.\ when $\eta$ is Majorana), given by (\ref{eq:3forma}).

\noindent
In sections \ref{sec:8manifold} and on we will analyze the geometric properties of $\mathcal{M}_{8}$ and connect the geometric structures defined on it to the supersymmetry spinors $\eta_{1}$ and $\eta_{2}$, globally defined on $\mathcal{M}_{7}$ when the compactification to four-dimensions preserves some supersymmetry off-shell. Among other things, the goal of this letter is to study the possible role of $\mathcal{M}_{8}$ in relation to the internal manifolds appearing in $\mathcal{N}=1$ supersymmetric M-theory compactifications and, since $\mathcal{M}_{8}$ is eight dimensional, also study if $\mathcal{M}_{8}$ is an admissible internal space for F-theory compactifications. For the latter, we give in section \ref{sec:Ftheoryreview} a very brief review of the type of eight-dimensional manifolds that appear as internal spaces in F-theory. 


\subsection{Connection to F-theory}
\label{sec:Ftheoryreview}


Being $\mathcal{M}_{8}$ eight-dimensional, the natural question is that if it is an admissible internal space for F-theory \cite{Vafa:1996xn} compactifications down to four dimensions (see \cite{Morrison:1996na,Morrison:1996pp,Denef:2008wq} for more details and further references). F-theory compactifications to four-dimensions are defined through M-theory compactifications to three dimensions on an eight-dimensional manifold $X$. In order to have $\mathcal{N}=1$ supersymmetry in four dimensions, the compactification theory in three-dimensions must have $\mathcal{N}=2$ supersymmetry. This imposes a constraint on the eight-dimensional internal space $X$, which must be a Calabi-Yau four-fold\footnote{We refer to flux-less compactifications. If we include a non-vanishing $\mathsf{G}_{4}$ flux, then $X$ is a \emph{conformal} Calabi-Yau four-fold, which does not have $SU(4)$-holonomy anymore.}, that is, a $SU(4)$-holonomy eight-dimensional manifold. In order to be able to \emph{appropriately lift} the three-dimensional effective $\mathcal{N}=2$ Supergravity theory to four dimensions, the internal space must be in addition elliptically fibered, that is, it must be of the form

\begin{equation}
X\xrightarrow{\pi} \mathcal{B}\, ,
\end{equation} 

\noindent
where $\mathcal{B}$ is the base space, which should be a three-complex-dimensional K\"ahler manifold, and the fibre $\pi^{-1}\left(b\right)$ at every $b\in\mathcal{B}$ is an elliptic curve, possibly singular. Therefore, if we want the proposed $\mathcal{M}_{8}$ to be an admissible internal manifold for supersymmetric compactifications of F-theory, it must be Calabi-Yau and elliptically fibered. As mentioned before, for the structure group of an eight-dimensional manifold $X$ to be reduced to $SU(4)$, it must be equipped with two Majorana-Weyl spinors of the same chirality and linearly independent at every point $p\in X$, which are covariantly constant if and only if $X$ has $SU(4)$ holonomy. If $X$ is equipped with two Majorana-Weyl spinors of the same chirality but which are parallel at some points then there is no global reduction of the structure group on the tangent space to $SU(4)$, but there is a global reduction in $TX\oplus T^{\ast}X$ from $\mathbb{R}^{\ast}\times Spin(8,8)$ to $Spin(7)\times Spin(7)$. To the best of our knowledge, F-theory compactified in manifolds with $Spin(7)\times Spin(7)$ structure structure has not been fully analyzed yet. Progress in this direction can be found in reference \cite{Tsimpis:2005kj}.

If we drop the requirement of $\mathcal{N}=2$ supersymmetry in three dimensions, and demand only minimal supersymmetry in three dimensions, then $X$ is not forced to be a Calabi-Yau four-fold but a Spin(7)-holonomy manifold, and we should not expect in principle a supersymmetric theory in for dimensions. However, and remarkably enough,  \cite{Bonetti:2013fma,Bonetti:2013nka} claim that F-theory compactified on certain $\mathrm{Spin}(7)$-holonomy orbifold (constructed by orbifolding a Calabi-Yau four-fold), where one dimension has the topology of an interval, give in the limit of infinite length of this interval, a supersymmetric $\mathcal{N}=1$ theory in four dimensions. Therefore we will consider in this letter that an elliptically fibered, eight-dimensional, $Spin(7)$ manifold is an admissible internal space for F-theory compactifications. In particular, we will find that the eight-dimensional manifold $\mathcal{M}_{8}$ can be elliptically fibered and it is equipped, under some mild assumptions, with a $Spin(7)$ structure on the frame bundle, or more generally, a $Spin(7)\times Spin(7)$ structure on the generalized bundle. Hence, $\mathcal{M}_{8}$ arises as  a plausible compactification space for F-theory, which in addition can be related in a precise way to $G_{2}$-structure seven-dimensional manifolds.  We will leave the analysis of the holonomy of $\mathcal{M}_{8}$ and its preferred submanifolds to a forthcoming project \cite{holonomyMCM}.


\section{Linear algebra of positive and admissible forms}
\label{sec:G2in7}


The first part of this section is devoted to introducing some linear algebra results regarding the definition of  topological $G_{2}$-structures that will be useful later on. 
More details can be found in appendix \ref{sec:G2appendix}. The second 
part of this section studies the relation, at the linear algebra level, between the differential forms associated to topological $G_{2}$-structures and topological $Spin(7)$-structures.

\begin{definition}
\label{def:nondegsymp}
Let $V$ be an $n$-dimensional vector space and let $\omega\in\Lambda^{q}V^{\ast}$ be a $q$-form.  $\omega$ is said to be \emph{non-degenerate} if the following holds: 

\begin{equation}
\forall\,\, v \in V,\,\, i_{v} \omega =0 \Rightarrow v =0\, .
\end{equation}

\end{definition}

\noindent
In other words, a non-degenerate $q$-form provides an injective map from $V$ to the vector space of $(q-1)$-forms $\Lambda^{q-1}V^{\ast}$. Let $\omega\in\Omega^{q}\left(\mathcal{M}\right)$ be a $q$-form defined on a differentiable manifold $\mathcal{M}$. Then $\omega$ is said to be non-degenerate if it is non-degenerate at every point $p\in\mathcal{M}$. The previous notion of non-degeneracy is extensively used in the context of multisymplectic geometry \cite{JAZ:4974756}. However, in references \cite{2000math.....10054H,2001math......7101H}, a different notion of non-degeneracy, called   \emph{stability}, was introduced by Hitchin. The definition goes as follows.

\begin{definition}
\label{def:stabelform}
Let $V$ be an $n$-dimensional vector space and let $\omega\in\Lambda^{q}V^{\ast}$ be a $q$-form.   $\omega$ is said to be \emph{stable} if it lies in an open orbit of the action of the group $GL\left(V\right)$ on $\Lambda^{q}V^{\ast}$.
\end{definition}

\noindent
Let $\omega\in\Omega^{q}\left(\mathcal{M}\right)$ be a $q$-form defined on a differentiable manifold $\mathcal{M}$. Then $\omega$ is said to be stable if it is stable at every point $p\in\mathcal{M}$. Although the notion of stability can be defined for any form, it can be shown that stable forms only occur in certain dimensions and for certain $q$-forms. In particular, aside from cases $q=1,2$, stability can only occur for three-forms (and their Hodge-duals), in six, seven and eight dimensions. A manifold equipped with a stable form $\omega$ has its structure group reduced to the stabilizer group of $\omega$. It is clear that the notion of stability  in general is not equivalent to the notion of non-degeneracy, as defined in definition \ref{def:nondegsymp}. For instance, in even dimensions, the notion of non-degenerate two-form is equivalent to the notion of  stable two-form, since the general linear group has only one open orbit when acting on $\Lambda^{2} V^{\ast}$, and this orbit consists exactly of the non-degenerate two forms. However, in odd dimensions a two-form can never be non-degenerate yet it can be stable. 
 
\subsection{Positive forms on seven-dimensional vector spaces} 
 
Since we are interested in seven-dimensional manifolds with $G_{2}$ structure, we will focus now on the case of three-forms in seven dimensions. The reason is explained in appendix \ref{sec:G2appendix}: a seven-dimensional manifold has the structure group of its frame bundle reduced from $GL\left(7,\mathbb{R}\right)$ to the compact real form $G_{2}$ of the complex exceptional Lie group $G_{2}^{\mathbb{C}}$ if and only if it can be equipped with a globally defined,   \emph{positive} three-form. A positive form is a particular case of stable three-form. The three-form $\phi_0$ on $\mathbb{R}^{7}$ we define now is positive.

\begin{definition}\label{def:phi0}
Let $\left(x_{1},\hdots,x_{7}\right)$ be coordinates on $\mathbb{R}^{7}$. We define a three-form $\phi_{0}$ on $\mathbb{R}^{7}$ by 
\begin{equation}
\label{eq:phi0}
\phi_{0} = dx_{123} + dx_{145} + dx_{167} + dx_{246} - dx_{257} - dx_{347} - dx_{356}  \, ,
\end{equation}

\noindent
where $dx_{ij\hdots l}$ stands for  $dx_{i}\wedge dx_{j}\wedge \cdots\wedge dx_{l}$. The subgroup of $GL\left(7,\mathbb{R}\right)$ that preserves $\phi_{0}$ is the compact real form $G_{2}$ of the exceptional complex Lie group $G_{2}^{\mathbb{C}}$, which also fixes the euclidean metric $g_{0} = dx_{1}^{2} + \cdots + dx_{7}^2$, the orientation on $\mathbb{R}^{7}$ (that is, $G_2\subset SO(7)$). Further $G_2$ fixes the four-form $\tilde \phi_{0}$,
\begin{equation}
\tilde{\phi}_{0} = dx_{4567} + dx_{2367} + dx_{2345} + dx_{1357} - dx_{1346} - dx_{1256} - dx_{1247}  \, .
\end{equation}

\noindent
Notice that $\tilde{\phi}_{0} = \ast \phi_{0}$ where $\ast$ is the Hodge-dual operator associated to $g_{0}$. 

\end{definition} 

\noindent
Notice that every three-form $\phi\in \Lambda^{3} V^{\ast}$ defines a symmetric bilinear form $\mathfrak{B}: V\times V\to \mathbb{R}$ as follows:
\begin{eqnarray}
\label{eq:Bilinear}
\mathcal{V}\cdot \mathfrak{B}\left( v, w\right) = \frac{1}{3!}\,\iota_{v}\phi\wedge\iota_{w}\phi\wedge\phi\, ,\qquad v, w \in V\, ,
\end{eqnarray}

\noindent
where $\mathcal{V}$ is the seven-dimensional volume form in $V$. If $\phi$ is a no-where vanishing non-degenerate three-form, then (\ref{eq:Bilinear}) is a non-degenerate bilinear form, in the sense that

\begin{equation}
\mathfrak{B}\left( v, v\right)\neq 0\, , \qquad \forall\,\, v\in V-\{0\}\, .
\end{equation}

\noindent
In addition, if $\phi$ is positive then $\mathfrak{B}: V\times V\to \mathbb{R}$ is a  positive definite symmetric bilinear form on $V$. If we take $\phi = \phi_{0}$ we obtain

\begin{equation}\label{eq:innpr}
\mathcal{V}\cdot \mathfrak{B}_{0}\left( v,w\right) = \frac{1}{3!}\,\iota_{v}\phi_{0}\wedge\iota_{w}\phi_{0}\wedge\phi_{0} = g_{0}\left( v,w\right)\mathcal{V}\, ,\qquad \forall\,\, v, w \in V\, ,
\end{equation}

\noindent
and therefore $g_{0}$ is the metric (i.e., inner product) naturally induced on $V$ by the three form $\phi_{0}$.
 
\begin{definition}
Let $\left(x_{1},\hdots,x_{7}\right)$ be coordinates on $\mathbb{R}^{7}$. We define a three-form $\phi_{1}$ on $\mathbb{R}^{7}$ by 
\begin{equation}
\label{eq:phi1}
\phi_{1} =  -dx_{127} + dx_{145} - dx_{135} + dx_{146} + dx_{236} - dx_{245} - dx_{347} + dx_{567} \, ,
\end{equation}

\noindent
where $dx_{ij\hdots l}$ stands for $dx_{i}\wedge dx_{j}\wedge \cdots\wedge dx_{l}$. The subgroup of $GL\left(7,\mathbb{R}\right)$ that preserves $\phi_{1}$ is the non-compact real form $G^{\ast}_{2}\subset SO(4,3)$ of the exceptional complex Lie group $G^{\mathbb{C}}_{2}$, which also fixes the indefinite metric $g_{1} = dx_{1}^{2} + dx_{2}^{2} + dx_{3}^{2}+dx_{4}^{2}-dx_{5}^{2}-dx_{6}^{2} -dx_{7}^2$, the orientation on $\mathbb{R}^{7}$ and the four-form $\ast \phi_{1}$,
\begin{equation}
\tilde{\phi}_{1} = -dx_{3456} + dx_{2367} - dx_{2467} + dx_{2357} + dx_{1457} - dx_{1367} - dx_{1256} + dx_{1234}\, .
\end{equation}

\noindent
Notice that $\tilde{\phi}_{1} = \ast \phi_{1}$ where $\ast$ is the Hodge-dual operator associated to $g_{1}$. 

\end{definition}   
  
\noindent
In this case we have 

\begin{equation}
\mathcal{V}\cdot\mathfrak{B}_{1}\left( v,w\right) = \frac{1}{3!}\iota_{v}\phi_{1}\wedge\iota_{w}\phi_{1}\wedge\phi_{1} = g_{1}\left( v,w\right)\mathcal{V}\, ,\qquad \forall\,\, v, w \in V\, ,
\end{equation}
and therefore $g_{1}$ is the metric naturally induced in $V$ by the three form $\phi_{1}$.

\bigskip
\noindent
We now define positive three-forms and display some of their properties.
\begin{definition}\label{def:positive}
Let $V$ be an oriented seven-dimensional vector space. A three-form $\phi\in\Lambda^{3} V^{\ast}$ is said to be \emph{positive} if there exists an 
 oriented\footnote{This means that $f$ preserves the orientations.} isomorphism $f:\mathbb{R}^{7}\to V$ such that $\phi_{0} = f^{\ast} \phi$. We denote by $\mathcal{P}_{V}$ the set of positive three-forms in $\phi\in\Lambda^{3} V^{\ast}$ .
\end{definition}

\noindent
A positive form induces an inner product on $V$, as a 
consequence of $G_2\subset SO(7)$.

\begin{prop}\label{prop:35}
Let $V$ be a seven-dimensional vector space. Then the set $\mathcal{P}_{V}$ of positive forms is an open subset of $\Lambda^{3} V^{\ast}$.
\end{prop}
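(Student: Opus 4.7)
The plan is to realize $\mathcal{P}_V$ as an orbit of a Lie group action and show that this orbit is open by a dimension count together with the submersion/rank theorem.

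First, I would fix an oriented linear isomorphism $f_0:\mathbb{R}^7\to V$ and set $\phi_V := (f_0^{-1})^\ast\phi_0\in\Lambda^3 V^\ast$. By Definition~\ref{def:positive}, a three-form $\phi\in\Lambda^3 V^\ast$ is positive iff $\phi = (f\circ f_0^{-1})^\ast \phi_V$ for some oriented $f$; equivalently, $\mathcal{P}_V$ is precisely the orbit of $\phi_V$ under the natural right action of $GL_+(V)$ on $\Lambda^3 V^\ast$, namely $g\cdot \alpha := (g^{-1})^\ast \alpha$. So openness of $\mathcal{P}_V$ becomes openness of a single group orbit.

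Next, I would study the orbit map $F : GL_+(V)\to \Lambda^3 V^\ast$, $F(g) = g\cdot \phi_V$. Its image is $\mathcal{P}_V$, and the standard computation of the differential at the identity gives $dF_{\mathrm{id}}(X) = -\mathcal{L}_X\phi_V$ (Lie-algebra action of $X\in\mathfrak{gl}(V)$ on the 3-form). The kernel of $dF_{\mathrm{id}}$ is the Lie algebra of the stabilizer of $\phi_V$, which by the very definition of $\phi_0$ and the transport by $f_0$ is isomorphic to $\mathfrak{g}_2$, of dimension $14$. Hence the image of $dF_{\mathrm{id}}$ has dimension $49 - 14 = 35 = \binom{7}{3} = \dim \Lambda^3 V^\ast$, so $dF_{\mathrm{id}}$ is surjective. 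By the inverse function theorem (submersion theorem), $F$ is open near the identity, so $\phi_V$ lies in the interior of $\mathcal{P}_V$.

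Finally, to upgrade this to openness at every point of $\mathcal{P}_V$, I would use the homogeneity of the orbit: for any $\phi\in\mathcal{P}_V$ there is $g\in GL_+(V)$ with $g\cdot\phi_V=\phi$, and left multiplication by $g$ is a diffeomorphism of $GL_+(V)$ that intertwines the orbit map, so $F$ is also a submersion at $g$. Equivalently, the action of $GL_+(V)$ on $\Lambda^3 V^\ast$ sends open sets to open sets, so translating a neighborhood of $\phi_V$ inside $\mathcal{P}_V$ gives a neighborhood of $\phi$ inside $\mathcal{P}_V$. This yields $\mathcal{P}_V$ open in $\Lambda^3 V^\ast$, as desired.

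The only non-routine input is $\dim G_2 = 14$, but this is precisely the fact recorded in Definition~\ref{def:phi0} (and reviewed in Appendix~\ref{sec:G2appendix}), so no real obstacle arises; the argument is essentially the classical dimension count showing that $\phi_0$ lies in an open $GL(7,\mathbb{R})$-orbit, i.e.\ that positive forms are in particular \emph{stable} in the sense of Definition~\ref{def:stabelform}.
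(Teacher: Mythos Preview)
Your proof is correct and follows essentially the same approach as the paper: both identify $\mathcal{P}_V$ with the $GL_+(7,\mathbb{R})$-orbit through $\phi_0$ (equivalently, with the homogeneous space $GL_+(7,\mathbb{R})/G_2$) and then perform the dimension count $49-14=35=\dim\Lambda^3 V^\ast$. The only difference is presentational: the paper phrases the conclusion as ``$\mathcal{P}_V$ is a submanifold of $\Lambda^3 V^\ast$ of the same dimension, hence open'', whereas you make the last step explicit via the submersion theorem applied to the orbit map and then propagate openness by homogeneity.
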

 
\begin{proof}Since $\mathcal{P}_{V}$ is defined as those forms in $\Lambda^{3} V^{\ast}$ such that there exists an oriented isomorphism relating them to $\phi_{0}$, and $\phi_{0}$ is stabilized by $G_{2}$, we conclude that

\begin{equation}
\mathcal{P}_{V} \simeq \frac{GL_{+}\left( 7,\mathbb{R}\right)}{G_{2}}\, ,
\end{equation}

\noindent
and thus, since $G_{2}$ is a Lie subgroup of $GL_{+}\left( 7,\mathbb{R}\right)$, $\mathcal{P}_{V}$ is an homogeneous manifold of dimension $\mathrm{dim}\, \mathcal{P}_{V} = \mathrm{dim}\, GL_{+}\left( 7,\mathbb{R}\right) - \mathrm{dim}\, G_{2} = 49 - 14 = 35$. $\Lambda^{3} V^{\ast}$ is a vector space  of dimension $\mathrm{dim}\, \Lambda^{3} V^{\ast} = \frac{7!}{4!3!} = 35$. Hence, $\mathcal{P}_{V}$ is a submanifold of $\Lambda^{3} V^{\ast}$ of the same dimension as $\Lambda^{3} V^{\ast}$, and therefore it must be an open set in $\Lambda^{3} V^{\ast}$. \end{proof}

\noindent
By the proposition above,  a positive form is stable, since it belongs to an open orbit. However, the converse is not true, due to the fact there exists more than one open orbit. The description of the set of stable forms on $\mathbb{R}^{7}$ is the following  \cite{LieEngel,Engel,Reichel}.

\begin{itemize}

\item The general group $GL\left( 7,\mathbb{R}\right)$ acting on $\Lambda^{3} (\mathbb{R}^{7})^{\ast}$ has exactly two open orbits $\Lambda_{0}$ and $\Lambda_{1}$, each of which is disconnected  and can be characterized as follows:

\begin{itemize}

\item $\Lambda_{0}$ contains $\phi_{0}$, therefore any other form $\phi\in\Lambda_{0}$ is stabilized by a group conjugate to the real compact form $G_{2}$ of $G^{\mathbb{C}}_{2}$.

\item $\Lambda_{1}$ contains $\phi_{1}$, therefore any other form $\phi\in\Lambda_{0}$ is stabilized by a group conjugate to the non-compact real form $G^{\ast}_{2}$ of $G^{\mathbb{C}}_{2}$.

\end{itemize}

\item Each open orbit consists of two connected components, namely $\Lambda^{\pm}_{0}$ and $\Lambda^{\pm}_{1}$, which are given by the action of $GL_{\pm}\left( 7,\mathbb{R}\right)$
on $\phi_{0}$ and $\phi_{1}$
respectively . Here $GL_{+}\left( 7,\mathbb{R}\right)$ and  $GL_{-}\left( 7,\mathbb{R}\right)$ are the elements of $GL\left( 7,\mathbb{R}\right)$ of positive and negative determinant respectively.

\item The set of stable three-form $\phi\in\Lambda^3 V^{\ast}$  is given by the disjoint union $\Lambda^{+}_{0}\cup \Lambda^{-}_{0} \cup \Lambda^{+}_{1}\cup \Lambda^{-}_{1}$. The positive forms are exactly those contained in $\Lambda^{+}_{0}$.
\end{itemize}

\begin{prop}
Let $V$ be a seven-dimensional vector space and let $\mathcal{N}_{V}$ and $\mathcal{S}_{V}$ respectively denote the set of nondegenerate three-forms and the set of stable three-forms. Then, $\mathcal{S}_{V}\subset \mathcal{N}_{V}$, that is, $\mathcal{S}_{V}$ is a subset $\mathcal{N}_{V}$. In addition, $\mathcal{S}_{V}\neq \mathcal{N}_{V}$, that is, there exist non-degenerate three-forms that are not stable.
\end{prop}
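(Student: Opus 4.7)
The plan is to treat the two claims separately. For the inclusion $\mathcal{S}_V \subseteq \mathcal{N}_V$, the key observation is that non-degeneracy is a $GL(V)$-invariant property of three-forms, since $\iota_v(g^*\phi) = g^*(\iota_{g_* v}\phi)$ for every $g \in GL(V)$. By the orbit description recorded just before the proposition, every stable three-form lies in the $GL(V)$-orbit of $\pm\phi_0$ or $\pm\phi_1$, so it suffices to verify non-degeneracy for the representatives $\phi_0$ and $\phi_1$. This I would do via the symmetric bilinear form $\mathfrak{B}$ from \eqref{eq:Bilinear}: the computations displayed above the proposition give $\mathfrak{B}_0 = g_0$ and $\mathfrak{B}_1 = g_1$, both non-degenerate (positive definite and of signature $(4,3)$ respectively). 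Since $\iota_v\phi_i = 0$ immediately implies $\mathfrak{B}_i(v,w) = 0$ for all $w$, non-degeneracy of $\mathfrak{B}_i$ forces $v = 0$.

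For the strict inequality, the strategy is to exhibit an explicit non-degenerate three-form lying outside the four open orbits. A convenient choice, after fixing any oriented basis of $V$, is $\phi = dx_1\wedge(dx_{23} + dx_{45} + dx_{67})$. Non-degeneracy follows from a direct unpacking of $\iota_v\phi$ in the basis $\{dx_{ij}\}$ of $\Lambda^2 V^*$: each coordinate $v_i$ of $v$ appears as the coefficient of some $dx_{jk}$ (for instance, $v_1$ is the coefficient of $dx_{23}$, $v_3$ the coefficient of $dx_{12}$, and so on), so $\iota_v\phi = 0$ forces $v = 0$. To show that $\phi$ is not stable, I would produce an oversized stabilizer: the symplectic group $Sp(6,\mathbb{R})$ acting canonically on $\mathrm{span}(\partial_2,\ldots,\partial_7)$ (with symplectic form $dx_{23}+dx_{45}+dx_{67}$) while fixing $\partial_1$ preserves both the factor $dx_1$ and the 2-form factor, hence $\phi$ itself. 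Since $\dim Sp(6,\mathbb{R}) = 21$ strictly exceeds $\dim G_2 = 14$, the $GL(V)$-orbit of $\phi$ has dimension at most $49 - 21 = 28 < 35 = \dim\Lambda^3V^*$, so it cannot be open and $\phi$ is not stable.

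The main obstacle in this argument is the non-stability claim: while non-degeneracy is routine linear algebra, showing that a specific form lies outside all four open orbits requires a genuine structural input. The cleanest self-contained route is the explicit oversized-stabilizer computation above; an alternative would be to invoke the full Engel--Reichel orbit classification of 3-forms on $\mathbb{R}^7$ to place $\phi$ directly in a known non-open orbit, but that relies on an external result not recorded in the excerpt.
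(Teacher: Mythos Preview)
Your proof is correct. For the inclusion $\mathcal{S}_V\subset\mathcal{N}_V$ you proceed exactly as the paper does: reduce to the two model forms $\phi_0,\phi_1$ by $GL(V)$-invariance and verify non-degeneracy there (your route via the bilinear form $\mathfrak{B}$ is a clean way to package the explicit check the paper alludes to).

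For the strict inequality, your approach is genuinely different from the paper's. The paper argues non-constructively by a parameter count: the degenerate three-forms sweep out a set of dimension $20+6=26$, whereas the disconnectedness of $\mathcal{S}_V$ forces its complement to contain a hypersurface (dimension $34$); since $26<34$, that hypersurface must meet $\mathcal{N}_V$. You instead exhibit the concrete form $\phi=dx_1\wedge(dx_{23}+dx_{45}+dx_{67})$, check non-degeneracy directly in coordinates, and then rule out stability by producing the $21$-dimensional stabilizer $Sp(6,\mathbb{R})\subset GL(V)$, which bounds the orbit dimension by $49-21=28<35$. Your argument is more elementary and self-contained (it avoids any topological reasoning about disconnectedness and codimension), and it yields an explicit witness; the paper's argument is slicker in that it requires no computation with a specific form, but it relies on the non-trivial connectedness input and only gives existence.
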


\begin{proof} Since every stable form is related to $\phi_{0}$ or $\phi_{1}$ by an isomorphism, to show $\mathcal{S}_{V}\subset \mathcal{N}_{V}$ it is enough to check that $\phi_{0}$ and $\phi_{1}$ are non-degenerate. This can be checked by explicit calculation. In order to see now that there are more degenerate three-forms than stable three-forms, we are going to proceed by a parameter count\footnote{We thank Dominic Joyce for a private communication regarding this issue.}.

\begin{itemize}

\item Dimension of the space of three-forms: $\dim\, \Lambda^3 V^{\ast} = 35$.

\item Dimension of the space of three forms $\phi\in  \Lambda^3 V^{\ast}$ such that $\iota_{v}\phi = 0$, where $v\in V$ is a fixed unit norm vector = Dimension of three-forms on $\mathbb{R}^{6}$ = 20.

\item Dimension of the space of unit vectors in $V$ = 6.

\item Dimension of the space of three forms $\phi\in  \Lambda^3 V^{\ast}$ such that $\iota_{v}\phi = 0$, for some unit vector $v\in V$ = Dimension of three-forms on $\mathbb{R}^{6}$ + Dimension of unit vectors in $V$ = 26.

\end{itemize}

\noindent
Therefore $\Lambda^3 V^{\ast} -\mathcal{N}_{V}$,
the space of degenerate three-forms, has dimension 26. On the other hand, as seen above 
$\mathcal{S}_{V}$ is not connected, hence 
the space $\Lambda^3 V^{\ast} -\mathcal{S }_{V}$ of three-forms which are not stable must have a component of codimension one, i.e. dimension 34. 
(If $\Lambda^3 V^{\ast} -\mathcal{S }_{V}$ had codimension $\ge 2$, then $\mathcal{S }_{V}$ would be connected.)
Since $26 < 34$, we conclude that there must be 
a three-form that is not stable but which is
non-degenerate.
\end{proof}

\subsection{Relation between admissible and positive forms}
\label{sec:admpos}

Positive three-forms (definition \ref{def:positive}) on seven-dimensional vector spaces are closely related to  admissible four-forms (definition \ref{def:admissible}) on eight-dimensional vector spaces.
We first show how to pass from  positive three-forms   to admissible four-forms.

\begin{lemma}
\label{lem:phitoOmega}
Let $V$ be an eight-dimensional oriented vector space and $H$ a seven-dimensional oriented subspace. Let $\phi$ be a positive three-form on $H$, and $v\in V$ a vector transverse to $H$ and inducing\footnote{The
orientation on $H$ induced by the one on $V$ and the vector $v$ is defined as follows:
a  basis  $w_1,\dots,w_7$     of $H$ is declared to be compatible with the induced orientation if{f} $v,w_1,\dots,w_7$ a  basis of $V$ compatible with its orientation.} the given
 orientation on $H$. Then\footnote{Here we slightly abuse notation, denoting  by the same symbol $\phi \in \Lambda^3H^*$ and its extension to
a three-form on $V$ annihilating the vector $v$.}
\begin{equation*}
\Omega = v^{\flat}\wedge  \phi+ \ast\left(v^{\flat}\wedge  \phi\right)\, ,
\end{equation*}

\noindent
is an admissible four-form on $V$, where $v^{\flat}$ and the Hodge star $\ast$ are taken w.r.t. the unique inner product $g$ on $V$ which on $H$ agrees  with the one associated to $\phi$, and which satisfies $||v||=1$, $v\perp H$. 
\end{lemma}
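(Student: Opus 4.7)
The plan is to reduce the statement to the model situation on $\mathbb{R}^8$ by building an oriented isomorphism under which $\Omega$ pulls back to the standard Cayley four-form, which is admissible by definition. Since admissibility is a $GL_+$-invariant condition (membership in the open orbit of the Cayley form), exhibiting such an isomorphism will conclude the proof.

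First I would use the positivity of $\phi$ to fix an oriented isomorphism $f:\mathbb{R}^7\to H$ with $f^*\phi=\phi_0$, and note that by the characterization \eqref{eq:innpr} this $f$ is automatically an isometry for the induced inner product on $H$ and the standard Euclidean metric $g_0$ on $\mathbb{R}^7$. Then I would extend $f$ to a linear map $F:\mathbb{R}^8\to V$ by declaring $F(e_0)=v$, where $e_0,e_1,\dots,e_7$ is the standard basis of $\mathbb{R}^8$. Since $v$ is transverse to $H$ and induces the given orientation on $H$, the map $F$ is an orientation-preserving isomorphism; since $v$ has unit norm and is orthogonal to $H$, it is an isometry between the standard Euclidean metric on $\mathbb{R}^8$ and $g$ on $V$.

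Next I would compute $F^*\Omega$. By construction $F^*(v^\flat)=dx_0$ and $F^*\phi=\phi_0$, so $F^*(v^\flat\wedge\phi)=dx_0\wedge\phi_0$. For the second summand I would use that $F$ is an orientation-preserving isometry, hence intertwines the Hodge stars: $F^*\bigl(*(v^\flat\wedge\phi)\bigr)=*_8(dx_0\wedge\phi_0)$. The key computation—and essentially the only calculation needed—is the identity
\begin{equation*}
*_8(dx_0\wedge\phi_0)=\tilde\phi_0,
\end{equation*}
regarded as a four-form on $\mathbb{R}^8$ involving only $dx_1,\dots,dx_7$. This holds termwise: for each monomial $dx_{ijk}$ in $\phi_0$, the Hodge dual $*_8(dx_0\wedge dx_{ijk})$ is, up to sign, the wedge of the four $dx_l$ with $l\in\{1,\dots,7\}\setminus\{i,j,k\}$, which matches $*_7(dx_{ijk})$ since the sign coming from completing $\{0,i,j,k\}$ to $\{0,1,\dots,7\}$ coincides with the sign from completing $\{i,j,k\}$ to $\{1,\dots,7\}$. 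Thus
\begin{equation*}
F^*\Omega=dx_0\wedge\phi_0+\tilde\phi_0,
\end{equation*}
which is precisely the standard Cayley four-form $\Omega_0$ on $\mathbb{R}^8$ defining the model $Spin(7)$-structure (as recalled in Appendix \ref{sec:Spin7appendix}).

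Finally I would invoke the definition of admissibility: a four-form on an eight-dimensional oriented vector space is admissible precisely when it is related to $\Omega_0$ by an orientation-preserving isomorphism. Since we have exhibited such an $F$ with $F^*\Omega=\Omega_0$, the form $\Omega$ is admissible on $V$. I expect the main—indeed only—nontrivial point to be the Hodge-dual identity above, but this is a routine combinatorial check on the seven monomials of $\phi_0$, using the fact that $g$ restricts to the inner product associated with $\phi$ on $H$ together with $v\perp H$ and $\|v\|=1$, which is exactly what guarantees that $*_8$ restricted to forms of the type $dx_0\wedge(\,\cdot\,)$ reduces to $*_7$ on $H$.
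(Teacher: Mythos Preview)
Your proof is correct and follows essentially the same approach as the paper: both arguments use the positivity of $\phi$ to fix an oriented isometry $H\cong\mathbb{R}^7$ taking $\phi$ to $\phi_0$, extend it to an oriented isometry $V\cong\mathbb{R}^8$ by sending $v$ to the extra basis vector, and then reduce to the model computation $\ast_8(dx_0\wedge\phi_0)=\ast_7\phi_0$, which yields $\Omega_0$. The only differences are cosmetic (you index the standard basis $e_0,\dots,e_7$ whereas the paper uses $e_1,\dots,e_8$, and you work with $F:\mathbb{R}^8\to V$ rather than the coordinate map $V\to\mathbb{R}^8$).
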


\begin{proof}
Denote by $e_1,\dots,e_8$ the standard basis of $\mathbb{R}^{8}$. On the 7-dimensional subspace $Span\{e_2,\dots,e_8\}$,
consider the three-form $\phi_0$ given as in Def. \ref{def:phi0} (but shifting all the indices by one, so that they lie in the range $2,\dots,8$). Then $e_1^{\flat}\wedge  \phi\in \wedge^4(\mathbb{R}^{8})^*$ equals the first eight terms of the four-form $\Omega_0$ appearing in equation \eqref{eq:fourform}, while $\ast_7\phi_0=\ast\left(e_1^{\flat}\wedge  \phi_{0}\right)$ equals the remaining eight terms of  $\Omega_0$, where $\ast_7$ denotes the Hodge-star on $Span\{e_2,\dots,e_8\}$. Hence, $e_1^{\flat}\wedge  \phi_0+\ast\left(e_1^{\flat}\wedge  \phi_{0}\right)=\Omega_0$.

There is an orientated isometry $\tau \colon H \to \mathbb{R}^{7}$ identifying $\phi$ with $\phi_0$.
Denote by $f_2,\dots,f_8$ the orthonormal basis of $H$ corresponding to the standard basis of $\mathbb{R}^{7}$ under $\tau$, and define  $f_1:=v$. Then $f_1,\dots,f_8$ is a basis of $V$ which is orthonormal w.r.t. the inner product $g$ and compatible with the orientation. The coordinate map $V\to \mathbb{R}^{8}$ is an oriented isometry which restricts to $\tau$, and the above argument\footnote{Notice that the Hodge star depends on the inner product and the orientation.} on $\mathbb{R}^{8}$ shows that $\Omega$ is an admissible form.
\end{proof}

\begin{remark}
\label{rem:innonV} In Lemma \ref{lem:phitoOmega},
the inner product on $V$ associated to $\Omega$ is exactly  $g$. This follows from the fact that the inner product on $\mathbb{R}^{8}$ associated to $\Omega_0$ is the standard inner product. Further, in Lemma \ref{lem:phitoOmega} the term $ \ast\left(v^{\flat}\wedge  \phi\right)$ equals  $ \ast_7 \phi $, the Hodge-dual of $\phi$ w.r.t. the metric on $H$ induced by $\phi$, extended to a form on $V$ with kernel $ \mathbb{R}v$.
\end{remark}

\noindent
Conversely, we now show how to pass from admissible four-forms to positive three-forms.

\begin{lemma}\label{lem:anyv}
 Let $V$ be an eight-dimensional oriented vector space and $\Omega$ an admissible four-form.
 For all non-zero 
 $v\in  V $, and all $7$-dimensional subspaces $H\subset V$ transverse to $v$, 
 the form
$(\iota_v\Omega)|_{H}$
 is a positive three-form on $H$ (where $H$ has the orientation induced by the orientation on $V$ and by $v$).
\end{lemma}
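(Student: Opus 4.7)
The plan is to reduce the statement, through a sequence of symmetry operations, to the direct computation in the standard model $(\mathbb{R}^{8},\Omega_0)$ already performed in the proof of Lemma \ref{lem:phitoOmega}.

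First I would observe that the choice of transverse hyperplane $H$ is immaterial. Since $\iota_v\iota_v\Omega=0$, the three-form $\iota_v\Omega$ descends to a three-form on the quotient $V/\mathbb{R}v$; for any hyperplane $H\subset V$ transverse to $v$, the canonical projection $\pi|_{H}\colon H\to V/\mathbb{R}v$ is an isomorphism which, by the very definition of the orientation on $H$ induced by $V$ and $v$, is orientation preserving. Consequently $(\iota_v\Omega)|_H$ is, up to an orientation-preserving linear identification, independent of the transverse $H$, so that it is positive for one such choice if and only if it is positive for every other. In particular I may replace $H$ by $v^{\perp}$, the orthogonal complement with respect to the inner product on $V$ associated to $\Omega$.

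Next, by the definition of admissibility there exists an oriented isometry $T\colon V\to\mathbb{R}^{8}$ with $T^{\ast}\Omega_0=\Omega$, and pulling back by $T$ carries positive three-forms to positive three-forms; the problem therefore reduces to the model $(\mathbb{R}^{8},\Omega_0)$ with an arbitrary non-zero $v\in\mathbb{R}^{8}$. I would then invoke the standard fact that the stabilizer of $\Omega_0$ inside $SO(8)$ is $\mathrm{Spin}(7)$, which acts transitively on the unit sphere $S^{7}\subset\mathbb{R}^{8}$ (equivalently, $\mathrm{Spin}(7)/G_2\cong S^{7}$, via the dimension count $21-14=7$ together with connectedness). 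Applying an appropriate $\sigma\in\mathrm{Spin}(7)$ aligns $v/\|v\|$ with $e_1$; since $\sigma$ preserves $\Omega_0$, the inner product, and the orientation of $\mathbb{R}^{8}$, and hence also sends $v^{\perp}$ with its induced orientation onto $e_1^{\perp}$ with its induced orientation, we may assume that $v=\lambda e_1$ with $\lambda=\|v\|>0$.

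It then remains to verify positivity in this distinguished case. As observed in the proof of Lemma \ref{lem:phitoOmega}, one has $(\iota_{e_1}\Omega_0)|_{\mathrm{Span}\{e_2,\dots,e_8\}}=\phi_0$, with all indices shifted by one, and this form is positive by Definition~\ref{def:phi0}. Multiplying by $\lambda>0$ preserves positivity, because the scaling $f\colon y\mapsto \lambda^{1/3}y$ of $\mathbb{R}^{7}$ is an orientation-preserving linear isomorphism satisfying $f^{\ast}\phi_0=\lambda\phi_0$. The only delicate point throughout is the orientation bookkeeping at each reduction step, and it is precisely for this reason that one must stay inside $\mathrm{Spin}(7)\subset SO(8)$ and obtain a positive scalar $\lambda$, rather than using arbitrary elements of $O(8)$ or allowing $\lambda$ of either sign.
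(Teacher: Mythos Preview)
Your proof is correct and follows essentially the same strategy as the paper's: reduce to the model $(\mathbb{R}^{8},\Omega_0)$, use transitivity of $\mathrm{Spin}(7)$ on $S^{7}$ to bring $v/\|v\|$ to $e_1$, check the base case $(\iota_{e_1}\Omega_0)|_{e_1^{\perp}}=\phi_0$, and handle the positive scalar via the dilation $y\mapsto \lambda^{1/3}y$. The only cosmetic difference is that you dispose of the arbitrary transverse hyperplane $H$ at the outset by passing to the quotient $V/\mathbb{R}v$, whereas the paper handles it after specializing to $v=e_1$ by observing that orthogonal projection $H'\to e_1^{\perp}$ pulls back $(\iota_{e_1}\Omega_0)|_{e_1^{\perp}}$ to $(\iota_{e_1}\Omega_0)|_{H'}$; these are the same idea in different clothing.
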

 \begin{proof} 
We may assume that $V=\mathbb{R}^{8}$ and that $
\Omega=\Omega_0$, the four-form given in eq. \eqref{eq:fourform}. 
Denote by $e_1,\dots,e_8$ the standard basis of $\mathbb{R}^{8}$.
Notice that $(\iota_{e_1}\Omega_0)|_{Span\{e_2,\dots,e_8\}}$ coincides with the three-form $\phi_0$ given in Def. \ref{def:phi0} (upon a shift of indices). 
For any $7$-dimensional subspace $H'$ transverse to $e_1$, the isomorphisms $H'\to Span\{e_2,\dots,e_8\}$ obtained restricting the orthogonal projection
identifies $(\iota_{e_1}\Omega_0)|_{H'}$ and 
 $(\iota_{e_1}\Omega_0)|_{Span\{e_2,\dots,e_8\}}$, therefore the former is   a positive three-form.

Consider now the action of $Spin(7)$ on $\mathbb{R}^{8}$, obtained by restriction to $Spin(7)\subset SO(8)$ of the usual action of $SO(8)$ on $\mathbb{R}^{8}$. Recall that this action of $Spin(7)$   preserves $\Omega_0$.
Further, this action of $Spin(7)$ is transitive on the unit sphere $S^7$,
as explained in \cite[\S 3.1, Remark 3]{Bryant}.
Hence, for every $v\in S^7$, there is $A\in Spin(7)$ with $Ae_1=v$.
The form $$A^*(\iota_v\Omega_0)=\iota_{A^{-1}v}(A^*\Omega_0)=\iota_{e_1}\Omega_0$$ is positive once restricted to $A^{-1}(H)$, by the beginning of this proof, hence
$\iota_v\Omega_0$ is positive once restricted to $H$.

 To show that the same holds for all non-zero vectors, notice that if $r>0$, then $\iota_{v}\Omega_0$ and $\iota_{rv}\Omega_0$ are $GL(8)$-related (through the dilation by $\sqrt[3]{r}$) therefore the restriction to $H$ of the latter form is also positive.
\end{proof}

\begin{remark}\label{rem:metricH}
In the setting of Lemma \ref{lem:anyv}, denote by $g$ the metric on $V$ induced by the admissible form $\Omega$, and
assume that $||v||=1$ and $H=v^{\perp}$. Then the metric on $H$ induced by the positive form $(\iota_v\Omega)|_{H}$ is the restriction of $g$.
 Indeed, since the action of $Spin(7)$ on $\mathbb{R}^{8}$ 
 preserves $\Omega_0$ as well as the metric $g$, and  is transitive
 on $S^7$,
 it is enough to check this statement for $v=e_1$, for which  it is clearly true.
\end{remark}
 
\noindent
  There is a bijective correspondence between admissible and positive forms which are compatible with a given, fixed metric. 

\begin{prop}
\label{prop:bijection}
Let $V$ be an oriented eight-dimensional vector space with a fixed   inner product $g$. Fix $v\in V$ with $||v||=1$. Denote by $i^*g$ the restricted inner product on the seven-dimensional subspace $H:=v^{\perp}$ (endowed with the orientation induced by the one  on $V$ and by $v$).
 There is a bijection
\begin{align*}
A\colon \{\text{admissible four-forms on $V$ inducing $g$}\}
&\to \{\text{positive three-forms on $H$ inducing $i^*g$}\}\\
\Omega &\mapsto (\iota_v\Omega)|_H
\end{align*}
whose inverse is given by
\begin{align*}
B\colon \{\text{positive three-forms on $H$ inducing $i^*g$}\}
&\to \{\text{admissible four-forms on $V$ inducing $g$}\}\\
\phi &\mapsto v^{\flat}\wedge  \phi+ \ast\left(v^{\flat}\wedge  \phi\right)
\end{align*}
\end{prop}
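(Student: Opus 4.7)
My plan is to verify that $A$ and $B$ are well-defined and then that they are mutually inverse. Well-definedness is essentially already done by the preceding results: Lemma \ref{lem:anyv} gives that $A(\Omega)$ is a positive three-form on $H$, and Remark \ref{rem:metricH} guarantees that it induces the restricted inner product $i^*g$; conversely, Lemma \ref{lem:phitoOmega} together with Remark \ref{rem:innonV} shows that $B(\phi)$ is admissible on $V$ and induces exactly $g$.

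The identity $A\circ B=\text{id}$ is a one-line computation using the orthogonal decomposition $V=\mathbb{R}v\oplus H$. Given a positive $\phi$ on $H$, extended to $V$ by $\iota_v\phi=0$, one has $\iota_v(v^{\flat}\wedge\phi)=\|v\|^2\phi-v^{\flat}\wedge\iota_v\phi=\phi$, while $\ast(v^{\flat}\wedge\phi)\in\Lambda^4H^*$ so that $\iota_v\ast(v^{\flat}\wedge\phi)=0$. Restricting to $H$ recovers $\phi$.

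The direction $B\circ A=\text{id}$ is the substantive one. Given an admissible $\Omega$ inducing $g$, decompose it along the orthogonal split as $\Omega=v^{\flat}\wedge\alpha+\beta$ with $\alpha\in\Lambda^3H^*$ and $\beta\in\Lambda^4H^*$ (both extended by $\iota_v=0$). Then $\iota_v\Omega=\alpha$ and so $A(\Omega)=\alpha$; consequently $B(A(\Omega))=v^{\flat}\wedge\alpha+\ast(v^{\flat}\wedge\alpha)$. The whole matter thus reduces to establishing the identity
\begin{equation*}
\beta=\ast(v^{\flat}\wedge\alpha).
\end{equation*}
This is the main obstacle. I would handle it by reducing to the model case $(\mathbb{R}^{8},\Omega_0,e_1)$, where by the proof of Lemma \ref{lem:phitoOmega} one has explicitly $\Omega_0=e_1^{\flat}\wedge\phi_0+\ast_7\phi_0$, that is, $\beta_0=\ast_7\phi_0=\ast(e_1^{\flat}\wedge\phi_0)$. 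To reduce to this model, I would produce an oriented isometry $\tau\colon V\to\mathbb{R}^{8}$ with $\tau(v)=e_1$ and $\tau^*\Omega_0=\Omega$: admissibility yields some oriented isometry $\sigma$ with $\sigma^*\Omega_0=\Omega$ (by definition of induced metric), and then the transitivity of $Spin(7)$ on $S^7$ (used already in the proof of Lemma \ref{lem:anyv}) lets us precompose $\sigma$ with an element of $Spin(7)$, which preserves $\Omega_0$, so as to send $v$ to $e_1$.

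Under such a $\tau$ the decomposition of $\Omega$ pulls back to the decomposition of $\Omega_0$, and since the Hodge star is natural under oriented isometries, the identity $\beta_0=\ast(e_1^{\flat}\wedge\phi_0)$ transports to $\beta=\ast(v^{\flat}\wedge\alpha)$. Combining, $B(A(\Omega))=v^{\flat}\wedge\alpha+\beta=\Omega$, which completes the proof.
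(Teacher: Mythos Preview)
Your proof is correct, but it takes a genuinely different route from the paper for the harder direction $B\circ A=\mathrm{Id}$.

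The paper does \emph{not} verify $B\circ A=\mathrm{Id}$ directly. Instead, after establishing $A\circ B=\mathrm{Id}$ as you do, it argues globally: both the domain and codomain of $A$ are identified with $\mathbb{R}P^7$ (via $SO(8)/Spin(7)$ and $SO(7)/G_2$ respectively), so $B$ is an injective immersion between equidimensional manifolds, hence a local diffeomorphism; its image is then open and closed in the connected compact space $\mathbb{R}P^7$, forcing $B$ to be surjective and therefore bijective with inverse $A$.

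Your approach is more elementary and more explicit: you decompose $\Omega=v^{\flat}\wedge\alpha+\beta$ and show $\beta=\ast(v^{\flat}\wedge\alpha)$ by transporting to the model $(\mathbb{R}^8,\Omega_0,e_1)$ via an oriented isometry that simultaneously matches $\Omega$ with $\Omega_0$ and $v$ with $e_1$. The key input---transitivity of $Spin(7)$ on $S^7$---is the same one underlying the paper's $SO(8)/Spin(7)\cong\mathbb{R}P^7$, but you use it pointwise rather than as a global orbit identification. Your argument avoids invoking the diffeomorphism types of the two orbit spaces and the open-closed-connected reasoning; the paper's argument avoids the explicit decomposition of $\Omega$ but needs those global topological facts. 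Both are sound; yours is arguably more self-contained.
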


\begin{proof}
Notice first that the map $A$ is well-defined by lemma \ref{lem:anyv} and remark \ref{rem:metricH}. Similarly, the map $B$ is well-defined by lemma \ref{lem:phitoOmega} and remark \ref{rem:innonV}.

Clearly $A\circ B=Id$. Instead of showing directly that $B\circ A=Id$, we proceed as follows.

Recall from appendix \ref{sec:Spin7appendix} that the admissible four-forms on $\mathbb{R}^{8}$ are the elements of the orbit of
$\Omega_{0}$ (see eq. \eqref{eq:fourform}) under the natural action of 
$GL_{+}\left(8,\mathbb{R}\right)$ on $\Lambda^{4} (\mathbb{R}^{8})^{\ast}$, and that the stabilizer of $\Omega_{0}$ is the subgroup $Spin(7)$. Hence the admissible four-forms on $\mathbb{R}^{8}$ whose associated inner product is the standard one are given by the $SO(8)$-orbit through $\Omega_{0}$. Therefore we obtain diffeomorphisms
$$\{\text{admissible four-forms on $V$ inducing $g$}\}\cong SO(8)/Spin(7)\cong \mathbb{R}P^7.$$

Similarly, by \S \ref{sec:G2in7} and
appendix \ref{sec:G2appendix},
  the positive three-forms on $\mathbb{R}^{7}$ are the elements of the orbit of
$\phi_{0}$ (see eq. \eqref{def:phi0}) under the natural action of 
$GL_{+}\left(7,\mathbb{R}\right)$, and   the stabilizer of $\phi_{0}$ is $G_2$. Hence positive three-forms on $\mathbb{R}^{7}$ whose associated inner product is the standard one are given by the $SO(7)$-orbit through $\phi_{0}$. Therefore  we obtain diffeomorphisms

\begin{equation}
\{\text{positive three-forms on $H$ inducing $i^*g$}\}\cong SO(7)/G_2\cong \mathbb{R}P^7\, .
\end{equation}

\noindent
We conclude that  both the domain and codomain of $A$ are the 7-dimensional real projective space $\mathbb{R}P^7$.

The equation $A\circ B=Id$ implies that $B$ is an injective immersion, therefore by dimension count a local diffeomorphism, and hence   a diffeomorphism onto its image. The image of $B$ 
is open (since $B$ is a local diffeomorphism) and closed (since  the domain of $B$ is compact), hence it must be the whole of the codomain of $B$. This shows that $B$ is surjective as well, hence bijective, with inverse $A$.
\end{proof}

\begin{cor}
\label{cor:admissible3}
Let $V$ be an eight-dimensional oriented vector space with inner product $g$. Let $\phi \in \wedge^3 V^*$ and a unit vector $v\in V$ such that $ker(\phi)=\mathbb{R}v$. Define $H:=v^{\perp}$.

\begin{enumerate}

\item If $ v^{\flat}\wedge  \phi+ \ast\left(v^{\flat}\wedge  \phi\right)$ is an admissible four-form on $V$ (not necessarily inducing the inner product $g$), then 
$\phi|_H$ is a positive three-form on $H$.

\item $ v^{\flat}\wedge  \phi+ \ast\left(v^{\flat}\wedge  \phi\right)$ is an admissible four-form on $V$ inducing $g$ if and only if $\phi|_H$ is a positive three-form inducing $i^*g$.
\end{enumerate}
\end{cor}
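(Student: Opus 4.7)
The plan is to reduce both parts of the corollary to the single computational identity $\iota_v\Omega=\phi$, where $\Omega:=v^{\flat}\wedge\phi+\ast(v^{\flat}\wedge\phi)$; once this is in hand, part (1) will follow directly from Lemma \ref{lem:anyv} and part (2) from Remark \ref{rem:metricH} together with Proposition \ref{prop:bijection}.

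To establish the identity, I would split $\iota_v\Omega=\iota_v(v^{\flat}\wedge\phi)+\iota_v\ast(v^{\flat}\wedge\phi)$. For the first summand, the graded Leibniz rule for the interior product gives $\iota_v(v^{\flat}\wedge\phi)=v^{\flat}(v)\,\phi-v^{\flat}\wedge\iota_v\phi=\phi$, using $g(v,v)=1$ and the hypothesis $\ker(\phi)=\mathbb{R}v$ (which is precisely $\iota_v\phi=0$). For the second summand, I would use the standard identity $\iota_v\ast\alpha=\pm\,\ast(\alpha\wedge v^{\flat})$ valid for any form $\alpha$; applied to $\alpha=v^{\flat}\wedge\phi$ it gives $\pm\,\ast(v^{\flat}\wedge\phi\wedge v^{\flat})=0$, because $v^{\flat}\wedge v^{\flat}=0$. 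Hence $\iota_v\Omega=\phi$, and in particular $(\iota_v\Omega)|_H=\phi|_H$.

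For part (1), admissibility of $\Omega$ on $V$ allows one to apply Lemma \ref{lem:anyv} with the vector $v$ and the transverse subspace $H=v^{\perp}$ (equipped with the orientation induced by $V$ and $v$), concluding that $(\iota_v\Omega)|_H=\phi|_H$ is a positive three-form on $H$. This matches the slightly weaker hypothesis of part (1), because Lemma \ref{lem:anyv} is a statement about the admissible form alone and makes no reference to the ambient metric.

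For part (2), the forward implication uses Remark \ref{rem:metricH}: if $\Omega$ is admissible and induces $g$, then the remark states that the metric on $H$ associated to the positive form $(\iota_v\Omega)|_H=\phi|_H$ is precisely the restriction $i^*g$. For the reverse implication I would invoke the map $B$ of Proposition \ref{prop:bijection}: since $\ker(\phi)=\mathbb{R}v$, the given $\phi\in\wedge^3V^*$ is nothing but the extension of $\phi|_H$ annihilating $v$, so $B(\phi|_H)=v^{\flat}\wedge\phi+\ast(v^{\flat}\wedge\phi)=\Omega$, and the proposition guarantees that this form is admissible and induces $g$. The only mildly delicate point in the argument is getting the sign in $\iota_v\ast\alpha=\pm\,\ast(\alpha\wedge v^{\flat})$ correct; but since the quantity that must be computed has $v^{\flat}\wedge v^{\flat}$ as a factor and therefore vanishes identically, the sign ambiguity is immaterial.
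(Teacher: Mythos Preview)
Your proof is correct and follows essentially the same approach as the paper: both hinge on the identity $\iota_v\Omega=\phi$, after which part (1) is Lemma~\ref{lem:anyv} and part (2) is Proposition~\ref{prop:bijection} (whose well-definedness of the map $A$ already incorporates Remark~\ref{rem:metricH}). You have simply spelled out the contraction computation in more detail than the paper does.
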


\begin{proof}
1. follows from lemma \ref{lem:anyv}, and 2) from proposition \ref{prop:bijection}. Both use that the contraction with $v$ of $v^{\flat}\wedge  \phi+ \ast\left(v^{\flat}\wedge  \phi\right)$ is $\phi$.
\end{proof}


\section{Decrypting the 8-dimensional manifold}
\label{sec:8manifold}


As explained in section \ref{sec:EGGM}, in  exceptional generalized geometry a compactification with four-dimensional off-shell ${\cal N}=1$ supersymmetry is described by a reduction of the structure group $E_{7(7)}$  of a vector bundle in the {\bf 912} representation. The decomposition of the {\bf 912} representation of $E_{7(7)}$ in terms of the fundamental representation of $Sl\left(8,\mathbb{R}\right)$ (the group acting on the tangent bundle of an eight-dimensional oriented  manifold ${\cal M}_8$)  is given in (\ref{rep912}). Our starting point will therefore be 
a differentiable manifold $\mathcal{M}_{8}$ equipped with the following four different global sections, not necessarily everywhere non-vanishing, 
\begin{itemize}

\item A global section $g\in\Gamma\left(S^2 T^{\ast}\mathcal{M}_{8}\right)$.

\item A global section $\tilde{g}\in\Gamma\left(S^2 T\mathcal{M}_{8}\right)$.

\item A global section $\phi\in\Gamma\left(\Lambda^3 T^{\ast}\mathcal{M}_{8}\otimes T\mathcal{M}_{8}\right)_0$.

\item A global section $\tilde{\phi}\in\Gamma\left(\Lambda^3 T\mathcal{M}_{8}\otimes T^{\ast}\mathcal{M}_{8}\right)_0$.

\end{itemize}

\noindent
We will require $\mathcal{M}_{8}$ to be equipped with a volume form, since we want to be able to define integrals and also since we want the structure group to be reduced to $Sl(8,\mathbb{R})$, in order for $\mathcal{M}_{8}$ to carry the corresponding $E_{7(7)}$ representation at every point $p\in\mathcal{M}_{8}$. Hence $\mathcal{M}_{8}$ is oriented.  

In addition, we want to consider compactifications with off-shell supersymmetry, and therefore we have to require $\mathcal{M}_{8}$ to be equipped with a Riemannian metric $\mathsf{g}$ in such a way that $\left(\mathcal{M},\mathsf{g}\right)$ is a spin manifold. Notice that $\mathcal{M}_{8}$ is equipped with a global section $g\in\Gamma\left(S^2 T^{\ast}\mathcal{M}_{8}\right)$. Therefore it is natural to take $g$ to be a Riemannian metric, and $\tilde{g}$ to be the inverse of $g$, which is indeed the case if the ${\bf 912}$ structure is defined from a $Spin(8)$ spinor $\eta$, see equation (\ref{phi912}).  We introduce thus the following definition.

\begin{definition}
\label{def:intermediatemanifold}
Let $\mathcal{M}_{8}$ be an eight-dimensional, oriented, differentiable manifold. We say that $\mathcal{M}_{8}$ is an \emph{intermediate manifold }if it is equipped with the following data
\begin{itemize}

\item A Riemannian metric $g\in\Gamma\left( S^{2} T^{\ast}\mathcal{M}_{8}\right)$ such that $\left(\mathcal{M}_{8},g\right)$ is a spin manifold.

\item A global section $\phi\in\Gamma\left(\Lambda^3 T^{\ast}\mathcal{M}_{8}\otimes T\mathcal{M}_{8}\right)_{0}$.

\item A global section $\tilde{\phi}\in\Gamma\left(\Lambda^3 T\mathcal{M}_{8}\otimes T^{\ast}\mathcal{M}_{8}\right)_{0}$.

\end{itemize}

\noindent
We say then that $\mathfrak{S} = \left(g,\phi, \tilde{\phi}\right)$ is an intermediate structure on $\mathcal{M}_{8}$.
\end{definition}

\noindent
Notice that we do not require in principle the section $\phi$ to be everywhere non-vanishing nor $\mathcal{M}_{8}$ to be compact. This, together with the fact that any differentiable manifold can be equipped with a Riemannian metric means that there are only two obstructions for a manifold $\mathcal{M}_{8}$ to admit an intermediate structure $\mathfrak{S}$, namely

\begin{equation}
w_{1}\left(T\mathcal{M}_{8}\right) = 0\, , \qquad w_{2}\left(T\mathcal{M}_{8}\right) = 0\, ,
\end{equation}

\noindent
where $w_{1}$ and $w_{2}$ denote respectively the first and the second Stiefel-Whitney classes. These are precisely the obstructions for a manifold to be orientable and spinnable. Therefore, every orientable and spinnable eight-dimensional manifold admits an intermediate structure $\mathfrak{S}$, which is completely fixed once we choose an orientation, a metric, and spin structure and the two global sections $\phi$ and $\tilde{\phi}$. Given that every intermediate manifold is equipped with a Riemannian structure, we will take the volume form to be the one induced by the Riemannian metric $g$.

The idea is to study now the geometry of  intermediate manifolds in order to give them a physical meaning and understand if they can play any meaningful role in String/M/F-theory compactifications. 

It turns out that an intermediate structure $\mathfrak{S}$ on a manifold $\mathcal{M}_{8}$ gives rise to many interesting different geometric situations; some of them will be studied in the next sections. More precisely:

\begin{itemize}
\item In section \ref{sec:nd} we will assume the existence of a special kind of intermediate structure, called   $G_2$-intermediate structure (definition \ref{def:ndinter}).
We consider two cases:
\begin{itemize}
\item The seven-dimensional distribution on $\mathcal{M}_{8}$ orthogonal to the vector field $v$ (assumed to be non-vanishing) is completely integrable. We study the foliation by seven-dimensional submanifolds $\mathcal{M}_{7}$ tangent to the distribution, and geometric structures on them. This is done in sections \ref {sec:sevenmanifolds}, \ref{sec:G2manifolds}
 and \ref{sec:G2leavesstructure}.
\item The vector field $v$ is the generator of a free $S^1$ action
on $\mathcal{M}_{8}$. We study geometric structures on the seven-dimensional quotient manifold  $\mathcal{M}_{8}/S^1$. This is done in section \ref{sec:G2manifoldscircle}.
\end{itemize}
\item In section \ref{sec:generalintermediate} we consider 
general intermediate structures on $\mathcal{M}_{8}$.
 \end{itemize}
 
\noindent
We remark that most of the geometric conclusions we draw in sections
\ref{sec:nd} and \ref{sec:generalintermediate} do not make use of the fact that $\left(\mathcal{M}_{8},g\right)$ is a spin manifold.


\section{$G_2$-intermediate structures}
\label{sec:nd}


Let  $\mathcal{M}_{8}$ be an eight-dimensional oriented manifold with an intermediate structure $\mathfrak{S} = \left(g,\phi, \tilde{\phi}\right)$. In this section, we will use $\mathfrak{S}$ to embed seven-dimensional $G_2$ structure manifolds in $\mathcal{M}_{8}$  as the leaves of a foliation. Interestingly enough, the eight-dimensional manifold $\mathcal{M}_{8}$ can then be proven, under additional assumptions that we shall enumerate, to be a manifold with a topological $Spin\left(7\right)$-structure. This points to the interpretation of $\mathcal{M}_{8}$ as a plausible internal manifold in F-theory if it is also elliptically fibered, which, as we will see in section \ref{sec:generalintermediate}, can be the case. Since $\mathcal{M}_{8}$ also contains in a natural way a family of seven-dimensional manifolds with $G_{2}$ structure, which are the leaves of the foliation, we find a natural correspondence between the $G_{2}$-structure manifolds and the corresponding $Spin\left(7\right)$-structure manifolds, which may have a physical meaning in terms of dualities in String/M/F-theory. We leave the complete study of the holonomy of the different manifolds that appear in this set-up to a separate publication \cite{holonomyMCM}. 
 
Here, we will take $\tilde \phi$ to be the dual of $\phi$ by the musical isomorphisms $\flat$ and $\sharp$ given by $g$.\footnote{In terms of the supersymmetry spinors $\eta_{1}$ and $\eta_{2}$, this is \emph{equivalent} to taking them to be everywhere parallel.} Let us choose $\phi$ as follows:
\begin{equation}
\label{eq:phi3}
\phi = \phi_{3}\otimes v\in\Gamma\left(\Lambda^3 T^{\ast}\mathcal{M}_{8}\otimes T\mathcal{M}_{8}\right)_0\, ,
\end{equation}

\noindent
where $\phi_{3}\in\Gamma\left(\Lambda^3 T^{\ast}\mathcal{M}_{8}\right)$ is a three-form and $v\in\Gamma\left(T\mathcal{M}_{8}\right)$ is a vector field. The traceless condition on $\phi$ reads

\begin{equation}
\label{eq:phidegenerate}
\iota_{v}\phi_{3} = 0\, ,
\end{equation}

\noindent
and thus we conclude that $\phi_{3}$ is a degenerate three-form in $\mathcal{M}_{8}$. If $v$ is no-where vanishing, we can without loss of generality, take it to have unit norm with respect to the metric $g$ by a simple rescaling at every point $p\in\mathcal{M}_{8}$, i.e. by replacing $v_p$ with $\frac{v_{p}}{\sqrt{g(v_{p},v_{p})}}$. In this section we are going to consider that this is the case and hence, in the following, we will take $v$ to have unit norm:

\begin{equation} 
\label{unitnorm}
||v_{p}||=1 \text{ for all } p\in\mathcal{M}_{8}\, .
\end{equation}

\noindent
Most of the definitions and results will carry over to the case where $v$ has isolated zeros. We will come back to this point in section \ref{sec:singularities}. 

It is also natural to assume that
 
\begin{equation}
\label{eq:Kerphi3}
\mathrm{Ker}(\phi_{3\, p}) = \left\{\lambda v_{p}:\lambda \in \mathbb{R}\right\}\, , \quad p\in\mathcal{M}_{8}\, .
\end{equation}

\noindent
In other words, $f\cdot v\in\mathfrak{X}\left(\mathcal{M}_{8}\right)\, ,$ where $f\in C^{\infty}\left(\mathcal{M}_{8}\right)$ is any function, is the only vector field such that (\ref{eq:phidegenerate}) holds. Condition (\ref{eq:Kerphi3}) is the most natural situation compatible with equation (\ref{eq:phidegenerate}), since otherwise we would artificially introduce new vector fields that would span the kernel of $\phi_{3}$, and that are not incorporated in the intermediate structure $\mathfrak{S}$.

\noindent
We will call ``non-degenerate'' the intermediate structures that satisfy the properties mentioned so far. The name will be justified in proposition \ref{prop:phinonsingular}.
 
\begin{definition}\label{def:nondeg}
An \emph{non-degenerate} intermediate structure is one of the form $\left( g, \phi_{3}\otimes v\right)$ with $||v||=1$ and
  $ker(\phi_{3})=\mathbb{R}v$.
\end{definition}

\noindent
We define now, at every point $p\in\mathcal{M}_{8}$, a subset of the tangent space $T_{p}\mathcal{M}_{8}$ as follows:

\begin{equation}
H_{7\, p} = \left\{w_{p}\in T_{8\, p}\,|\, \xi_{p}\left(w_{p}\right) = 0\right\}\, ,\quad  \xi_p\equiv v_p^{\flat} \, \quad p\in\mathcal{M}_{8}\, . 
\end{equation}

\noindent
Notice that $H_{7\, p}$ is a well defined seven-dimensional vector subspace of $T_{8\, p}$ at every point $p\in\mathcal{M}_{8}$. 
In other words,
\begin{equation}
\label{eq:distribution1}
H_{7} = \left\{H_{7\, p}\, ,\,\, p\in\mathcal{M}_{8}\right\}\, ,
\end{equation}

\noindent
is a globally defined, codimension one, smooth distribution on $\mathcal{M}_{8}$, given by the kernel of the one form $\xi$. Notice that $H_{7}$  is simply the distribution of vectors orthogonal to $v$.

\begin{remark}
From a  {non-degenerate} intermediate structure $\left( g, \phi_{3}\otimes v\right)$, fixing a point $p$ and restricting the three-form
to   ${H_7}_p$,
 we obtain an element $\phi_3|_{{H_7}_p}\in  \Lambda^3 ({H_7}_p)^*$.
Doing so we do not lose information:  the unique extension of this element of $\Lambda^3 ({H_7}_p)^*$ to an element of $\Lambda^3 {T_p}^*\mathcal{M}_{8}$ that annihilates $v$, is exactly $(\phi_3)|_p$.
\end{remark}

We will be interested in a special case of the above, that gives rise to positive three forms: 

\begin{definition}\label{def:ndinter}
A {non-degenerate} intermediate structure  $\mathfrak{S}=\left( g,\phi_{3}\otimes v \right)$ 
on $\mathcal{M}_{8}$ is said to be a \emph{ $G_{2}$-intermediate structure} if, 
for all points $p$, $\phi_{3}|_{{H_7}_p}$ is a positive\footnote{Here ${H_7}_p$ has  the orientation given by the one of $\mathcal{M}_{8}$ and by $v$.} three-form, whose corresponding  metric is the restriction of $g$ to ${H_7}_p$. 
In this case, we say that $\left(\mathcal{M}_{8},\mathfrak{S}=\left( g,\phi_{3}\otimes v \right)\right)$  is a $G_{2}$-intermediate manifold.
\end{definition}

\noindent
\begin{remark}
A $G_{2}$-intermediate structure on $\mathcal{M}_{8}$ implies a reduction of the structure group of the manifold from $SO(8)$ to $G_2$, where $G_2$ is embedded in $SO(8)$ as
$$\left\{\left(\begin{array}{c|c}1 & 0 \\\hline 0 & A\end{array}\right):
A\in G_2\subset SO(7)\right\}.$$
More precisely, the $G_2$ reduction consists of  
$$\cup_{p\in \mathcal{M}_{8}}\{f: T_{p}\mathcal{M}_{8}\to\mathbb{R}^{8} \text{ oriented isometry such that $f(v)=e_0$ and $(f|_{{H_7}p})^*\phi_0=\phi_{3}|_{{H_7}p}$},\}$$
where $(e_0,\dots,e_7)$ is the canonical basis of $\mathbb{R}^{8}$ and $\phi_0$ as in definition \ref{def:phi0}.
\end{remark}
 
\noindent
We now proceed to study various instances of $G_2$-intermediate structures, as outlined at the end of section \ref{sec:8manifold}.

\subsection{A foliation of  $\mathcal{M}_{8}$ by  seven-manifolds}
\label{sec:sevenmanifolds}

Let  $\left( g, \phi_{3}\otimes v\right)$ be a  {non-degenerate} intermediate structure.
 We want to know under which conditions the distribution  $H_7$ is completely integrable. That is, we want to know under which conditions it defines a foliation of $\mathcal{M}_{8}$ such that the tangent space of the leaf  passing through $p\in\mathcal{M}_{8}$ is given by $H_{7\, p}$. The following proposition answers this question.

\begin{prop}
\label{prop:integrationxi}
Let $\mathcal{M}$ be a differentiable manifold equipped with a non-singular one form $\xi\in\Omega^{1}\left(\mathcal{M}\right)$. Then, the following conditions are equivalent

\begin{itemize}

\item $\xi\left(\left[w_1,w_2\right]\right) = 0$ for all $w_1,w_2\in\mathfrak{X}\left(\mathcal{M}\right)$ such that $\xi\left( w_1\right) = 0$ and $\xi\left( w_2\right) = 0$.

\item $\xi\wedge d\xi = 0$.

\item $d\xi = \beta\wedge\xi$ for some $\beta\in\Omega^{1}\left(\mathcal{M}\right)$.

\end{itemize}

\end{prop}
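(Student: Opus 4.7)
The plan is to prove a three-way equivalence in classical Frobenius style, establishing (1) $\Leftrightarrow$ (2) and (2) $\Leftrightarrow$ (3) separately; all ingredients are standard, but I want to be careful about the global nature of the one-form $\beta$ in (3).

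For (1) $\Leftrightarrow$ (2), I would invoke Cartan's formula for the exterior derivative,
\begin{equation*}
d\xi(w_{1},w_{2}) = w_{1}(\xi(w_{2})) - w_{2}(\xi(w_{1})) - \xi([w_{1},w_{2}])\, .
\end{equation*}
When $w_{1},w_{2}$ both lie in $\ker\xi$ the first two terms vanish and this reduces to $d\xi(w_{1},w_{2}) = -\xi([w_{1},w_{2}])$, so (1) is equivalent to $d\xi$ vanishing on pairs of vectors in $\ker\xi$. On the other hand, expanding
\begin{equation*}
(\xi\wedge d\xi)(u,v,w) = \xi(u)\,d\xi(v,w) - \xi(v)\,d\xi(u,w) + \xi(w)\,d\xi(u,v) \, ,
\end{equation*}
and testing on triples with $v,w\in\ker\xi$ and $u$ an arbitrary vector, combined with the non-singularity of $\xi$ (which guarantees at every point some $u$ with $\xi(u)\neq 0$), shows that $\xi\wedge d\xi = 0$ is equivalent to the vanishing of $d\xi$ on $\ker\xi\otimes\ker\xi$. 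This closes (1) $\Leftrightarrow$ (2).

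The implication (3) $\Rightarrow$ (2) is immediate, since $\xi\wedge(\beta\wedge\xi)=0$. For (2) $\Rightarrow$ (3), I would pick a global vector field $X\in\mathfrak{X}(\mathcal{M})$ with $\xi(X)=1$ --- which exists on any paracompact $\mathcal{M}$ since $\xi$ is nowhere vanishing, concretely by fixing an auxiliary Riemannian metric and setting $X = \xi^{\sharp}/||\xi||^{2}$ --- and contract $\xi\wedge d\xi$ with $X$:
\begin{equation*}
0 = \iota_{X}(\xi\wedge d\xi) = (\iota_{X}\xi)\, d\xi - \xi\wedge \iota_{X} d\xi = d\xi - \xi\wedge \iota_{X} d\xi \, .
\end{equation*}
This gives $d\xi = \xi\wedge \iota_{X}d\xi = -(\iota_{X}d\xi)\wedge\xi$, so $\beta := -\iota_{X}d\xi$ has the required property.

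The one step that deserves attention is the global existence of a smooth $\beta$ in (2) $\Rightarrow$ (3): pointwise linear algebra would give such a $\beta$ at each point, but promoting it to a smooth global one-form relies on the existence of the transverse vector field $X$, which is where paracompactness of $\mathcal{M}$ (and hence the availability of a Riemannian metric) enters the argument. Apart from this observation, the whole proof reduces to Cartan's formula together with the graded Leibniz rule for $\iota_{X}$.
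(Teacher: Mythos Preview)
Your proof is correct and entirely standard. The paper itself does not give a proof of this proposition at all: it simply cites an external reference (``See proposition 2.1 in \cite{foliations}''), so your argument is strictly more informative than what appears in the paper.

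One small remark worth making: your explicit choice $\beta = -\iota_{X}d\xi$ for a vector field $X$ with $\xi(X)=1$ is precisely what the paper exploits in the very next proposition (Proposition~\ref{prop:lvxi}), where it takes $X=v$ with $||v||=1$ and $\xi=v^{\flat}$, and observes that $\beta=-\mathcal{L}_{v}\xi=-\iota_{v}d\xi$ works (using $\iota_{v}\xi=1$). So your construction dovetails with how the result is actually used downstream.
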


\begin{proof} See proposition 2.1 in \cite{foliations}.
\end{proof} 

\noindent
Therefore, using proposition \ref{prop:integrationxi} and the Frobrenius theorem we see that $H_{7}$ is completely integrable if and only if

\begin{equation}
\label{eq:integralxi}
\xi\wedge d\xi = 0,\,\;  \text{or equivalently},\,\;    d\xi = \beta\wedge\xi\, \text{for some $\beta\in\Omega^{1}\left(\mathcal{M}\right)$}.
\end{equation} 

\noindent If the distribution is completely integrable, then there exists a foliation, which we will denote by $\mathcal{F}_{\xi}$, of $\mathcal{M}_{8}$ by seven-dimensional leaves, whose tangent space at $p\in\mathcal{M}_{8}$ is given by $H_{7\, p}$. In other words: 
 assuming (\ref{eq:integralxi}),  at each $p\in\mathcal{M}_{8}$ there exist a   seven-dimensional submanifold $\mathcal{M}_{7\, p}\subset \mathcal{M}_{8}$ passing through $p$ and tangent to $H_7$.

\begin{remark}
Even when $v$ is compatible with the Riemannian metric $g$, in the sense the $v$ is a Killing vector field (one whose flow consists of isometries),  the orthogonal distribution $H_{7}$ may not be completely integrable. For instance, consider $\mathcal{M}_{8}=S^3\times \mathbb{R}^5$. The vector field $v$ generating the action of $S^1$ on the 3-sphere $S^3$ (obtained restricting the action on $\mathbb{C}^2$ by simultaneous rotations)  
is a Killing vector field, but $H_7$ is not  integrable. When 
$H_{7}$ happens to be completely integrable, its leaves are 
totally geodesic submanifolds\footnote{This means that if a geodesic of 
$\mathcal{M}_{8}$ is tangent to a leaf $L$, then it is contained in $L$ at all times.}.

More generally, replacing the one-dimensional orbits of a Killing vector field $v$ by submanifolds of higher dimension, we have the following. Let $(M,g)$ be a complete Riemannian manifold. A foliation $F$ on $M$
is called \emph{Riemannian foliation} \cite{Molino} if its holonomy 
acts by isometries on $H$, where denotes $H$ the distribution on $M$ orthogonal to  the
tangent spaces to the leaves of the foliation. 
 The distribution 
$H$ is not necessary integrable. When it is, one says that $F$
is a  \emph{polar foliation} (see for instance \cite{ThorbSRF}), and the leaves of $H$
are totally geodesic submanifolds.
\end{remark}

\noindent
Note that the one-form $\beta$ is not uniquely defined. Indeed, let $\beta$ and $\beta^{\prime}$ be two one-forms such that
$d\xi = \beta\wedge\xi = \beta^{\prime}\wedge\xi\,$.
Then 
\begin{equation}
\left(\beta - \beta^{\prime}\right)\wedge\xi = 0\,\,\Rightarrow\,\, \beta -\beta^{\prime} = f \xi\, ,
\end{equation}

\noindent
for some function $f\in C^{\infty}\left(\mathcal{M}\right)$.

\noindent
We now go back to our setting of an intermediate manifold  manifold $\mathcal{M}_{8}$. In that case, it is possible to write $\beta$ in terms of $v$.

\begin{prop}\label{prop:lvxi}
Let $\mathcal{M}$ be a differentiable manifold equipped with a Riemmanian metric $g$ and a vector field $v$. Under the assumption that $\xi = v^{\flat}$ satisfies the conditions of proposition \ref{prop:integrationxi}, we have:

a) the one-form $-\mathcal{L}_{v}\xi$, where $\mathcal{L}_{v}$ is the Lie derivative along $v$, is a proper  choice for $\beta$, that is, $d\xi = -\mathcal{L}_{v}\xi\wedge\xi$.

b) $d\xi=0$ if and only if $\mathcal{L}_{v}\xi=0$.
\end{prop}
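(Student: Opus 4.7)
The plan is to use Cartan's magic formula $\mathcal{L}_v \xi = \iota_v d\xi + d(\iota_v\xi)$ together with the key fact that the unit-norm condition on $v$ (see equation (\ref{unitnorm})) makes $\iota_v \xi$ constant, which kills the second term on the right-hand side.

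For part (a), I first observe that $\iota_v \xi = g(v,v) = 1$, so $d(\iota_v \xi) = 0$, and Cartan's formula reduces to $\mathcal{L}_v \xi = \iota_v d\xi$. Now by the integrability assumption, there exists some $\beta\in\Omega^{1}(\mathcal{M}_8)$ with $d\xi = \beta\wedge\xi$. Contracting both sides with $v$ and using $\iota_v\xi=1$ gives
\[
\mathcal{L}_v \xi \;=\; \iota_v(\beta\wedge\xi) \;=\; (\iota_v\beta)\xi - \beta,
\]
so $\beta = (\iota_v\beta)\xi - \mathcal{L}_v \xi$. Wedging with $\xi$ (and noting $\xi\wedge\xi=0$) yields
\[
d\xi \;=\; \beta\wedge\xi \;=\; -\mathcal{L}_v\xi \wedge \xi,
\]
which shows that $-\mathcal{L}_v\xi$ is an admissible choice of $\beta$.

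For part (b), the implication $d\xi=0 \Rightarrow \mathcal{L}_v\xi=0$ is immediate from $\mathcal{L}_v\xi = \iota_v d\xi + d(\iota_v\xi) = 0 + 0$. The converse follows directly from part (a): if $\mathcal{L}_v\xi=0$, then $d\xi = -\mathcal{L}_v\xi\wedge \xi = 0$.

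I do not anticipate a real obstacle here; the content is essentially Cartan's formula combined with the normalization $\|v\|=1$. The only minor subtlety worth flagging is that the choice of $\beta$ is not unique (as noted in the excerpt before the statement), so in part (a) one must only verify that $-\mathcal{L}_v\xi$ is \emph{a} valid choice, not \emph{the} choice, which is exactly what the above computation delivers.
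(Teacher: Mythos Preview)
Your proof is correct and uses the same core ingredients as the paper: Cartan's formula together with the normalization $\iota_v\xi = 1$. The organization differs slightly. For part (a) the paper simply cites an external reference, whereas you give the explicit computation (contracting $d\xi=\beta\wedge\xi$ with $v$ and wedging back with $\xi$), which is more self-contained. For part (b), the paper proves both directions at once by first establishing the equivalence $d\xi=0 \Leftrightarrow \iota_v d\xi=0$ directly from the coordinate formula $d\xi(w_1,w_2)=w_1(\xi(w_2))-w_2(\xi(w_1))-\xi([w_1,w_2])$ and the integrability condition, and then identifying $\iota_v d\xi$ with $\mathcal{L}_v\xi$; you instead invoke part (a) for the converse direction, which is a cleaner shortcut once (a) is in hand.
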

\noindent
Notice that if  $v$ is a Killing vector field (i.e. $\mathcal{L}_{v}g=0$) then since $\mathcal{L}_{v}v=[v,v]=0$ we have
 $\mathcal{L}_{v}\xi=0$.

\begin{proof} For a), see proposition 2.2 in \cite{foliations}.

For b), notice that we have 
\begin{equation}
d\xi=0 \Leftrightarrow \iota_{v}d\xi=0
\end{equation}
 using  the formula $d\xi(w_1,w_2)=w_1(\xi(w_2))-w_2(\xi(w_1))
- \xi([w_1,w_2])$ together with the first condition in  proposition \ref{prop:integrationxi}. 
Further we have \begin{equation}
\mathcal{L}_{v}\xi = \iota_{v} d\xi + d\iota_{v} \xi = \iota_{v} d\xi
\end{equation}
 using Cartan's identity and the fact that $\iota_{v}\xi = 1$.
\end{proof}  
\noindent

\begin{remark}
The vector field $\left( -\mathcal{L}_{v}\xi\right)^{\sharp} $ is orthogonal to $v$. Indeed, the last computation in the proof of proposition \ref{prop:lvxi} shows that  $\mathcal{L}_{v}\xi =   \iota_{v} d\xi$, and therefore 
\begin{equation}
\iota_{v}\left(\mathcal{L}_{v}\xi\right) = 0\, .
\end{equation}
Consequently, if $H_7:=ker(\xi)$ defines a completely integrable distribution, then $\left(\mathcal{L}_{v}\xi\right)^{\sharp}$ is a vector field tangent to the leaves of the foliation.
\end{remark}

\begin{cor}
Let us assume that $H_7:=ker(\xi)$ defines a completely integrable distribution. Then the   foliation $\mathcal{F}_{\xi}$ is transversely orientable.
\end{cor}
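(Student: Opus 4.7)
The plan is to unwind the definition of transverse orientability and exhibit an explicit trivialization of the (rank one) normal bundle of the foliation. Recall that a codimension one foliation is transversely orientable precisely when its normal bundle $\nu := T\mathcal{M}_{8}/TH_{7}$ admits a nowhere vanishing section, equivalently, when its conormal bundle $\nu^{*}\subset T^{*}\mathcal{M}_{8}$, which is the annihilator $\mathrm{Ann}(TH_{7})$ of the distribution, is trivializable.

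First I would observe that the one-form $\xi = v^{\flat}$ is, by construction, a globally defined section of $T^{*}\mathcal{M}_{8}$ which annihilates $H_{7} = \ker(\xi)$ by definition, hence is a section of $\mathrm{Ann}(TH_{7}) = \nu^{*}$. Moreover $\xi$ is nowhere vanishing: this is immediate from the assumption $\|v_{p}\|=1$ of \eqref{unitnorm}, since $\xi_{p}(v_{p}) = g(v_{p},v_{p}) = 1 \neq 0$. Therefore $\xi$ is a nowhere vanishing global section of the rank one bundle $\nu^{*}$, which consequently is trivial. Equivalently, the vector field $v$ itself descends to a nowhere vanishing section $[v]$ of the normal bundle $\nu$, since $v$ is transverse to $H_{7}$ at every point (again because $\xi(v) = 1$).

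This gives a global trivialization of $\nu$, hence an orientation of $\nu$, which is precisely transverse orientability of $\mathcal{F}_{\xi}$. No obstacle is expected: the entire argument rests on the fact that the defining one-form $\xi$ is, by the hypothesis $\|v\|=1$, nowhere vanishing, so the normal bundle is not merely orientable but in fact trivializable.
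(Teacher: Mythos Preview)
Your proof is correct and follows essentially the same approach as the paper: both arguments observe that the globally defined vector field $v$ is transverse to the foliation (since $\xi(v)=1$), which trivializes the normal bundle and hence gives transverse orientability. Your version spells out the dual formulation via the conormal bundle and is slightly more detailed, but the underlying idea is identical.
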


\begin{proof} By definition, a foliation is transversely orientable if the normal bundle to the foliation is orientable. For a codimension one foliation 
this means exactly that there exist a vector field transverse to the foliation. This is the case in our situation, for  $v$ is a globally defined vector field transverse to $\mathcal{F}_{\xi}$.
\end{proof}

\begin{prop}
\label{prop:phinonsingular}
Let $\mathcal{M}_{8}$ be an eight-dimensional oriented manifold equipped with a non-degenerate intermediate structure  $\mathfrak{S}$ such that equation  (\ref{eq:integralxi}) holds. By proposition \ref{prop:integrationxi}, the kernel of $\xi\in\Omega^{1}\left(\mathcal{M}_{8}\right)$ defines an integrable distribution (\ref{eq:distribution1}), whose integral manifold passing through any point $p\in\mathcal{M}_{8}$ is denoted by $\mathcal{M}_{7\, p}\subset \mathcal{M}_{8}$. If we denote by 
\begin{equation}
i: \mathcal{M}_{7\, p}\hookrightarrow \mathcal{M}_{8}
\end{equation}
\noindent
the natural inclusion, then $i^{\ast}\phi_{3}$ is a non-degenerate  three-form on $\mathcal{M}_{7\, p}$.

\end{prop}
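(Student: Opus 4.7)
Non-degeneracy is a pointwise condition, so fix $q \in \mathcal{M}_{7\,p}$ and work at $q$. The basic observation is that the tangent space to the leaf at $q$ is exactly $T_q \mathcal{M}_{7\,p} = H_{7\,q} = \ker(\xi_q)$, and on $T_q\mathcal{M}_8 = \mathbb{R}v_q \oplus H_{7\,q}$ the form $\phi_{3\,q}$ already annihilates the $\mathbb{R}v_q$ factor by the degeneracy assumption $\iota_v \phi_3 = 0$. So pulling back to the leaf should lose exactly the kernel direction, leaving a non-degenerate form.

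First, I would take $w \in T_q\mathcal{M}_{7\,p} = H_{7\,q}$ with $\iota_w(i^*\phi_3) = 0$ and aim to show $w = 0$. Unpacking this hypothesis: for all $u_1,u_2 \in H_{7\,q}$, one has $\phi_{3\,q}(w,u_1,u_2) = 0$. I then upgrade this to a statement on the full tangent space $T_q\mathcal{M}_8$ as follows. Given arbitrary $\tilde u_1, \tilde u_2 \in T_q\mathcal{M}_8$, decompose them via the splitting $T_q\mathcal{M}_8 = \mathbb{R}v_q \oplus H_{7\,q}$ as $\tilde u_i = \lambda_i v_q + h_i$ with $h_i \in H_{7\,q}$. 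Expanding $\phi_{3\,q}(w, \tilde u_1, \tilde u_2)$ trilinearly, every term containing $v_q$ in one of its slots vanishes because $\iota_v \phi_3 = 0$, and the remaining term $\phi_{3\,q}(w,h_1,h_2)$ vanishes by hypothesis. Hence $\iota_w \phi_{3\,q} = 0$ as a form on $T_q\mathcal{M}_8$.

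By Definition \ref{def:nondeg} the kernel of $\phi_{3\,q}$ is exactly $\mathbb{R}v_q$, so $w = \lambda v_q$ for some $\lambda \in \mathbb{R}$. But $w \in H_{7\,q}$ as well, and $H_{7\,q} \cap \mathbb{R}v_q = \{0\}$ since $\xi_q(v_q) = g(v_q,v_q) = 1 \neq 0$ by the unit-norm assumption \eqref{unitnorm}. Therefore $w = 0$, and $i^*\phi_3$ is non-degenerate at $q$.

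I do not expect any real obstacle here: the result is essentially a linear-algebraic statement at each point, and the whole content is that the $v$-direction, which is precisely the kernel of $\phi_3$, is removed when one restricts to a leaf transverse to it. The integrability hypothesis \eqref{eq:integralxi} is used only to ensure that $\mathcal{M}_{7\,p}$ exists as a submanifold with $T_q\mathcal{M}_{7\,p} = H_{7\,q}$; the identification of tangent spaces with $H_{7\,q}$ is the one nontrivial input, and it is granted by Frobenius via proposition \ref{prop:integrationxi}.
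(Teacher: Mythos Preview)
Your proof is correct and follows essentially the same approach as the paper: both take $w$ in the leaf's tangent space with $\iota_w(i^*\phi_3)=0$, use the splitting $T_q\mathcal{M}_8 = \mathbb{R}v_q \oplus H_{7\,q}$ together with $\iota_v\phi_3=0$ to upgrade this to $\iota_w\phi_3=0$ on the full tangent space, invoke the kernel condition to get $w\in\mathbb{R}v_q$, and conclude $w=0$ from transversality. Your write-up is in fact slightly more explicit than the paper's in spelling out the upgrade step via the trilinear expansion.
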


\begin{proof} Fix $p\in\mathcal{M}_{8}$ and $w\in T_p\mathcal{M}_{7, p}$.
Suppose that $\iota_{w} (i^{\ast}\phi_{3})=0$, i.e. that $i^{\ast}(\iota_{w}\phi_{3})=0$.
 Since $v_p\in Ker(\phi_{3})$ and $T_p\mathcal{M}_{8, p}=T_p\mathcal{M}_{7, p}\oplus \mathbb{R}v_p$, we conclude that $\iota_{w} \phi_{3}=0$.
 Equation \eqref{eq:Kerphi3} implies that $w$ is a multiple of $v_p$. Since $w$ is tangent to ${M}_{7, p}$ while $v_p$ is transverse to it, we obtain $w=0$.
\end{proof} 

\noindent
To summarize, assuming (\ref{eq:integralxi}), $\mathcal{M}_{8}$ is foliated by seven-dimensional manifolds which are equipped with a non-degenerate three-form $i^{\ast}\phi_{3}$. 

\subsection{$G_{2}$-structure seven-manifolds in $\mathcal{M}_{8}$}
\label{sec:G2manifolds}

We are interested in connecting the seven-dimensional manifolds $\left\{\mathcal{M}_{7\, p}\, , \, p\in\mathcal{M}_{8}\right\}$ that form the leaves of the foliation to the seven dimensional manifolds that appear as internal spaces in off-shell $\mathcal{N}=1$ M-theory compactifications, which have $G_{2}$-structure. Therefore, it is natural to ask whether the three-form $i^{\ast}\phi_{3}$ that exists on every leaf may define a $G_{2}$-structure on the leaves. In other words, we want to know if the three forms $i^{\ast}\phi_{3}$ can be taken to be positive. Remarkably enough, since every positive form is non-degenerate, the previous construction is consistent with taking $\phi_{3}$ in such a way that $i^{\ast}\phi_{3}$ is a positive form. Of course, what we want to know is, given a vector field $v$ on $\mathcal{M}_{8}$, if there is any obstruction to choose $\phi_{3}$ in such a way that $i^{\ast}\phi_{3}$ is a positive three-form. The following proposition answers this question.

\begin{prop}
\label{prop:G2leaves}
Let $\mathcal{M}_{8}$ be an eight-dimensional oriented Riemannian spin manifold equipped with a transversely orientable, codimension one foliation $\mathcal{F}$. Then each leaf of the foliation admits a $G_{2}$-structure,  with the property that
  its associated
Riemannian metric is the pullback of the given Riemannian metric on $\mathcal{M}_{8}$.
\end{prop}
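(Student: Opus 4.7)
The plan is to verify, leaf by leaf, that the hypotheses of Proposition \ref{prop:topologicalreductions7noncompact} hold, and then to check that the resulting $G_{2}$-structure is compatible with the induced metric. Fix a leaf $L$ of $\mathcal{F}$ and equip it with the pullback metric $i^{\ast}g$, where $i\colon L\hookrightarrow\mathcal{M}_{8}$ is the inclusion. First I would verify that $L$ is orientable: since the foliation is transversely orientable of codimension one, the normal bundle $N\to L$ admits a nowhere-vanishing section and is hence a trivial real line bundle, so the given orientation on $\mathcal{M}_{8}$ together with this transverse orientation induces an orientation on $L$ via the canonical splitting $TM_{8}|_{L}\cong TL\oplus N$.

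Next I would show that $L$ is spin. Using the splitting $TM_{8}|_{L}\cong TL\oplus N$ and the multiplicativity of Stiefel-Whitney classes, together with $w(N)=1$ (trivial line bundle), one has $w_{i}(TL)=w_{i}(TM_{8}|_{L})$ for all $i$. Since $\mathcal{M}_{8}$ is orientable and spin, $w_{1}(TM_{8})=0$ and $w_{2}(TM_{8})=0$, and restriction of characteristic classes along $i$ gives $w_{1}(TL)=0=w_{2}(TL)$. Thus $L$ is a seven-dimensional oriented spin Riemannian manifold.

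With these structures in place, Proposition \ref{prop:topologicalreductions7noncompact} applies directly to $L$ and furnishes a topological $G_{2}$-structure on $L$. The key point for compatibility with the metric is that the construction in the proof of that proposition reduces the $\mathrm{Spin}(7)$-principal bundle, i.e., the spin lift of the orthonormal frame bundle associated to $i^{\ast}g$, to a $G_{2}^{c}$-subbundle using a unit-length global section of the rank-eight real spinor bundle (which exists for dimensional reasons, the Euler-class obstruction lying in $H^{8}(L)=0$). Since $G_{2}\subset SO(7)$, the associated Riemannian metric of the resulting $G_{2}$-structure is exactly the metric used to define the orthonormal frame bundle from which we started, namely $i^{\ast}g$.

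The only subtle point is the verification that the normal bundle of a codimension-one transversely orientable foliation is trivial and that Stiefel-Whitney classes commute with restriction to a leaf; both are standard but are the crucial steps allowing the transfer of the spin condition from $\mathcal{M}_{8}$ to $L$. I do not anticipate a genuine obstacle beyond this bookkeeping.
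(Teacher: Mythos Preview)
Your proposal is correct and follows essentially the same route as the paper: both arguments use transverse orientability to conclude that the rank-one normal bundle is trivial, transfer the vanishing of $w_1$ and $w_2$ from $\mathcal{M}_8$ to each leaf via the Whitney product formula (the paper cites Lawson--Michelsohn for this step while you spell it out), and then invoke Proposition~\ref{prop:topologicalreductions7noncompact} together with its proof to obtain a $G_2$-structure whose associated metric is $i^*g$. Your explanation of the metric compatibility is slightly more explicit than the paper's, but the argument is the same.
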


\begin{proof}
Since $\mathcal{F}$ is transversely orientable, the normal bundle $\mathcal{N}$ to $\mathcal{F}$ is trivial, and in particular orientable, that is, $w_{1}\left(\mathcal{N}\right) = 0$. In addition, $w_{2}\left(\mathcal{N}\right) = 0$ since $2>\mathrm{rank}\, \mathcal{N}$. Now, using \cite[Ch.II, proposition 2.15]{Spingeometry}
 we deduce that the leaves of $\mathcal{F}$ are spin manifolds. We finally conclude from proposition \ref{prop:topologicalreductions7noncompact} (and its proof) that on each leaf there exists topological $G_{2}$-structure with the above property.\end{proof} 

\noindent
Notice that proposition \ref{prop:G2leaves} is a mere existence result, and does not address the issue of whether
the $G_{2}$-structures vary smoothly\footnote{An example where this happens is the following. Let $N$ be an oriented seven-dimensional manifold endowed with a positive three-form $\Phi^0$, and $f\colon N\to N$ a diffeomorphism satisfying $f^*\Phi^0=\Phi^0$. Take the mapping torus $\mathcal{M}_{8}:= ([0,1]\times N)/\sim$, where the equivalence relation $\sim$ identifies $(0,p)$ with $(1,f(p))$ for all $p\in N$, together with the codimension one foliation given by the fibers of the projection $\mathcal{M}_{8}\to [0,1]/(0\sim 1)=S^1$. Notice that all fibers are diffeomorphic to $N$.} from leaf to leaf. When the $G_{2}$-structures do vary smoothly, we have the following. 
 
\begin{prop}
\label{prop:smooth}
Let $\mathcal{M}_{8}$ be an eight-dimensional oriented Riemannian spin manifold, equipped with a transversely orientable, codimension one foliation $\mathcal{F}$. Let $v$ be a unit vector field orthogonal to $\mathcal{F}$. 

If the positive three-forms on the leaves of  $\mathcal{F}$ 
mentioned in proposition \ref{prop:G2leaves} can be chosen so 
that they vary smoothly from leaf to leaf,  then $\mathcal{M}_{8}$ is equipped with a unique three-form $\phi_{3}\in\Omega^{3}\left(\mathcal{M}_{8}\right)$ such that $\iota_{v}\phi_{3} = 0$ and 
$\phi_3$ pulls back to the given positive three-forms on all leaves.
In other words, $\mathfrak{S}=\left( g,\phi_{3}\otimes v \right)$ is a $G_2$-intermediate structure on $\mathcal{M}_{8}$ such
that $H_7$ is integrable.
\end{prop}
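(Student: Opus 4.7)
The plan is to construct $\phi_3$ pointwise as the unique 3-form that restricts to the given positive three-form on each leaf while annihilating $v$, then verify smoothness and check the defining conditions of a $G_2$-intermediate structure in turn.

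First, at each $p\in\mathcal{M}_{8}$ the tangent space splits as $T_p\mathcal{M}_{8}=H_{7,p}\oplus\mathbb{R}v_p$, where $H_{7,p}$ is the tangent space at $p$ to the leaf $L_p$ of $\mathcal{F}$ through $p$. By hypothesis, a positive three-form $\Phi_p\in\Lambda^{3}H_{7,p}^{\ast}$ is given, and these forms vary smoothly in $p$ (in the sense that they come from a smooth section of $\Lambda^3 H_7^*$, where $H_7\subset T\mathcal{M}_8$ is the involutive subbundle tangent to $\mathcal{F}$). I would define $\phi_{3}|_p$ as the unique element of $\Lambda^{3}T_p^{\ast}\mathcal{M}_{8}$ whose restriction to $H_{7,p}$ is $\Phi_p$ and which satisfies $\iota_{v_p}\phi_{3}|_p=0$; explicitly, $\phi_{3}|_p(w_1,w_2,w_3):=\Phi_p(\pi w_1,\pi w_2,\pi w_3)$ where $\pi\colon T_p\mathcal{M}_{8}\to H_{7,p}$ is the orthogonal projection along $v$. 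Since $\pi$ and $v$ are smooth and the family $\Phi_p$ is smooth, $\phi_{3}$ is a smooth global 3-form. Uniqueness is immediate: any two candidates agree on $H_{7,p}$ (their pullbacks to $L_p$) and both annihilate $v_p$, hence they coincide on all of $T_p\mathcal{M}_{8}$.

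Next, I would verify that $\mathfrak{S}=(g,\phi_{3}\otimes v)$ is an intermediate structure in the sense of Definition \ref{def:intermediatemanifold}. The tensor $\phi_{3}\otimes v\in\Gamma(\Lambda^{3}T^{\ast}\mathcal{M}_{8}\otimes T\mathcal{M}_{8})$ is traceless precisely because $\iota_{v}\phi_{3}=0$, which holds by construction; setting $\tilde{\phi}$ to be its dual via the musical isomorphisms of $g$ supplies the remaining datum. To upgrade this to a $G_2$-intermediate structure (Definition \ref{def:ndinter}), I need the non-degeneracy condition $\mathrm{Ker}(\phi_{3}|_p)=\mathbb{R}v_p$ and the positivity of $\phi_{3}|_{H_{7,p}}$ with induced metric equal to $g|_{H_{7,p}}$. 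Positivity and the metric identification hold by hypothesis via Proposition \ref{prop:G2leaves}. For the kernel condition, note that if $\iota_w\phi_{3}|_p=0$ and $w=w_H+\lambda v_p$ with $w_H\in H_{7,p}$, then restricting to $H_{7,p}$ gives $\iota_{w_H}\Phi_p=0$, and non-degeneracy of the positive form $\Phi_p$ forces $w_H=0$.

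Finally, integrability of $H_{7}$ is automatic: $H_{7}$ is, by definition, the tangent distribution of the foliation $\mathcal{F}$, so $H_{7}=T\mathcal{F}$ is involutive. The main delicate point in the whole argument is the smoothness assertion, which is why the smoothness of the family $\{\Phi_p\}$ must be assumed as part of the hypothesis; once this is in place, the construction is essentially a fiberwise extension by zero in the $v$-direction and the remaining checks are linear-algebraic and local.
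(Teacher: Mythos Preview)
Your proof is correct and follows essentially the same approach as the paper: define $\phi_3$ pointwise as the unique extension of the leaf three-form that annihilates $v$, then invoke the smoothness hypothesis. Your version is in fact more thorough than the paper's, which stops after constructing $\phi_3$ without explicitly verifying the kernel condition $\mathrm{Ker}(\phi_3|_p)=\mathbb{R}v_p$ or the other ingredients of Definition~\ref{def:ndinter}.
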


\begin{remark}
It is not always possible to choose in a globally smooth way the three-forms on the leaves  mentioned in proposition \ref{prop:G2leaves}. Indeed, this is possible if{f} there exists 
a topological $Spin(7)$-structure on $\mathcal{M}_{8}$ whose associated Riemannian metric is the given one on $\mathcal{M}_{8}$.
This follows immediately from Thm. \ref{thm:Spin(7)fromG2} and Thm. \ref{prop:Spin(7)toG2}.
 
A conceptual explanation for the above failure is the following.
The $G_2$-structures on the leaves may not be unique topologically (i.e. up to continuous deformation). If the bundle of deformation classes (a discrete bundle over one dimensional the leaf space)
has monodromy,  it is not possible to choose the three-forms in a continuous way. We thank Dominic Joyce for pointing this out to us.
\end{remark}

\begin{proof}
We will prove the proposition by explicitly constructing the three-form $\phi_{3}$. 
Let us denote by $\mathcal{M}_{7\, p}$ the leaf of $\mathcal{F}$ passing through $p\in\mathcal{M}_{8}$.
Every leaf $\mathcal{M}_{7, p}$ is equipped with a positive three-form $\phi^{(0,p)}_{3}\in\Omega^{3}\left(\mathcal{M}_{7, p}\right)$. 
Let $(e_{1},\dots,e_{7})$ be any  basis of $T_{p}\mathcal{M}_{7, p}$. Now we define, for all $p\in\mathcal{M}_{8}$, $\phi_{3}|_{p}\in\Lambda^3 T^{\ast}_{p}\mathcal{M}_{8}$ as follows:
\begin{equation}
\label{eq:phi3M8}
\phi_{3}|_{p}\left(e_{i}, e_{j}, e_{k}\right) =  \phi^{(0,p)}_{3}\left(e_{i}, e_{j}, e_{k}\right)\, ,  \quad \phi_{3}|_{p}\left(e_{i}, e_{j}, v\right) = 0\, ,\quad i, j, k = 1,\dots, 7\,.
\end{equation}
\noindent
(Notice that $\phi_{3}|_{p}$ is independent of the choice of basis).
By construction, $\phi_{3}$ defined as in equation (\ref{eq:phi3M8}) is positive on the leaves, and since the positive forms $\{\phi^{(0,p)}_{3}\}$ vary smoothly from leaf to leaf, 
 $\phi_{3}$ is also smooth. Furthermore, by construction,  $\iota_{v}\phi_{3} = 0$. We conclude that $\phi_{3}$ is the desired three-form.
\end{proof}

Notice that the seven-dimensional manifolds do not depend on $\phi_{3}$ but only on $v$, and therefore they do not react to any changes in $\phi_{3}$. However, taking $\phi_{3}$ to be the three-form that induces a $G_{2}$-structure on the leaves is the sensible choice that allows to make contact with the rank eight bundle (\ref{decomposition_gl7_V}) introduced in \cite{Grana:2012zn}.  According to the decomposition (\ref{decomposition_gl7_V}), the complex four-form (\ref{eq:phi4}) that is constructed from the complex Weyl $Spin(8)$ spinor $\eta$  decomposes as in equation (\ref{eq:Omegabundledecomposition}). Due to the   theorem \ref{thm:Spin(7)fromG2} below, we precisely obtain an analogous decomposition here but now the rank eight vector bundle is the tangent bundle of an actual manifold $\mathcal{M}_{8}$, which is foliated by seven-dimensional $G_{2}$-manifolds. The tangent space of the leaves is a rank seven vector bundle analogous to $T_{7}$ in equation (\ref{eq:Omegabundledecomposition}). Furthermore, theorem \ref{thm:Spin(7)fromG2}  shows that ${\cal M}_8$ has a Spin(7)-structure. This theorem   
is a mild generalization of \cite[Prop. 11.4.10]{Joyce2007} (we reproduce the latter in the appendix as proposition \ref{prop:G2T2}).

\begin{thm}
\label{thm:Spin(7)fromG2} 
Let $\left(\mathcal{M}_{8},\mathfrak{S}=\left( g,\phi_{3}\otimes v \right)\right)$ be a $G_{2}$-intermediate manifold. Then, $\mathcal{M}_{8}$ has a $Spin(7)$-structure defined by the admissible four-form

\begin{equation}
\label{eq:Omegafromphi3}
\Omega = v^{\flat}\wedge  \phi_{3}+ \ast\left(v^{\flat}\wedge  \phi_{3}\right)\, .
\end{equation}
The Riemannian metric associated to $\Omega$ is exactly $g$.
\end{thm}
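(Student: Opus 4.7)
The proof is essentially a pointwise application of Lemma \ref{lem:phitoOmega} together with a smoothness argument. The strategy is to check that at every $p \in \mathcal{M}_8$ the hypotheses of that lemma are satisfied, so that $\Omega_p$ is admissible, and then invoke the characterization of $Spin(7)$-structures via admissible four-forms.

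First I would fix $p \in \mathcal{M}_8$ and set $V := T_p\mathcal{M}_8$, $H := H_{7,p} = v_p^\perp$, and use the restriction $\phi_3|_{H}$ as the three-form. Note that $v_p$ is nowhere zero by the unit norm assumption, so it is transverse to $H$; it induces on $H$ exactly the orientation used in definition \ref{def:ndinter}. By the $G_2$-intermediate hypothesis, $\phi_3|_{H}$ is a positive three-form on $H$ whose associated inner product equals $i^*g$. The unique inner product on $V$ extending $i^*g$ by declaring $v_p \perp H$ and $\|v_p\|=1$ is $g_p$ itself, because $v_p$ has unit norm and $H$ is by definition $v_p^\perp$. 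Therefore lemma \ref{lem:phitoOmega} applies and produces an admissible four-form
\begin{equation*}
\Omega_p = v_p^{\flat}\wedge \phi_3|_p + \ast\!\left(v_p^{\flat}\wedge \phi_3|_p\right) \in \Lambda^4 T_p^*\mathcal{M}_8,
\end{equation*}
with associated inner product equal to $g_p$ by remark \ref{rem:innonV}.

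Next I would observe that the construction is smooth in $p$: the vector field $v$, the three-form $\phi_3$, the metric $g$, and hence $v^{\flat}$ and the Hodge star are smooth, so the globally defined four-form $\Omega = v^{\flat}\wedge \phi_3 + \ast(v^{\flat}\wedge \phi_3) \in \Omega^4(\mathcal{M}_8)$ is admissible at every point. By the standard characterization recalled in appendix \ref{sec:Spin7appendix}, a globally defined admissible four-form is equivalent to a topological $Spin(7)$-structure on $\mathcal{M}_8$, and its associated Riemannian metric is computed pointwise, hence equals $g$ by the previous paragraph.

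No step presents a serious obstacle; the only subtlety is to verify cleanly that the ambient inner product $g$ coincides with the inner product $g'$ constructed in lemma \ref{lem:phitoOmega} by combining $i^*g$ with the orthonormality of $v$, which is immediate from $\|v\|=1$ and the definition $H_7 = \ker \xi = v^\perp$. Apart from that, the theorem is just a repackaging of lemma \ref{lem:phitoOmega} and remark \ref{rem:innonV} applied smoothly over $\mathcal{M}_8$.
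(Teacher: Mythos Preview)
Your proof is correct and follows essentially the same approach as the paper: fix $p$, apply Lemma~\ref{lem:phitoOmega} to $V=T_p\mathcal{M}_8$ and $H=(H_7)_p$, and invoke Remark~\ref{rem:innonV} for the metric statement. Your version is simply more explicit about verifying the hypotheses and about smoothness, but the underlying argument is identical.
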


\begin{proof}
Fix $p\in \mathcal{M}_{8}$. Applying lemma \ref{lem:phitoOmega} to the vector space $V:=T_p\mathcal{M}_{8}$ and subspace $H:=(H_7)_p$ 
shows that $\Omega|_p$ is an admissible form. We conclude that 
$\Omega$ is an admissible form on $\mathcal{M}_{8}$.
The final statement follows from  remark \ref{rem:innonV}.
\end{proof}

\noindent
The following    is a converse to theorem \ref{thm:Spin(7)fromG2}. 
 
\begin{thm}
\label{prop:Spin(7)toG2}
Suppose  $\mathcal{M}_{8}$ is an 
 oriented Riemannian spin manifold with 
a topological $Spin(7)$-structure (whose associated metric is the given one $g$), and denote by  $\Omega$ the corresponding admissible four-form.  
Let $\mathcal{F}$ be a transversely orientable, codimension one foliation on $\mathcal{M}_{8}$,
and let $v$ be a  vector field orthogonal to $\mathcal{F}$ with $||v||=1$.

Then, for any leaf $i \colon L\hookrightarrow \mathcal{M}_{8}$ of $\mathcal{F}$, the three-form $i^*(\iota_v\Omega)$ is positive,   varies smoothly from leaf to leaf, and
 the metric on $L$ associated to this positive three-form is the restriction of $g$. (In other words:  $\mathfrak{S}=\left( g,\iota_v\Omega\otimes v \right)$ is a $G_2$-intermediate structure on $\mathcal{M}_{8}$ such that $H_7$ is integrable.)
\end{thm}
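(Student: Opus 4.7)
The strategy is that this theorem is essentially a pointwise packaging of the linear algebra results already established in Lemma \ref{lem:anyv}, Remark \ref{rem:metricH}, and Proposition \ref{prop:bijection}. Fix a leaf $i\colon L\hookrightarrow \mathcal{M}_{8}$ and a point $p\in L$. Since $v$ is orthogonal to $\mathcal{F}$ and has unit length, $T_pL = v_p^{\perp}$. Applying Lemma \ref{lem:anyv} to $V=T_p\mathcal{M}_{8}$ with the admissible four-form $\Omega|_p$, the non-zero vector $v_p$, and the transverse hyperplane $H=T_pL$ yields immediately that $(\iota_{v_p}\Omega|_p)|_{T_pL}$ is a positive three-form on $T_pL$. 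This is precisely $i^*(\iota_v\Omega)$ at $p$. By Remark \ref{rem:metricH}, the inner product on $T_pL$ induced by this positive three-form is the restriction of $g|_p$, so the Riemannian metric on $L$ associated to $i^*(\iota_v\Omega)$ is $i^*g$.

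For the smoothness claim, observe that $\iota_v\Omega$ is a globally defined smooth three-form on $\mathcal{M}_{8}$, built from the smooth tensors $v$ and $\Omega$; its pullback to each leaf is therefore smooth, and by construction the entire family of positive three-forms on the leaves comes from restricting this single smooth ambient three-form, which is exactly what is meant by varying smoothly from leaf to leaf.

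Finally, to establish the parenthetical statement, I need to check that $\mathfrak{S}=(g,\iota_v\Omega\otimes v)$ satisfies Definition \ref{def:ndinter}, and that $H_7$ is integrable. The only non-obvious point is that this is a non-degenerate intermediate structure in the sense of Definition \ref{def:nondeg}, namely that $\ker\bigl((\iota_v\Omega)|_p\bigr)=\mathbb{R}v_p$. Clearly $v_p$ lies in the kernel, since $\iota_{v_p}\iota_{v_p}\Omega=0$. For the reverse inclusion, note that $(\iota_v\Omega)|_{v_p^{\perp}}$ is a positive three-form on $v_p^{\perp}$ by the first paragraph, and positive three-forms on a seven-dimensional space are non-degenerate in the sense of Definition \ref{def:nondegsymp} (the inclusion $\mathcal{S}_V\subset\mathcal{N}_V$ from Section \ref{sec:G2in7}). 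Hence any $w\in T_p\mathcal{M}_{8}$ in the kernel of $\iota_v\Omega$ must lie in $\mathbb{R}v_p$. The positivity condition on $\phi_3|_{(H_7)_p}$ with the correct induced metric is exactly what was verified in the first paragraph, so $\mathfrak{S}$ is a $G_2$-intermediate structure. Integrability of $H_7=\ker(v^\flat)$ is immediate: it coincides with the tangent distribution to the given foliation $\mathcal{F}$.

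There is no real obstacle in this proof; the theorem is a direct consequence of the pointwise linear algebra developed in Section \ref{sec:admpos}. The only thing one needs to be slightly careful about is the identification $T_pL=v_p^{\perp}$, which uses both transverse orientability (so that $v$ can be chosen as a globally defined unit vector field) and that $v$ was chosen orthogonal to $\mathcal{F}$.
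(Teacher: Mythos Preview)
Your proof is correct and follows essentially the same approach as the paper: apply Lemma \ref{lem:anyv} pointwise with $V=T_p\mathcal{M}_{8}$ and $H=(H_7)_p$ to obtain positivity, and invoke Remark \ref{rem:metricH} for the metric statement. You are in fact more thorough than the paper's own proof, which omits the explicit verification of smoothness and of the parenthetical claim about $\mathfrak{S}$ being a $G_2$-intermediate structure with $H_7$ integrable.
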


\begin{proof} Fix a leaf $L$.
 Applying lemma \ref{lem:anyv} to the vector space $V:=T_p\mathcal{M}_{8}$ and subspace $H:=(H_7)_p$, for all   $p\in L$, shows
 that $i^*(\iota_v\Omega)$ is a positive three-form on $L$.
The statement on the metric follows from Remark \ref{rem:metricH}. 
\end{proof}

For the sake of clarity, we describe in coordinates the four-form $\Omega$ obtained in theorem \ref{thm:Spin(7)fromG2}.
Fix $p\in\mathcal{M}_{8}$ and denote by $\mathcal{M}_{7\, p}$ the leaf of $H_7$ through $p$. There exists an orientation-preserving coordinate system $\left\{\mathcal{U}, {\bf x}_{8} = (x_{1},\dots,x_{8})\right\}$ about $p$ such that $v_p=\frac{\partial}{\partial x_{1}}|_{p}$ and
\begin{equation}
\left.\left(\frac{\partial}{\partial x_{1}},\dots,\frac{\partial}{\partial x_{8}}\right)\right|_{p}\, 
\end{equation}
\noindent
is an orthonormal basis of $T_{p}\mathcal{M}_{8}$.
 The metric $g$ at $p$ then can be written as
\begin{equation}
\label{eq:canmetriccoord}
g|_{p} = dx^{2}_{1}|_{p}+\cdots + dx^{2}_{8}|_{p}\,. 
\end{equation}
The three-form $\phi_3$ on $\mathcal{M}_{8}$ at $p$ has coordinate expression
\begin{equation} (\phi_3)|_p=\sum_{2\le i<j<k\le 8} h_{ijk}(dx_{i}|_p)\wedge (dx_{j}|_p)
\wedge (dx_{k}|_p)
\end{equation}
where $h_{ijk}$ are real numbers.
\noindent
On the other hand, ${\bf x}_{7} = \left(x_{2},\dots,x_{8}\right)$ is a coordinate system on $\mathcal{M}_{7\, p}$ around $p$ such that
\begin{equation}
\label{eq:orthonormal7}
\left.\left(\frac{\partial}{\partial x_{2}},\dots,\frac{\partial}{\partial x_{8}}\right)\right|_{p}\, 
\end{equation}
\noindent
is an orthonormal basis of $T_{p}\mathcal{M}_{7\, p}$. In particular,
\begin{equation}
dx^{2}_{2}|_{p}+\cdots + dx^{2}_{8}|_{p}\, ,
\end{equation}
\noindent
is the restricted metric 
on $\mathcal{M}_{7\, p}$ in the coordinate system (\ref{eq:orthonormal7}). Using that $v_p^{\flat} = dx_{1}|_p$ we finally obtain
\begin{equation} 
\Omega = dx_{1}|_p\wedge  \phi_{3}|_p + \ast\left(dx_{1}|_p\wedge  \phi_{3}\right)\,, 
\end{equation}
where the Hodge dual is compute w.r.t. the standard metric \eqref{eq:canmetriccoord}.

\begin{remark}\label{rem:noint}
Notice that Thm. \ref{thm:Spin(7)fromG2} does not assume that $H_7$ be integrable,and Thm. \ref{prop:Spin(7)toG2}
 does not use the integrability in an essential way either. The analogue of Thm. \ref{prop:Spin(7)toG2}   without the integrability assumption on $H_7$ reads:
 
 Suppose  $\mathcal{M}_{8}$ is an 
 oriented Riemannian spin manifold with 
a topological $Spin(7)$-structure (whose associated metric is the given one $g$), and denote by  $\Omega$ the corresponding admissible four-form.  
Let $H_7$ be transversely orientable, codimension one smooth distribution on $\mathcal{M}_{8}$,
and let $v$ be a  vector field orthogonal to $\mathcal{F}$ with $||v||=1$.
Then, at every $p\in \mathcal{M}_{8}$, the three-form $(\iota_v\Omega)|_{{H_7}_p}$ is positive,  varies smoothly with the point $p$, and  the metric on ${{H_7}_p}$ associated to this positive three-form is the restriction of $g$. (In other words:  $\mathfrak{S}=\left( g,\iota_v\Omega\otimes v \right)$ is a $G_2$-intermediate structure on $\mathcal{M}_{8}$.)
\end{remark}



\begin{remark}
Let $(\mathcal{M}_{8},g)$ be an   eight-dimensional oriented Riemannian spin manifold. Let $\mathcal{F}$ be a codimension-one, transversely orientable foliation, and denote by $v$ a unit vector field orthogonal to the foliation. Denote by $i^{\ast}g$ the restricted inner product on the each leaf $i\colon L\hookrightarrow \mathcal{F}$. Proposition \ref{prop:bijection} immediately implies that there is a bijection
\begin{align*}
 \{\text{admissible four-forms on $\mathcal{M}_{8}$ inducing $g$}\}
&\leftrightarrow \{\text{positive three-forms on the leaves of $\mathcal{F}$ inducing $i^*g$} \\
&\;\;\;\; \;\;\;\text{ and varying smoothly from leaf to leaf.}\}
\end{align*}
The purpose of theorem \ref{thm:Spin(7)fromG2}  and theorem \ref{prop:Spin(7)toG2}
above is to spell out this correspondence in terms of $Spin(7)$-structures and $G_2$-structures.

\end{remark}
 
\begin{cor}
\label{cor:admissible3manifold}
Let $(\mathcal{M}_{8},g)$ be an   eight-dimensional oriented Riemannian spin manifold. Let $\mathcal{F}$ be a codimension-one, transversely orientable, foliation, and denote by $v$ a unit vector field transverse to the foliation. Denote by $i^{\ast}g$ the restricted inner product on   each leaf $i\colon L\hookrightarrow \mathcal{F}$.

\begin{enumerate}

\item If $ v^{\flat}\wedge  \phi+ \ast\left(v^{\flat}\wedge  \phi\right)$ is an admissible four-form on $\mathcal{M}_{8}$ (not necessarily inducing the inner product $g$), then 
$i^{\ast}\phi$ is a positive three-form on every leaf $i: L\hookrightarrow\mathcal{M}_{8}$, varying smoothly from leaf to leaf.

\item $ v^{\flat}\wedge  \phi+ \ast\left(v^{\flat}\wedge  \phi\right)$ is an admissible four-form on $\mathcal{M}_{8}$ inducing $g$ if and only if $i^{\ast}\phi$ is a positive and smooth three-form on $\mathcal{F}$ inducing $i^{\ast}g$.
\end{enumerate}
\end{cor}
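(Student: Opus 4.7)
The plan is to deduce Corollary \ref{cor:admissible3manifold} by applying the pointwise linear-algebra Corollary \ref{cor:admissible3} at every point, using that all objects involved are smooth sections of smooth bundles. The content of the corollary is genuinely local, so the only work lies in verifying (a) that the pointwise hypotheses are met at each $p\in\mathcal{M}_{8}$, and (b) that smoothness survives upon taking restrictions/pullbacks.

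For part (1), I would fix a leaf $i\colon L\hookrightarrow\mathcal{M}_{8}$ and a point $p\in L$. Set $V:=T_p\mathcal{M}_{8}$ and $H:=T_pL=v_p^{\perp}=(H_7)_p$, which carries the orientation induced by the orientation on $\mathcal{M}_{8}$ and by $v_p$. Since $v^{\flat}\wedge\phi+\ast(v^{\flat}\wedge\phi)$ is assumed admissible at $p$ and its contraction with $v_p$ equals $\phi|_p$ (as noted in the proof of Corollary \ref{cor:admissible3}), this forces $\iota_{v_p}\phi|_p=0$ and $\ker(\phi|_p)=\mathbb{R}v_p$, putting us in the hypotheses of Corollary \ref{cor:admissible3}(1). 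That pointwise statement then yields that $\phi|_H=(i^{\ast}\phi)|_p$ is a positive three-form on $H=T_pL$. Since $p\in L$ was arbitrary, $i^{\ast}\phi$ is positive at each point of the leaf. Smoothness on each leaf is automatic from the smoothness of $\phi$ and of the inclusion $i$. Smoothness \emph{across} leaves follows from the fact that $\phi$ is a globally smooth three-form on $\mathcal{M}_{8}$ and the leaves form a smooth foliation, so the leafwise restrictions of $\phi$ assemble into the globally smooth datum $\phi$ itself on the foliated manifold.

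For part (2), I would proceed analogously, now invoking Corollary \ref{cor:admissible3}(2) pointwise. The ``only if'' direction follows by the same pointwise argument as in part (1), combined with the fact that Remark \ref{rem:metricH} guarantees that the metric on $T_pL$ induced by the positive three-form $(i^{\ast}\phi)|_p$ is the restriction of the metric on $V$ induced by the admissible four-form at $p$; if the latter is $g|_p$, then the former is $(i^{\ast}g)|_p$. For the ``if'' direction, suppose $i^{\ast}\phi$ is a smooth positive three-form on every leaf inducing $i^{\ast}g$. At each $p$, apply Corollary \ref{cor:admissible3}(2) to conclude that $v^{\flat}\wedge\phi+\ast(v^{\flat}\wedge\phi)|_p$ is admissible and induces $g|_p$. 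Smoothness of the global four-form $v^{\flat}\wedge\phi+\ast(v^{\flat}\wedge\phi)$ is clear since $v$, $\phi$ and the Hodge star (computed with the smooth metric $g$) are all smooth.

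The main potential obstacle is just bookkeeping: making sure the pointwise hypothesis $\ker(\phi|_p)=\mathbb{R}v_p$ needed by Corollary \ref{cor:admissible3} is in fact present in the manifold statement. As noted above, it is forced by the admissibility of $v^{\flat}\wedge\phi+\ast(v^{\flat}\wedge\phi)$ together with the identity $\iota_v(v^{\flat}\wedge\phi+\ast(v^{\flat}\wedge\phi))=\phi$, which is the ingredient already singled out in the proof of Corollary \ref{cor:admissible3}. Beyond that, the proof is a direct pointwise-to-global upgrade, so no substantive new argument is required.
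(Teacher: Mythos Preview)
Your proof is correct and follows exactly the paper's approach: apply Corollary~\ref{cor:admissible3} pointwise with $V=T_p\mathcal{M}_8$ and $H=(H_7)_p$ for each $p\in\mathcal{M}_8$. One small caveat: admissibility of $v^{\flat}\wedge\phi+\ast(v^{\flat}\wedge\phi)$ does not by itself force $\iota_v\phi=0$ (the four-form depends only on the horizontal part of $\phi$), but this is harmless since $i^*\phi$ only sees that horizontal part anyway, so the pointwise hypothesis of Corollary~\ref{cor:admissible3} is effectively met.
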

\begin{proof}
Apply corollary \ref{cor:admissible3} to the vector space $V:=T_p\mathcal{M}_{8}$ and subspace $H:=(H_7)_p$,
for all   $p\in \mathcal{M}_{8}$.
\end{proof}

Note that equation (\ref{eq:Omegafromphi3}) defining a real admissible four-form is of the form (\ref{eq:Omegabundledecomposition}), constructed in terms of a complex spinor $\eta = \eta_{1} + i\eta_{2}$. The only difference is that equation (\ref{eq:Omegabundledecomposition}) is obtained by a $7+1$ split of a rank-eight vector bundle, while equation (\ref{eq:Omegafromphi3}) is obtained from a seven-dimensional foliation of an eight-dimensional manifold. Therefore, equation (\ref{eq:Omegafromphi3}) can be thought as the \emph{geometrization} of equation (\ref{eq:Omegabundledecomposition}) in terms of manifolds and   tangent spaces, instead of a rank-eight vector bundle which does not correspond to the tangent space of any manifold.  We can connect both set-ups by writing $\Omega$ in terms of a real Weyl spinor $\chi\in\Gamma\left(\mathbb{S}^{\mathbb{R}+}_{8}\right)$, which exists since $\mathcal{M}_{8}$ has $Spin(7)$ structure: 

\begin{equation}
\label{eq:Omegachi}
\Omega = \chi^{T}\gamma_{(4)}\chi = v^{\flat}\wedge\phi_{3} + \ast\left( v^{\flat}\wedge\phi_{3}\right)\,. 
\end{equation} 

\noindent
However $\chi$ is not yet a \emph{supersymmetry spinor}, since it is a section the spin bundle $\mathbb{S}^{\mathbb{R}+}_{8}\to\mathcal{M}_{8}$ over $\mathcal{M}_{8}$, while  $\eta_{a}\, , a=1,2\, ,$ are sections of the spin bundle $\mathbb{S}^{\mathbb{R}}_{7}\to\mathcal{M}_{7}$. As we have explained, $\eta_{a}\, , a=1,2\, ,$ are $Spin(7)$ real spinors, which can be identified with real $Spin(8)$-Weyl spinors. As elements of a vector space, the previous identification is perfectly fine, but if we carry it out globally we obtain that $\eta_{a}$ is now a section of $\mathbb{S}^{\mathbb{R}+}_{8}\to\mathcal{M}_{7}$, that is, the base manifold is $\mathcal{M}_{7}$ instead of $\mathcal{M}_{8}$. Therefore, given the geometric structure defined on an intermediate manifold $\left(\mathcal{M}_{8},\mathfrak{S}\right)$, it is natural to identify one of the $\eta_{a}$ (let's say $\eta_1$)  with the pull-black of the real, Weyl spinor $\chi$ over $\mathcal{M}_{8}$, i.e.

\begin{equation}
\label{eq:chieta}
i^{\ast}\chi = \eta_{1}\, ,  
\end{equation}

\noindent
and take $\eta_{2} = 0$. Notice that $i^{\ast}\chi$ denotes a family of spinors, one spinor for each leaf, that is, a section of the vector bundle $\mathbb{S}^{\mathbb{R}+}_{8}\to\mathcal{M}_{7}$, where we denote by $\mathcal{M}_{7}$ a generic leaf. We have taken $\eta_{2} = 0$ in order to obtain a real four-form in equation (\ref{eq:phi4}), since for the particular intermediate structure that we are considering, we only obtain a real admissible four-form. Notice that we are implicitly identifying the $G_{2}$-structure leaves as internal spaces in M-theory compactifications down to four dimensions. We will see in section \ref{sec:generalintermediate} that when we do not consider $\tilde{\phi}$ to be the \emph{dual} of $\phi$ then we can obtain the real as well as the imaginary parts of the complex four-form constructed out of the complex spinor $\eta$, and therefore we obtain the general geometric situation as it appears from off-shell supersymmetry. Notice that (\ref{eq:chieta}) does not completely determine $\chi$ from $\eta_{1}$, as there are several different choices of $\chi$ such that equation (\ref{eq:chieta}) is satisfied. Loosely speaking, there is some sort of \emph{gauge freedom} in order to choose $\chi$. In order to fix the gauge, a natural condition is to impose
\begin{equation}
\label{eq:chiv}
\nabla_{v}\chi = 0\, .
\end{equation}

\noindent
Intuitively speaking, equation (\ref{eq:chiv}) is the \emph{covariant} version of the partial derivative being zero as a way of saying that there is no dependence on a particular coordinate, in this case the \emph{orthogonal} coordinate to the leaves, since $v$ is a globally defined coordinate in $\mathcal{M}_{8}$, perpendicular to the $G_{2}$-structure leaves that conform the foliation defined by $v$.


\subsection{Structure of the foliation}
\label{sec:G2leavesstructure}


In   section \ref{sec:sevenmanifolds} we have defined a codimension one distribution in $\mathcal{M}_{8}$ which is given by the kernel of the one-form $\xi$. The conditions under which this distribution is completely integrable are given in proposition \ref{prop:integrationxi}. In the case that any of those conditions is satisfied, the eight-dimensional manifold $\mathcal{M}_{8}$ is foliated by seven-dimensional manifolds $\mathcal{M}_{7\, p}\, ,\, p\in\mathcal{M}_{8}$. Each leaf $\mathcal{M}_{7\, p}$ admits a $G_{2}$- structure (see proposition \ref
{prop:G2leaves}).  
In some cases, as explained in section \ref{sec:G2manifolds}, 
the three-form $\phi_{3}$ can be chosen to give, using the pull-back of the natural inclusion $i: \mathcal{M}_{7\, p}\hookrightarrow\mathcal{M}_{8}$, the three-form that defines the $G_{2}$-structure on the leaves. The leaves however do not have to be neither diffeomorphic nor compact and in principle very little is known about them, aside from the fact that they are smooth seven-dimensional manifolds. The purpose of this section is thus to explore in some detail the geometry of the leaves, trying to characterize them as much as possible, in order to clarify the relation between the $G_{2}$-manifolds that are the leaves and the $Spin(7)$-manifold that is the total space $\mathcal{M}_{8}$. 

\begin{prop}
\label{prop:closedclosed}
Let $\mathcal{M}$ be a closed manifold\footnote{A closed manifold is a compact manifold without boundary.} equipped with a codimension one foliation $\mathcal{F}$ defined by a non-singular closed one-form $\xi$. Then there exists a transversal vector field, whose flow consists of diffeomorphisms preserving $\mathcal{F}$, \emph{i.e.}, mapping leaves into leaves.  
\end{prop}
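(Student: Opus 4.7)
The plan is to construct an explicit transversal vector field $v$ such that $\mathcal{L}_v \xi = 0$. This is sufficient: if $\mathcal{L}_v\xi=0$, then the flow $\phi_t$ of $v$ satisfies $\phi_t^*\xi=\xi$, hence $\phi_t$ preserves the distribution $\ker\xi$ and thus maps leaves of $\mathcal{F}$ to leaves.

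First, I would fix an auxiliary Riemannian metric $g$ on $\mathcal{M}$, which exists since every manifold admits one. Since $\xi$ is non-singular (nowhere vanishing), the function $|\xi|^2:=g(\xi^{\sharp},\xi^{\sharp})$ is a strictly positive smooth function on $\mathcal{M}$, and therefore the vector field
\begin{equation*}
v:=\frac{\xi^{\sharp}}{|\xi|^{2}}
\end{equation*}
is well-defined and smooth on all of $\mathcal{M}$. By construction $\xi(v)=g(\xi^{\sharp},\xi^{\sharp})/|\xi|^{2}=1$, so in particular $v$ is nowhere tangent to the leaves of $\mathcal{F}$, that is, $v$ is transversal.

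Next, I would use Cartan's magic formula together with the assumption that $\xi$ is closed to compute
\begin{equation*}
\mathcal{L}_{v}\xi \;=\; \iota_{v}\,d\xi \;+\; d(\iota_{v}\xi) \;=\; 0 \;+\; d(1) \;=\; 0.
\end{equation*}
Thus $\xi$ is invariant under the local flow of $v$, and therefore so is its kernel distribution. By Frobenius' theorem (or by the uniqueness of leaves through a point), the local diffeomorphisms in the flow of $v$ map leaves into leaves.

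Finally, to upgrade the local flow to a global one-parameter group of diffeomorphisms of $\mathcal{M}$, I would invoke the hypothesis that $\mathcal{M}$ is closed: a smooth vector field on a compact manifold without boundary is complete, so the flow $\phi_{t}$ of $v$ is defined for all $t\in\mathbb{R}$ and consists of diffeomorphisms of $\mathcal{M}$. The only conceptual point that requires care is the choice of $v$ with $\iota_v\xi$ constant (as opposed to merely nonzero); this is what makes the Cartan computation collapse to zero and is the key idea of the argument, but it is immediate once the normalization $|\xi|^{-2}$ is inserted. Hence no serious obstacle is expected.
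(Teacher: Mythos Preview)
Your proof is correct. The normalization $v=\xi^{\sharp}/|\xi|^{2}$ giving $\iota_v\xi=1$, combined with Cartan's formula and the closedness of $\xi$, yields $\mathcal{L}_v\xi=0$ exactly as you wrote; compactness then guarantees completeness of the flow, and invariance of $\xi$ under the flow immediately gives invariance of the foliation $\mathcal{F}=\ker\xi$.

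The paper itself does not prove this proposition but simply cites the literature (Reeb). Your argument is the standard direct construction and is in fact more informative than what the paper provides: you exhibit the transversal vector field explicitly rather than invoking an external reference. The essential idea---choosing $v$ so that $\iota_v\xi$ is \emph{constant} rather than merely nonzero, which is what makes the Cartan computation collapse---is precisely the point, and you have identified it clearly.
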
 

\begin{proof}See reference \cite{Reeb1}. 
\end{proof}

\begin{cor}
\label{cor:leavesdiff}
In the situation of proposition \ref{prop:closedclosed}, and assuming that $\mathcal{M}$ is connected, all the leaves are diffeomorphic. 
\end{cor}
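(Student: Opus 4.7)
The plan is to exploit the transversal, foliation-preserving vector field $X$ produced by proposition \ref{prop:closedclosed}, and show that its flow acts transitively on the set of leaves. Since $\mathcal{M}$ is closed (hence compact), the flow $\varphi_t$ of $X$ is complete, so $\varphi_t\colon \mathcal{M}\to\mathcal{M}$ is a well-defined diffeomorphism for every $t\in\mathbb{R}$. By the defining property of $X$, each $\varphi_t$ maps leaves of $\mathcal{F}$ to leaves of $\mathcal{F}$, and therefore, whenever $\varphi_t(L_1)\cap L_2\neq\emptyset$ for two leaves $L_1,L_2$, one has $\varphi_t(L_1)=L_2$ and $\varphi_t|_{L_1}\colon L_1\to L_2$ is a diffeomorphism.

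Next, I would introduce on $\mathcal{M}$ the equivalence relation
\begin{equation*}
p\sim q \quad\Longleftrightarrow\quad \exists\,t\in\mathbb{R} \text{ such that $q$ and }\varphi_t(p)\text{ lie in a common leaf of }\mathcal{F}.
\end{equation*}
Each equivalence class is a union of leaves. The key step is to show that these equivalence classes are open. For this, fix $p\in\mathcal{M}$ and pick a foliated chart $(\mathcal{U},(x_1,\dots,x_8))$ around $p$ in which the leaves are the level sets $\{x_1=\mathrm{const}\}$. Because $X$ is transversal to $\mathcal{F}$, the first component of $X$ in these coordinates is nonzero at $p$, so the map $t\mapsto x_1(\varphi_t(p))$ has nonzero derivative at $t=0$. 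Hence, shrinking $\mathcal{U}$, every leaf through $\mathcal{U}$ meets the flow-line of $X$ through $p$, and each point of $\mathcal{U}$ lies in the same leaf as some $\varphi_t(p)$. Thus $\mathcal{U}$ is contained in the equivalence class of $p$, proving openness.

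Openness of the equivalence classes, together with their being mutually disjoint with union $\mathcal{M}$, and the connectedness of $\mathcal{M}$, forces there to be a single equivalence class. Consequently, for any two leaves $L_1,L_2$ and any $p\in L_1$, $q\in L_2$, there exists $t\in\mathbb{R}$ with $\varphi_t(p)$ lying in the same leaf as $q$, i.e. $\varphi_t(p)\in L_2$. By the first paragraph, $\varphi_t|_{L_1}\colon L_1\to L_2$ is then a diffeomorphism, concluding the proof.

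The main obstacle is the openness step, which relies delicately on the transversality of $X$ in a foliated chart; the rest is a standard connectedness argument. There is nothing substantively harder, as completeness of the flow is granted by compactness and the foliation-preserving property is part of the hypothesis borrowed from proposition \ref{prop:closedclosed}.
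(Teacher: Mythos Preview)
Your argument is correct. The paper itself states the corollary without proof, treating it as an immediate consequence of proposition \ref{prop:closedclosed}; what you have written is precisely the standard way of unpacking that implication. The only point worth tightening in your write-up is the openness step: when you say ``shrinking $\mathcal{U}$, every leaf through $\mathcal{U}$ meets the flow-line of $X$ through $p$'', you should make explicit that for $|t|$ small the curve $\varphi_t(p)$ stays inside the chart $\mathcal{U}$, so that $t\mapsto x_1(\varphi_t(p))$ sweeps out an open interval of $x_1$-values, and then shrink $\mathcal{U}$ so that its $x_1$-range lies inside this interval. With that clarification the connectedness argument goes through exactly as you state.
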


\begin{prop}
\label{thm:cod1foliations}
Let $\mathcal{M}$ be a closed manifold equipped with a non-singular closed one-form $\xi$. Then, there exists a fibration $f:\mathcal{M}\to S^{1}$ over the circle. Moreover, let $\xi^{\prime} = f^{\ast} d\theta$. Then the fibration can be chosen such that $||\xi-\xi^{\prime} ||<\epsilon$, where $\epsilon>0$ is any prescribed number. 
\end{prop}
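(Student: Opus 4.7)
The statement is Tischler's theorem. My plan is to approximate $\xi$ by a closed one-form with rational periods, rescale to make the periods integral, and show that integration of the resulting form gives the desired circle-valued map, which by Ehresmann's theorem is a fibration.

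More concretely, first I would observe that since $\xi$ is closed it determines a cohomology class $[\xi]\in H^1(\mathcal{M};\mathbb{R})$, and by de Rham's theorem this class is determined by the period homomorphism $\Pi_\xi\colon H_1(\mathcal{M};\mathbb{Z})\to \mathbb{R}$ sending $[\gamma]$ to $\int_\gamma \xi$. Because $\mathcal{M}$ is closed, $H_1(\mathcal{M};\mathbb{Z})$ is finitely generated; dualizing, $H^1(\mathcal{M};\mathbb{R})$ is a finite-dimensional real vector space in which the classes whose period homomorphism takes values in $\mathbb{Q}$ form a dense $\mathbb{Q}$-subspace. By Hodge theory or simply by choosing harmonic (or any closed smooth) representatives of a basis of $H^1(\mathcal{M};\mathbb{R})$, I can thus construct a closed one-form $\xi'$ with arbitrarily small $\|\xi-\xi'\|$ (in any chosen norm, e.g.\ $C^0$) such that all periods of $\xi'$ lie in $\mathbb{Q}$, and after multiplying by a common denominator $N$ (which by rescaling $\epsilon$ I may absorb into the approximation) I may assume all periods lie in $\mathbb{Z}$.

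Next I would note that the non-vanishing condition on a one-form is open in the $C^0$ topology on sections of $T^*\mathcal{M}$; since $\mathcal{M}$ is compact, $\xi$ is bounded below in norm by some $\delta>0$, so for $\|\xi-\xi'\|<\min(\epsilon,\delta/2)$ the approximant $\xi'$ is also nowhere vanishing. I then define $f\colon\mathcal{M}\to S^1=\mathbb{R}/\mathbb{Z}$ by $f(p):=\int_{p_0}^p \xi' \pmod{\mathbb{Z}}$, which is well-defined independently of path because $\xi'$ is closed and has integral periods. By construction $f^*(d\theta)=\xi'$ (with $d\theta$ the standard generator of $H^1(S^1;\mathbb{Z})$), and because $\xi'$ is nowhere zero the map $f$ is a surjective submersion.

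Finally, I would invoke Ehresmann's fibration theorem: a proper surjective submersion from a manifold to a connected manifold is a locally trivial fibration. Since $\mathcal{M}$ is closed (hence $f$ is automatically proper), $f$ is a fibration $\mathcal{M}\to S^1$ with $f^*d\theta = \xi'$ and $\|\xi-\xi'\|<\epsilon$, as required. The only genuinely delicate step is the density argument used to approximate $\xi$ by a closed form with integral periods while keeping the approximant close to $\xi$; this is standard but requires choosing a smooth (not just cohomological) approximation, which is handled by picking a basis of de Rham representatives for $H^1(\mathcal{M};\mathbb{R})$ and approximating the real coordinates of $[\xi]$ in that basis by rationals.
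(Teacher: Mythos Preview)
Your proof is correct and is precisely the standard argument for Tischler's theorem. The paper itself does not give a proof; it simply cites Tischler's original article, so you have supplied what the paper omits.

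One small point to tighten: the phrase ``after multiplying by a common denominator $N$ (which by rescaling $\epsilon$ I may absorb into the approximation)'' is imprecise as written, since multiplying $\xi'$ by $N$ moves it far from $\xi$. The clean fix is not to rescale the form but the target circle: if $\xi'$ has all periods in $\tfrac{1}{N}\mathbb{Z}$, then $p\mapsto \int_{p_0}^{p}\xi' \pmod{\tfrac{1}{N}\mathbb{Z}}$ is a well-defined map to $\mathbb{R}/\tfrac{1}{N}\mathbb{Z}\cong S^{1}$, and it pulls back the natural angular form on that circle to $\xi'$ itself, which is the form you have arranged to satisfy $\|\xi-\xi'\|<\epsilon$. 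This is cosmetic; your overall strategy (rational approximation of periods, integration to a circle-valued map, Ehresmann for the fibration) is exactly right.
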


\begin{proof}See reference \cite{Tischler}.
\end{proof} 

\noindent
Here $||\cdot ||$ stands for the point-wise norm on forms, defined by a Riemannian metric on $\mathcal{M}$. What proposition \ref{thm:cod1foliations} states is that, if we denote by $\mathcal{F}$ 
 the foliation defined by the closed non-singular one-form $\xi$,
there  exists arbitrarily close (in the $C^0$-sense) foliations which are given by the fibers of a fibration $f: \mathcal{M}\to S^{1}$. 
 
\noindent
To summarize, if $\mathcal{M}$ is closed and $d\xi = 0$, then the behavior of the foliation is under control: all the leaves are diffeomorphic and in addition there is a fibration $f: \mathcal{M}\to S^{1}$ such that the fibres are arbitrarily close to the leaves of the foliation defined by $\xi$. A typical example of this given by the two-torus $S^1\times S^1$ with coordinates $\theta_1$ and $\theta_2$, and by $\xi=d\theta_1+\lambda d\theta_2$ for some real number $\lambda$: if $\lambda$ is a rational, $\mathcal{F}$ arises from a fibration, otherwise one can approximate $\mathcal{F}$ by taking the kernel of $d\theta_1+\lambda' d\theta_2$, where $\lambda'$ is a rational number very close to $\lambda$.

Proposition \ref{thm:cod1foliations} applies to closed manifolds, that is, compact manifolds without boundary. We want, however, to consider the situation where $\mathcal{M}_{8}$ may have a boundary, since in some M-theory/F-theory applications we expect to find manifolds with boundary. There is indeed a result concerning the case of compact manifolds with boundary.

\begin{prop}
\label{thm:cod1foliationsboundary} Let $\mathcal{F}$ be a codimension one, $C^{1}$, transversely oriented foliation on a compact manifold $\mathcal{M}$ (possibly with boundary) with a compact leaf $ {L}$ such that $H^{1}\left( {L},\mathbb{R}\right) = 0$. Then, all the leaves of $\mathcal{F}$ are diffeomorphic with $ {L}$, and the leaves of $\mathcal{F}$ are the fibers of a fibration of $\mathcal{M}$ over $S^{1}$ or the interval $I$. We assume here that the boundary of $\mathcal{M}$ is non-empty, then $\partial\mathcal{M}$ is a union of leaves of $\mathcal{F}$. 
\end{prop}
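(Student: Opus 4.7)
The plan is to derive this as a global version of the Reeb stability theorem, using the compact leaf $L$ with $H^1(L,\mathbb{R})=0$ as a seed from which to spread a product structure across all of $\mathcal{M}$. The crucial input is the classical local Reeb stability statement: a compact leaf $L$ of a transversely orientable codimension one foliation with $H^1(L,\mathbb{R})=0$ admits a saturated tubular neighborhood $U \cong L \times (-\varepsilon,\varepsilon)$ (or a half-neighborhood $L \times [0,\varepsilon)$ if $L \subset \partial\mathcal{M}$) on which $\mathcal{F}|_U$ is the product foliation by the slices $L \times \{t\}$. The vanishing of $H^1(L,\mathbb{R})$ forces the linear holonomy representation $\pi_1(L) \to \mathrm{Diff}(\mathbb{R},0)$ to be trivial to first order in a way that, combined with transverse orientability, kills the full holonomy and yields the product model.

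First, I would define the subset
\begin{equation*}
S := \{\, p \in \mathcal{M} : \text{the leaf } L_p \text{ through } p \text{ is compact and diffeomorphic to } L \,\}.
\end{equation*}
By local Reeb stability $S$ is non-empty (it contains $L$) and open: if $p \in S$ then $H^1(L_p,\mathbb{R}) \cong H^1(L,\mathbb{R})=0$, so the same stability theorem applies at $L_p$ and gives a saturated product neighborhood all of whose leaves are diffeomorphic to $L_p$, hence to $L$.

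Second, I would show $S$ is closed, which is the main technical point. Take $p_n \in S$ with $p_n \to p_\infty$, and let $L_\infty$ be the leaf through $p_\infty$. Using the transverse vector field coming from transverse orientability, one slides a small transverse arc through $p_\infty$ into the product neighborhoods of the nearby $L_{p_n}$, which are all diffeomorphic to $L$; by compactness of $\mathcal{M}$ and a standard completeness-of-holonomy argument, $L_\infty$ is itself compact and its holonomy is conjugate to a limit of trivial holonomies, hence trivial. Applying local Reeb stability at $L_\infty$ then identifies $L_\infty$ with $L$ and shows $p_\infty \in S$. Since $\mathcal{M}$ is connected (as in \cite{Reeb1,Tischler}, one reduces to the connected component of $L$), $S=\mathcal{M}$, so every leaf is compact and diffeomorphic to $L$.

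Finally, I would build the fibration by taking the leaf space $Q := \mathcal{M}/\mathcal{F}$. The local product models show that $Q$ is a Hausdorff one-dimensional $C^1$ manifold, possibly with boundary; since $\mathcal{M}$ is compact, $Q$ is compact and connected, hence diffeomorphic to $S^1$ or to a closed interval $I$. The natural projection $\mathcal{M} \to Q$ is the desired fibration with fiber $L$. If $\partial\mathcal{M} \neq \emptyset$, then because each boundary point lies in a half-neighborhood of the form $L \times [0,\varepsilon)$ produced above, the boundary is automatically a union of leaves, and the base $Q$ must be $I$ with $\partial\mathcal{M}$ mapping to $\partial I$. The hard part throughout is the closedness of $S$: it is the place where compactness of limit leaves and triviality of their holonomy have to be extracted from the hypothesis on $H^1(L,\mathbb{R})$, and without transverse orientability or the cohomological condition the argument breaks down (a Reeb component could intrude).
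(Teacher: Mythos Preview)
The paper does not prove this proposition at all; it simply cites Thurston's paper (``See theorem 1 in reference \cite{Thurston}''). Your proposal, by contrast, sketches the actual argument behind Thurston's generalized Reeb stability theorem, so in that sense you have done more than the paper.

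Your outline is the correct strategy: use local Reeb stability to show that the set $S$ of leaves diffeomorphic to $L$ is open, argue it is closed, conclude $S=\mathcal{M}$, and read off the fibration from the leaf space. One point deserves more care, however. You write that $H^1(L,\mathbb{R})=0$ forces the holonomy to be ``trivial to first order in a way that\dots kills the full holonomy.'' This is exactly where Thurston's contribution lies, and it is not as automatic as your phrasing suggests. For $C^1$ foliations the passage from trivial \emph{linear} holonomy (which is what $H^1(L,\mathbb{R})=0$ gives, since any homomorphism $\pi_1(L)\to\mathbb{R}$ must vanish) to trivial \emph{germinal} holonomy requires a genuine argument about $C^1$ diffeomorphisms of the interval with derivative $1$ at the fixed point; this is where the $C^1$ hypothesis is essential and where the result can fail for $C^0$ foliations. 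Your closedness argument is also somewhat compressed---the compactness of the limit leaf $L_\infty$ is not free and typically uses that nearby leaves are all copies of $L$ together with a volume or covering argument---but the overall architecture is right.
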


\begin{proof}See theorem 1 in reference \cite{Thurston}.
\end{proof}  
\noindent
In order to determine if $\mathcal{M}_{8}$ is an admissible internal space in F-theory, since we already know that in some cases, see theorem \ref{thm:Spin(7)fromG2}, it is a $Spin(7)$-structure manifold, we have to study if it is elliptically fibered. That is a difficult question to answer using just with the geometric data contained in the particular case of intermediate structure that we are considering. We will see in section \ref{sec:generalintermediate} that if we consider a more general intermediate structure then it is more natural to obtain that $\mathcal{M}_{8}$ is elliptically fibered. In any case, the following construction may shed some light on this issue.

\begin{itemize}

\item We start with a $G_{2}$-intermediate manifold $\left(\mathcal{M}_{8},\mathfrak{S}\right)$, where $\mathfrak{S}=\left( g, \phi_{3}\otimes v\right)$. Therefore by proposition \ref{prop:integrationxi}
 we know that the following holds:

\begin{equation}
\label{eq:xicondition}
d\xi = \beta\wedge\xi\,  \qquad \text{ for some }\beta\in \Omega^{1}\left(\mathcal{M}_{8}\right)\, ,
\end{equation}

\noindent
where $\xi = \iota_{v}g$. The properties of the foliation depend now exclusively on $\xi$ and we have thus two possibilities, namely

\begin{enumerate}

\item{ $d\xi = 0$.
In this case then, depending on the particular properties of $\mathcal{M}_{8}$ we can be able to apply corollary \ref{cor:leavesdiff} or propositions \ref{thm:cod1foliations} and \ref{thm:cod1foliationsboundary}. If $\mathcal{M}_{8}$ is taken to be a closed manifold, using proposition \ref{thm:cod1foliations}, we conclude that there exists a fibration over the circle

\begin{equation}
\label{eq:S1fibration8}
f:\mathcal{M}_{8}\to S^{1}\, ,
\end{equation}

\noindent
with seven-dimensional fibres which are arbitrarily close to the leaves of the foliation defined by $\xi$. We already concluded that the leaves are seven-dimensional manifolds of $G_{2}$-structure. However, we are interested in having a $G_{2}$-structure defined also on the fibres of the fibration (\ref{eq:S1fibration8}), which are the seven dimensional  internal manifolds appearing in off-shell $\mathcal{N}=1$ supersymmetric  M-theory compactifications. Indeed, the following proposition holds.

\begin{prop}
\label{prop:fibresG2}
Let $\mathcal{M}_{8}$ be an eight-dimensional oriented closed manifold equipped with an $G_{2}$-intermediate structure $\mathfrak{S} = \left( g, \phi_{3}\otimes v\right)$ such that $d\xi = 0$, where $\xi = v^{\flat}$. Then $\mathcal{M}_{8}$ admits a fibration over the circle $S^{1}$
\begin{equation}
f:\mathcal{M}_{8}\to S^{1}\, ,
\end{equation}

\noindent
whose fibres are seven dimensional manifolds with $G_{2}$-structure, varying smoothly from fiber to fiber. 
\end{prop}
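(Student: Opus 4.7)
The plan is to combine Tischler's theorem with the $Spin(7)$-structure produced in Theorem \ref{thm:Spin(7)fromG2}. First, since $\xi$ is nowhere vanishing and closed, Proposition \ref{thm:cod1foliations} yields a fibration $f:\mathcal{M}_{8}\to S^{1}$ whose pullback $\xi' = f^{\ast} d\theta$ can be chosen arbitrarily close to $\xi$. The level sets $F_\theta = f^{-1}(\theta)$ form a codimension-one, transversely orientable foliation $\mathcal{F}'$ of $\mathcal{M}_{8}$, whose transverse direction is globally trivialized by the unit vector field $v' := (\xi')^{\sharp}/\|(\xi')^{\sharp}\|$ (well-defined since $\xi'$ is nowhere vanishing on a compact manifold, provided it is chosen sufficiently close to $\xi$ in the $C^{0}$-norm).

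Next, applying Theorem \ref{thm:Spin(7)fromG2} to the given $G_{2}$-intermediate structure $\mathfrak{S}$ furnishes an admissible four-form
\[
\Omega = \xi \wedge \phi_{3} + \ast\left(\xi \wedge \phi_{3}\right),
\]
so that $(\mathcal{M}_{8},g)$ carries a topological $Spin(7)$-structure whose associated metric is $g$. I then apply Theorem \ref{prop:Spin(7)toG2} to the \emph{new} foliation $\mathcal{F}'$ together with its unit normal $v'$: for every fiber $i_\theta \colon F_\theta \hookrightarrow \mathcal{M}_{8}$ the pull-back $i_\theta^{\ast}(\iota_{v'}\Omega)$ is a positive three-form on $F_\theta$ whose associated Riemannian metric is the restriction $i_\theta^{\ast} g$. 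This is precisely a $G_{2}$-structure on $F_\theta$. Smooth variation from fiber to fiber is built into Theorem \ref{prop:Spin(7)toG2}: the form $\iota_{v'}\Omega$ is a globally defined smooth three-form on $\mathcal{M}_{8}$, and $f$ is a smooth submersion, so its restrictions to the fibers depend smoothly on $\theta$.

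The one point needing care is that the fibers admit a $G_{2}$-structure at all. Each $F_\theta$ is a closed, orientable seven-manifold (the orientation being induced by the ambient orientation together with the transverse direction $v'$), and its normal bundle is trivial of rank one, so $\omega_1(F_\theta)$ and $\omega_2(F_\theta)$ coincide with the pull-backs of $\omega_1(\mathcal{M}_{8})$ and $\omega_2(\mathcal{M}_{8})$, both of which vanish by the spin hypothesis on $\mathcal{M}_{8}$; hence Proposition \ref{prop:topologicalreductions7noncompact} already guarantees existence, while Theorem \ref{prop:Spin(7)toG2} makes an explicit, smoothly varying choice. I do not anticipate any serious obstacle; the main conceptual step is to replace the original foliation $\mathrm{Ker}(\xi)$ (whose leaves could be dense and non-diffeomorphic) by the nearby fibration $f$ supplied by Tischler, before invoking the $Spin(7)\to G_{2}$ correspondence.
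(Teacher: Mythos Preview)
Your proof is correct and follows essentially the same strategy as the paper: invoke Tischler's theorem to obtain the fibration, use Theorem~\ref{thm:Spin(7)fromG2} to produce the admissible four-form $\Omega$, and then restrict $\iota_{\bullet}\Omega$ to the fibers. The one notable difference is the choice of transverse vector. You work with the unit normal $v'=(\xi')^{\sharp}/\|(\xi')^{\sharp}\|$ to the new foliation $\mathcal{F}'$ and invoke Theorem~\ref{prop:Spin(7)toG2}; the paper instead keeps the \emph{original} $v$ (which for small $\epsilon$ is merely transverse, not orthogonal, to $\ker\xi'$) and applies Lemma~\ref{lem:anyv} directly. Your choice actually buys a little more: the metric associated to your positive three-form on each fiber is exactly $i_\theta^*g$, whereas the paper's construction does not guarantee this --- indeed the remark immediately following the proposition points out that with their choice the induced metric is usually \emph{not} the pullback of $g$. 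Your closing paragraph on Stiefel--Whitney classes is correct but superfluous, since Theorem~\ref{prop:Spin(7)toG2} already furnishes the explicit positive three-form on each fiber.
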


\begin{proof} 
By proposition \ref{thm:cod1foliations}, for any  $\epsilon>0$ there exists a fibration $f:\mathcal{M}_{8}\to S^{1}$ over the circle such that $||\xi-\xi^{\prime} ||<\epsilon$, where $\xi^{\prime} := f^{\ast} d\theta$. Since $v$ is orthogonal to $ker(\xi)$, we  can choose $\epsilon$ so small that 
$v$ is transverse to $ker(\xi')$. By theorem \ref{thm:Spin(7)fromG2} there is an admissible four-form $\Omega$ on $\mathcal{M}_{8}$. Applying at every point 
proposition \ref{lem:anyv} to $\Omega$ we see that  $(\iota_v\Omega)|_{ker(\xi')}$ is a positive three-form. Thus we obtain a positive three-form on the fibers of $f$, which clearly varies smoothly from fiber to fiber.
\end{proof} 
\begin{remark}
In proposition \ref{prop:fibresG2}, the metric on the fibers associated to the positive three-forms on the fibers is usually not the pullback of the metric $g$ on   $\mathcal{M}_{8}$. 
\end{remark}

\item{ $d\xi$ no-where vanishing, which implies that $\beta$ is globally defined and nowhere vanishing. This means in particular that $\xi$ and $\beta$ are nowhere parallel.

Since $\beta$ and $\xi$ are never parallel, $i^{\ast}\beta$ is a non-singular, no-where vanishing one-form on every leaf $\mathcal{M}_{7\, p}\, ,\, p\in\mathcal{M}_{8}$ of the foliation. Using equation (\ref{eq:xicondition}) twice we see that  
\begin{equation}
d\beta\wedge\xi = d\beta\wedge\xi- \beta\wedge \underbrace{d\xi}_{\beta\wedge\xi} =d(\beta\wedge\xi)= d(d\xi)=
 0\, ,
\end{equation}

\noindent
and thus
\begin{equation}
d\beta = \gamma\wedge \xi\,\qquad \text{ for some } \gamma\in\Omega^{1}\left(\mathcal{M}_{8}\right)\, ,
\end{equation}

\noindent
which in turn implies
\begin{equation}
\label{eq:iastbeta}
di^{\ast}\beta = 0\, .
\end{equation}

\noindent
Therefore we conclude that the foliation defined by $i^*\beta$ naturally satisfies the conditions of proposition \ref{prop:closedclosed}, corollary \ref{cor:leavesdiff} and theorem \ref{thm:cod1foliations} on those leaves $\mathcal{M}_{7\, p}$ that are closed, if any. The situation is then the following: $\mathcal{M}_{8}$ is foliated by   $G_{2}$-structure manifolds, which in principle do not need to be diffeomorphic nor compact. Each leaf, that is, each $G_{2}$-structure manifold, can in turn be foliated by six dimensional leaves. Since (\ref{eq:iastbeta}) says that the one-form defining such foliation is automatically closed, we deduce that if the $G_{2}$-structure manifold is closed then it is a fibration over $S^{1}$. The global structure of the eight-dimensional manifolds $\mathcal{M}_{8}$ cannot be determined yet, as the discussion above allows for many different, more or less complicated, possibilities.}

\item{$d\xi$ not identically zero but neither no-where vanishing. Therefore, $d\xi$ can be zero at some points (or sets) in $\mathcal{M}_{8}$. Here we find a mixed situation. In those points belonging to the support of $d\xi$\footnote{Not taking the corresponding closure.}, $\beta$ is non-vanishing and not parallel to $\xi$, and therefore the discussion in point 2 applies. At the points where $d\xi$ is zero, $\beta$ is zero over the corresponding leave, and therefore no further foliation can be defined. Hence, we find that $\mathcal{M}_{8}$ is foliated by $G_{2}$-structure seven-dimensional manifolds and that some of them, determined by the points where $d\xi$ is non-zero, are in turn foliated in terms of six-dimensional manifolds. }
}
\end{enumerate}

\end{itemize}


\subsection{$\mathcal{M}_{8}$ as a $S^{1}$ principal bundle}
\label{sec:G2manifoldscircle}


In sections \ref {sec:sevenmanifolds}, \ref{sec:G2manifolds}
 and \ref{sec:G2leavesstructure} we considered the dimension seven distribution $H_7$  (see \eqref{eq:distribution1})
 of vectors perpendicular to the vector field $v$, and we assumed that it be integrable.  
In this section we explore another natural possibility, namely, the dimension one distribution

\begin{equation}
H_{1 p} = \left\{ w_{p}\in T_{p}\mathcal{M}_{8}\,\, |\,\, w_{p} = \lambda v_{p}\, , \quad \lambda \in \mathbb{R}\right\}\, , \quad p \in\mathcal{M}_{8}\, .
\end{equation}



\noindent
Note that the dimension one distribution $H_{1} = \left\{ H_{1 p}\, ,\,\, p\in \mathcal{M}_{8}\right\}$ is automatically integrable, the leaves being simply the integral curves of $v\in\mathfrak{X}\left( \mathcal{M}\right)$. 
  There is a one-to-one correspondence between complete vector fields on a differentiable manifold and smooth actions of $\mathbb{R}$ on $\mathcal{M}$, which is given in terms of the standard exponential map
\begin{eqnarray}
\psi : \mathbb{R} &\mapsto &\mathrm{Diff}\left(\mathcal{M}\right)\nonumber\\
t &\mapsto & \psi_{t} = e^{t v}\,, 
\end{eqnarray}

\noindent
and differentiation
\begin{equation}
\psi\mapsto v_{p} =\left. \frac{d\psi_{t}(p)}{dt}\right|_{t = 0}\, , \qquad p\in\mathcal{M}\, .
\end{equation}

\noindent
If the orbits closes up after a fixed time, say $2\pi$, then we actually have a $S^{1}$-action, or equivalently, a $U(1)$-action, on $\mathcal{M}$, namely
\begin{eqnarray}
\psi_{S^1} : S^1 &\mapsto &\mathrm{Diff}\left(\mathcal{M}\right)\nonumber\\
e^{i\theta} &\mapsto & \psi_{\theta} = e^{\theta v}\,.
\end{eqnarray}

\noindent
We will now use the following theorem to give a sufficient condition for $v$ to define a $S^1$ fibration on $\mathcal{M}_{8}$. 

\begin{prop}
\label{thm:actionproperfree}
Let $G$ be a Lie group and $\psi_{G} : G\to\mathrm{Diff}\left(\mathcal{M}\right)$ a free and proper Lie-group action on a differentiable manifold $\mathcal{M}$. Then $\mathcal{M}/G$ admits a unique differentiable structure such that $\mathcal{M}\to\mathcal{M}/G$ is a principal bundle with fibre $G$. In addition, $\mathrm{dim}\left(\mathcal{M}/G\right) = \mathrm{dim}\left(\mathcal{M}\right) - \mathrm{dim}\left( G\right)$.
\end{prop}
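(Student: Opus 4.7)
The plan is to establish this via the quotient manifold theorem (the Godement criterion), which is a standard result but worth outlining. First I would endow $\mathcal{M}/G$ with the quotient topology, giving a surjective continuous map $\pi\colon\mathcal{M}\to\mathcal{M}/G$. I would then verify the basic point-set properties: $\pi$ is open (since $\pi^{-1}(\pi(U))=\bigcup_{g\in G}\psi_g(U)$ is a union of open sets when $U$ is open), $\mathcal{M}/G$ inherits second-countability from $\mathcal{M}$, and crucially $\mathcal{M}/G$ is Hausdorff. Hausdorffness is where properness enters: the orbit equivalence relation $R=\{(p,q):q\in G\cdot p\}\subset\mathcal{M}\times\mathcal{M}$ is the image of the proper map $G\times\mathcal{M}\to\mathcal{M}\times\mathcal{M}$, $(g,p)\mapsto(\psi_g(p),p)$, hence $R$ is closed; combined with openness of $\pi$ this yields Hausdorffness of the quotient.

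Next I would construct the smooth structure. The freeness of the action combined with smoothness implies that for each $p\in\mathcal{M}$ the orbit map $G\to G\cdot p$, $g\mapsto\psi_g(p)$, is an injective immersion, and properness upgrades it to an embedding, making every orbit a closed embedded submanifold of $\mathcal{M}$ of dimension $\dim G$. The central analytic step is the slice theorem: at each $p\in\mathcal{M}$, one can produce a submanifold $S\subset\mathcal{M}$ through $p$, of dimension $\dim\mathcal{M}-\dim G$, transverse to the orbit and small enough that the map
\begin{equation}
G\times S\to\mathcal{M},\qquad (g,s)\mapsto\psi_g(s)
\end{equation}
is a diffeomorphism onto a $G$-invariant open neighbourhood of the orbit of $p$. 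This is the main obstacle; its proof uses the constant-rank theorem to build $S$ transverse to $T_p(G\cdot p)$, then the inverse function theorem locally, and finally properness (via a standard argument eliminating accumulation of orbits near $p$) to extend the local diffeomorphism to a neighbourhood of the whole orbit.

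Given the slice, I would define the smooth structure on $\mathcal{M}/G$ by declaring the composites $S\hookrightarrow\mathcal{M}\xrightarrow{\pi}\mathcal{M}/G$ to be diffeomorphisms onto open subsets: the slice theorem guarantees these are bijections onto their images, and transition maps between overlapping slices are smooth because they factor through the smooth $G$-action. With this atlas, $\pi$ becomes a smooth submersion, and the map $G\times S\to\pi^{-1}(\pi(S))$ constructed above is precisely a local trivialisation of $\pi$ as a principal $G$-bundle, with the $G$-action on the total space corresponding to left multiplication on the fibre $G$. Uniqueness of the smooth structure is then forced by the requirement that $\pi$ be a smooth submersion, since two such structures would induce smooth identity maps between each other. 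Finally the dimension statement $\dim(\mathcal{M}/G)=\dim\mathcal{M}-\dim G$ reads off directly from the dimension of the slice $S$.
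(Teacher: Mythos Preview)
Your outline is correct and follows the standard proof via the Godement criterion and the slice theorem. The paper, however, does not give any argument at all: its proof consists solely of the citation ``See theorem 5.119 in \cite{JMLee}'' (Lee's \emph{Introduction to Smooth Manifolds}). So your approach is not so much different from the paper's as it is an actual proof where the paper offers none; what you have written is essentially the content behind the cited theorem, and could serve as a self-contained replacement for the bare reference.
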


\begin{proof}
See theorem 5.119 in \cite{JMLee}.
\end{proof} 

\begin{remark}\label{rem:freeprop}
A free action of $S^1$ on a manifold $\mathcal{M}$ automatically defines an $S^1$-principal bundle structure on $\mathcal{M}$. This follows from proposition \ref{thm:actionproperfree}, since an action of a compact group is always proper.
\end{remark}

\noindent
The following theorem and its proof are analog to theorem \ref{thm:Spin(7)fromG2}. It essentially says that a positive three-form on the base manifold of an $S^1$-bundle can be pulled back to the total space, and gives rise to an admissible four-form form there.
 
\begin{thm}\label{prop:basetoM8}
Let $\mathcal{M}_{8}$ be an eight-dimensional   manifold equipped with 
a free action of $S^{1}$, whose infinitesimal  generator we denote by $v$.

If $\mathcal{M}_{8}/S^1$ is endowed with a topological $G_2$-structure, then there exists a topological $Spin(7)$-structure on $\mathcal{M}_{8}$, such that
the projection $\pi\colon  \mathcal{M}_{8} \to \mathcal{M}_{8}/S^1$ is a Riemannian submersion\footnote{This means that at every point $p$,  $(d_p\pi)|_{{H_7}_p}\colon {H_7}_p\to T_{\pi(p)}(\mathcal{M}_{8}/S^1)$ is an isometry.} w.r.t. the metrics induced by the topological structures.
\end{thm}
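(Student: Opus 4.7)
Since $S^{1}$ is compact, the free action is automatically proper, so by proposition \ref{thm:actionproperfree} (see also remark \ref{rem:freeprop}) the projection $\pi\colon\mathcal{M}_{8}\to\mathcal{M}_{8}/S^{1}=:B$ is a principal $S^{1}$-bundle, and the infinitesimal generator $v$ is a nowhere-vanishing vertical vector field. Denote by $\phi_{B}\in\Omega^{3}(B)$ the positive three-form defining the $G_{2}$-structure on $B$ and by $g_{B}$ the associated Riemannian metric on $B$.

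First, I would pick an $S^{1}$-principal connection on $\pi$, obtained by averaging any linear connection over the compact structure group $S^{1}$. This yields a smooth horizontal distribution $H_{7}\subset T\mathcal{M}_{8}$ complementary to $\mathbb{R}v$, such that at every point $p$ the restriction $d_{p}\pi|_{(H_{7})_{p}}\colon (H_{7})_{p}\to T_{\pi(p)}B$ is a linear isomorphism. I would then define a Riemannian metric $g$ on $\mathcal{M}_{8}$ by decreeing that $v$ has unit norm and is orthogonal to $H_{7}$, and that $g|_{H_{7}\times H_{7}}$ is the pullback of $g_{B}$ under $d\pi|_{H_{7}}$. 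By construction $\pi$ is then a Riemannian submersion from $(\mathcal{M}_{8},g)$ to $(B,g_{B})$.

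Next, I would set $\phi_{3}:=\pi^{\ast}\phi_{B}\in\Omega^{3}(\mathcal{M}_{8})$ (note that $\iota_{v}\phi_{3}=0$, since $v$ is vertical) and define
\begin{equation*}
\Omega := v^{\flat}\wedge\phi_{3} + \ast\bigl(v^{\flat}\wedge\phi_{3}\bigr),
\end{equation*}
where $\flat$ and the Hodge star are taken with respect to $g$. At any $p\in\mathcal{M}_{8}$, the three-form $\phi_{3}|_{(H_{7})_{p}}$ corresponds, under the linear isometry $d_{p}\pi|_{(H_{7})_{p}}$, to the positive three-form $\phi_{B}|_{\pi(p)}$; hence $\phi_{3}|_{(H_{7})_{p}}$ is itself a positive three-form on $(H_{7})_{p}$ whose associated inner product is exactly the restriction of $g$. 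Applying lemma \ref{lem:phitoOmega} to $V=T_{p}\mathcal{M}_{8}$, $H=(H_{7})_{p}$ and the unit transverse vector $v_{p}$, one concludes that $\Omega|_{p}$ is admissible. Therefore $\Omega$ is a globally defined admissible four-form on $\mathcal{M}_{8}$ and provides the desired topological $Spin(7)$-structure. By remark \ref{rem:innonV} the Riemannian metric induced by $\Omega$ coincides with $g$, so $\pi$ is a Riemannian submersion with respect to the metrics associated with the two topological structures, as required.

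The only technical point is the initial choice of a smooth horizontal distribution, which is resolved by the standard averaging argument available because $S^{1}$ is compact. Once $H_{7}$ is fixed, everything else is a pointwise application of the linear algebra developed in section \ref{sec:admpos}, completely parallel to the proof of theorem \ref{thm:Spin(7)fromG2}, with the vertical direction spanned by $v$ playing the role that it played in the leaf-orthogonal setting there.
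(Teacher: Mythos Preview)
Your argument is correct and follows essentially the same route as the paper: choose a horizontal distribution transverse to $v$, define the metric $g$ on $\mathcal{M}_{8}$ making $\pi$ a Riemannian submersion with $\|v\|=1$ and $v\perp H_{7}$, pull back the positive three-form, form $\Omega=v^{\flat}\wedge\phi_{3}+\ast(v^{\flat}\wedge\phi_{3})$, and apply lemma \ref{lem:phitoOmega} and remark \ref{rem:innonV} pointwise (equivalently, invoke theorem \ref{thm:Spin(7)fromG2}). The only minor difference is that you insist on an $S^{1}$-invariant horizontal distribution via averaging, whereas the paper takes an arbitrary transverse codimension-one distribution and relegates $S^{1}$-invariance to a subsequent remark; neither choice affects the validity of the argument.
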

\begin{proof}
Let $\psi_3$ be a positive three-form on the seven-dimensional manifold $\mathcal{M}_{8}/S^1$. 
Choose a codimension one distribution $H$ on $\mathcal{M}_{8}$ transverse to $v$. Endow $\mathcal{M}_{8}$ with the unique Riemannian metric $g$ that makes $\pi$ into a Riemannian submersion (w.r.t. the metric on $\mathcal{M}_{8}/S^1$  induced by $\psi_3$)
and which satisfies $||v||=1$, $v\perp H$. Define $\phi_3:=\pi^*\psi_3$. 
Then

\begin{equation*}
\Omega = v^{\flat}\wedge  \phi_3+ \ast\left(v^{\flat}\wedge  \phi_3\right)\, ,
\end{equation*}

\noindent
is an admissible four-form on $\mathcal{M}_{8}$, whose
associated metric is exactly $g$.
This follows applying lemma \ref{lem:phitoOmega} to the vector space $V:=T_p\mathcal{M}_{8}$ and subspace $H:=H_p$, for all $p\in \mathcal{M}_{8}$, and by remark \ref{rem:innonV} (or, which is the same, it follows from theorem \ref{thm:Spin(7)fromG2}).
\end{proof}

\begin{remark}
The admissible four-form in theorem \ref{prop:basetoM8} can always be chosen to be $S^1$-invariant. Indeed  in the proof of theorem \ref{prop:basetoM8} we can always choose the distribution $H$ to be $S^1$-invariant (take $H$ to be the kernel of a connection one-form for the $S^1$-principal bundle). Then the metric $g$ will be $S^1$-invariant. As $\Omega$ is constructed out of $v$, the pullback form $\phi_3$, and $g$, it follows that $\Omega$ is  $S^1$-invariant.
\end{remark}

The next theorem is  a converse to theorem \ref{prop:basetoM8}.

 \begin{thm}
Let $(\mathcal{M}_{8},g)$ be an 
 oriented Riemannian spin manifold with 
a topological $Spin(7)$-structure (whose associated metric is  $g$), and denote by  $\Omega$ the corresponding admissible four-form. 
 Assume a free action of $S^{1}$ on $\mathcal{M}_{8}$ such that  $\Omega$ is $S^1$-invariant.

Then the base $\mathcal{M}_{8}/S^{1}$ is endowed with a canonical positive three-form $ \psi_3$. 
In particular,  $\mathcal{M}_{8}/S^{1}$ has a canonical $G_{2}$-structure. Further, 
the projection $\pi\colon  \mathcal{M}_{8} \to \mathcal{M}_{8}/S^1$ is a Riemannian submersion w.r.t. the metrics induced by the topological structures.
\end{thm}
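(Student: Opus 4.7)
The plan is to exploit the $S^1$-invariance of $\Omega$ together with the pointwise linear-algebra results from Section \ref{sec:admpos} to construct $\psi_3$ as a basic form on $\mathcal{M}_8$ that descends to the quotient.

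First, since $S^1$ is compact the free action is automatically proper, so by Remark \ref{rem:freeprop} the projection $\pi \colon \mathcal{M}_8 \to \mathcal{M}_8/S^1$ is a principal $S^1$-bundle onto a smooth seven-manifold. The infinitesimal generator $v$ is nowhere-vanishing. Because the metric $g$ is canonically associated to $\Omega$ (naturality of the construction sending an admissible four-form to its metric), the $S^1$-invariance of $\Omega$ forces $g$ to be $S^1$-invariant; hence $v$ is a Killing vector field and $\|v\|$ is constant along each orbit. Setting $\tilde v := v/\|v\|$ produces an $S^1$-invariant unit vector field tangent to the fibres, satisfying $[v,\tilde v]=0$, and the orthogonal distribution $H_7 := \tilde v^{\perp}$ is an $S^1$-invariant horizontal distribution for $\pi$.

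I would then endow $\mathcal{M}_8/S^1$ with the Riemannian metric $g_B$ uniquely determined by requiring $d_p\pi|_{H_{7,p}} \colon H_{7,p} \to T_{\pi(p)}(\mathcal{M}_8/S^1)$ to be a linear isometry for every $p$; this is well defined thanks to the $S^1$-invariance of $g$ and of $H_7$. Orient $\mathcal{M}_8/S^1$ so that this map is orientation-preserving, where $H_{7,p}$ carries the orientation induced from the ambient orientation on $\mathcal{M}_8$ and from the transverse vector $\tilde v$. By construction $\pi$ is then a Riemannian submersion.

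Next, consider the three-form $\alpha := \iota_{\tilde v}\Omega$ on $\mathcal{M}_8$. Clearly $\iota_{\tilde v}\alpha = 0$, so $\alpha$ is horizontal. Moreover $\mathcal{L}_v \alpha = \iota_{[v,\tilde v]}\Omega + \iota_{\tilde v}\mathcal{L}_v\Omega = 0$, using $[v,\tilde v]=0$ and $\mathcal{L}_v\Omega=0$; hence $\alpha$ is $S^1$-invariant. Consequently $\alpha$ is a basic form on the principal $S^1$-bundle $\pi$, and there exists a unique $\psi_3 \in \Omega^3(\mathcal{M}_8/S^1)$ with $\pi^*\psi_3 = \alpha$. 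At every point $p$, Lemma \ref{lem:anyv} and Remark \ref{rem:metricH} applied to $V = T_p\mathcal{M}_8$, the unit vector $\tilde v_p$, and the subspace $H = H_{7,p}$ show that $\alpha|_{H_{7,p}}$ is a positive three-form on $H_{7,p}$ whose associated inner product is $g|_{H_{7,p}}$. Transferring along the orientation-preserving isometry $d_p\pi|_{H_{7,p}}$, this means $\psi_3|_{\pi(p)}$ is a positive three-form on $T_{\pi(p)}(\mathcal{M}_8/S^1)$ inducing the metric $g_B|_{\pi(p)}$. Positivity at every point of the base yields the desired topological $G_2$-structure.

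The main subtlety is not conceptual but rather the bookkeeping around which metric, orientation, and normalization of the vertical vector field is used at each stage: the linear-algebra inputs (Lemma \ref{lem:anyv} and Remark \ref{rem:metricH}) are cleanest for a unit vector field, whereas the tautological generator $v$ of the $S^1$-action need not have unit norm. Once $\tilde v$ is singled out by the Killing property and $H_7$ is declared horizontal, the basic form descent is automatic, and the whole construction is the fibrewise inverse of the one in Theorem \ref{prop:basetoM8}, consistently with Proposition \ref{prop:bijection} at the linear-algebra level.
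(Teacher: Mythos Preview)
Your proof is correct and follows essentially the same strategy as the paper's: show that the contracted form $\iota_v\Omega$ (suitably interpreted) is basic, descend it to the quotient, and invoke Lemma~\ref{lem:anyv} and Remark~\ref{rem:metricH} pointwise to verify positivity and the metric compatibility. The only noteworthy difference is that you explicitly normalize the generator to $\tilde v = v/\|v\|$ before contracting, whereas the paper works with $v$ directly and appeals to Remark~\ref{rem:noint}; since Remark~\ref{rem:metricH} literally requires a unit vector for the metric statement, your version is in fact a bit more careful on this point.
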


 
\begin{proof} Denote by $v$ the infinitesimal  generator of the  $S^1$-action.
  Since the $S^1$ action preserves $\Omega$, it preserve also the metric $g$ associated to $\Omega$. Hence it preserves the dimension seven distribution $H_7$ orthogonal to $v$. As obviously $\mathcal{L}_{v}v=0$,
 it follows that the $S^1$ action preserves $(\iota_v\Omega)|_{H_7}$. 
 Notice that by 
remark \ref{rem:noint},
  at every point $p$, the form $(\iota_v\Omega)|_{{H_7}_p}$ is positive, and the metric on ${{H_7}_p}$ associated to this positive three-form is the restriction of $g$.
 Further, the $S^1$ action preserves the unique\footnote{The construction of $\phi_3$  is exactly the construction carried out in 
 proposition \ref{prop:smooth}.}   three form $\phi_3\in \Omega^3(\mathcal{M}_{8})$ which annihilates $v$ and restricts to $(\iota_v\Omega)|_{H_7}$ on $H_7$.

 Since   $\mathcal{L}_v\phi_3=0$ and   $\iota_v\phi_3=0$, there exists a unique three-form $ \psi_3 $ on $\mathcal{M}_{8}/S^{1}$  such that $\pi^*(\psi_3)=\phi_{3}$. At every $p\in \mathcal{M}_{8}$, the isomorphism $(d_p\pi)|_{{H_7}_p} \colon {H_7}_p\to T_{\pi(p)}(\mathcal{M}_{8}/S^{1})$ identifies $\phi_3|_{{H_7}_p}$ with $ \psi_3|_{\pi(p)} $. Since the former is a positive three-form, the latter also is. Further,
  $(d_p\pi)|_{{H_7}_p}$ identifies the metric associated to $\phi_3|_{{H_7}_p}$ with the metric associated $ \psi_3|_{\pi(p)}$. 
  \end{proof}

\section{General intermediate manifolds and elliptic fibrations}
\label{sec:generalintermediate}


In this section we are going to consider a more general choice of intermediate structure $\mathfrak{S}=\left( g, \phi,\tilde{\phi}\right)$ than in section \ref{sec:nd}. In particular, we are not going to take $\tilde{\phi}$ as the dual of $\phi$. Nevertheless, we will still make a particular choice for $\phi$ and $\tilde{\phi}$. Indeed, 
we will take them\footnote{Here we abuse notation and denote 
the dual of $\tilde{\phi}\in\Gamma\left(\Lambda^3 T\mathcal{M}_{8}\otimes T^{\ast}\mathcal{M}_{8}\right)_{0}$ with respect to the metric, which is a section of $ \left(\Lambda^3 T^{\ast}\mathcal{M}_{8}\otimes T\mathcal{M}_{8}\right)_{0}$, by the same symbol $\tilde{\phi}$.} to be non-degenerate (see definition \ref{def:nondeg}):

\begin{equation}
\phi = \phi_{3}\otimes v\, , \qquad \tilde{\phi} = \tilde{\phi}_{3}\otimes \tilde{v}\,.
\end{equation}

\noindent
Notice that in particular, as it happened in section \ref{sec:nd}, the trace-zero condition reads

\begin{equation}
\iota_{v}\phi_{3} = 0\, , \qquad \iota_{\tilde{v}}\tilde{\phi}_{3} = 0\, .
\end{equation}

\noindent
In order to ease the presentation, let us define from now on

\begin{equation}
\phi^{1}_{3} = \phi_{3}\, , \qquad \phi^{2}_{3}=\tilde{\phi}_{3}\, ,\qquad v_{1} = v\, , \qquad v_{2} = \tilde{v}\, .
\end{equation}

\noindent
Having two non-degenerate intermediate structures on a manifold $\mathcal{M}_{8}$ gives rise to many geometrical interesting situations. This section is intended to give the reader just a first glance at the kind of geometric structures that appear in the presence of two non-degenerate structures. We will prove in a moment a general result about the reduction of the topological structure group of the generalized bundle $T\mathcal{M}_{8}\oplus T^* \mathcal{M}_{8}$ over $\mathcal{M}_{8}$. More importantly, in subsections \ref{sec:regularelliptic} and \ref{sec:singularities} we explain how two non-degenerate structures can be intimately related to regular as well as singular elliptic fibrations.

Let us assume that each of $\left(\phi^{a}_{3}, v_{a}\right)\, , a=1,2\, ,$  defines a $G_{2}$-intermediate structure (see definition \ref{def:ndinter}).
Then by theorem \ref{thm:Spin(7)fromG2} there is a $Spin(7)$ structure on $\mathcal{M}_{8}$ associated to each of the admissible four-forms

\begin{equation}
\Omega^{a} = v^{\flat}_{a}\wedge \phi^{a}_{3} + \ast\left(v^{\flat}_{a}\wedge \phi^{a}_{3}\right)\, ,
\end{equation}

\noindent
which in turn implies the existence of the corresponding $Spin(8)$ spinor $\chi_{a}\in\Gamma\left(S^{\mathbb{R}+}_{8}\right)$ on $\mathcal{M}_{8}$. We arrive thus at the following result. 

\begin{prop}
\label{prop:generalizedSpin7}
Let $\left(\mathcal{M}_{8},\mathfrak{S} = \left(g, \phi^{1}_{3}\otimes v_{1}, \phi^{2}_{3}\otimes v_{2}\right)\right)$ be an intermediate manifold such that $\left( g,\phi^a_{3}\otimes v_a \right), a = 1,2\, ,$ is a $G_{2}$-intermediate structure. Then, the structure group of the generalized bundle $\mathbb{E} = T\mathcal{M}_{8}\oplus T^{\ast}\mathcal{M}_{8}$ admits a reduction from $\mathbb{R}^{\ast}\times Spin(8,8)$ to $Spin(7)\times Spin(7)$, that is, it admits a generalized $Spin(7)$-structure.
\end{prop}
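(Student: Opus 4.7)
The plan is to reduce the statement to Theorem \ref{thm:Spin(7)fromG2} together with the standard construction of generalised $\mathrm{Spin}(7)\times\mathrm{Spin}(7)$ structures in the sense of Witt \cite{2005math......2443W,2006CMaPh.265..275W}.

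First, I would apply Theorem \ref{thm:Spin(7)fromG2} to each $G_2$-intermediate structure $(g,\phi_3^a\otimes v_a)$ separately, for $a=1,2$. This produces two admissible four-forms
\begin{equation*}
\Omega^{a} \;=\; v_a^{\flat}\wedge\phi_3^{a} \;+\; \ast\!\left(v_a^{\flat}\wedge\phi_3^{a}\right),
\end{equation*}
each defining a topological $\mathrm{Spin}(7)$-structure on $T\mathcal{M}_8$. The crucial point, built into the theorem, is that the Riemannian metric associated to both $\Omega^{a}$ is the \emph{same} metric $g$ of the intermediate structure. Equivalently, one obtains two global sections $\chi_1,\chi_2\in\Gamma(S_8^{\mathbb{R}+})$ of the real positive-chirality spin bundle of $(\mathcal{M}_8,g)$, each with pointwise stabiliser isomorphic to $\mathrm{Spin}(7)\subset\mathrm{Spin}(8)$.

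Second, I would invoke the standard dictionary between pairs of ordinary $\mathrm{Spin}(7)$-structures sharing a common metric and generalised structures on $\mathbb{E}=T\mathcal{M}_8\oplus T^{\ast}\mathcal{M}_8$. The choice of $g$ induces an orthogonal splitting $\mathbb{E}=C_{+}\oplus C_{-}$ into the $\pm 1$-eigenbundles of the natural $O(8,8)$ pairing, each isomorphic to $T\mathcal{M}_8$. The structure group $\mathbb{R}^{\ast}\times\mathrm{Spin}(8,8)$ then contains a natural block-diagonal $\mathrm{Spin}(8)\times\mathrm{Spin}(8)$ acting on $C_{+}\oplus C_{-}$; specialising to the stabilisers of $\chi_1$ on $C_{+}$ and of $\chi_2$ on $C_{-}$ produces a $\mathrm{Spin}(7)\times\mathrm{Spin}(7)$ reduction, while the $\mathbb{R}^{\ast}$ factor accommodates the conformal/$B$-field freedom inherent to the generalised picture.

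The main obstacle is not the existence of the two $\mathrm{Spin}(7)$-structures on $T\mathcal{M}_8$, which is already handled by Theorem \ref{thm:Spin(7)fromG2}, but rather checking that the above prescription really assembles into a globally well-defined topological reduction of $\mathbb{E}$ with structure group $\mathrm{Spin}(7)\times\mathrm{Spin}(7)$ inside $\mathbb{R}^{\ast}\times\mathrm{Spin}(8,8)$. This is precisely the content of Witt's generalised $\mathrm{Spin}(7)\times\mathrm{Spin}(7)$ construction, so once two ordinary $\mathrm{Spin}(7)$-structures compatible with the same metric $g$ are produced, the proposition follows by direct appeal to \cite{2005math......2443W,2006CMaPh.265..275W} applied to the pair $(\chi_1,\chi_2)$.
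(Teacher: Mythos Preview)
Your proposal is correct and follows essentially the same route as the paper: apply Theorem~\ref{thm:Spin(7)fromG2} twice to obtain two $Spin(7)$-structures (equivalently, two Majorana--Weyl spinors $\chi_1,\chi_2$) compatible with the \emph{same} metric $g$, and then invoke Witt's characterization of generalized $Spin(7)$-structures to assemble them into a $Spin(7)\times Spin(7)$ reduction of $\mathbb{E}$. The paper's proof is slightly more explicit in that it quotes Theorem~5.1 of \cite{2006CMaPh.265..275W} and records the specific data $(g,\,B=0,\,f=0,\,\chi_1,\chi_2)$ singled out by that theorem, whereas you phrase the same step in terms of the $C_+\oplus C_-$ splitting; these are two descriptions of the same construction.
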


\begin{proof} Thanks to theorem \ref{thm:Spin(7)fromG2}, each of the $G_{2}$-intermediate structures $\left( g,\phi^a_{3}\otimes v_a \right), a = 1,2\, ,$ defines a topological $Spin(7)$-structure on $\mathcal{M}_{8}$. Let us denote by $\chi_{a}\in\Gamma\left(S^{\mathbb{R}+}_{8}\right)$ the corresponding, globally defined, spinors. By theorem 5.1 in reference \cite{2006CMaPh.265..275W} we obtain a generalized $Spin(7)$-structure, uniquely determined by

\begin{enumerate}

\item The orientation of $\mathcal{M}_{8}$.

\item The metric $g$ present in the intermediate structure.

\item The vanishing two-form $B=0$.

\item The vanishing scalar function $f=0$.

\item Two Majorana-Weyl spinors $\chi_{a}\in\Gamma\left(S^{\mathbb{R}+}_{8}\right)\, , a = 1,2\,$ such that

\begin{equation}
\rho = \chi_{1}\otimes\chi_{2}\, ,
\end{equation}

\noindent
is an invariant $Spin(7)\times Spin(7)$ spinor.
\end{enumerate}

\end{proof} 

\begin{remark}
Theorem 5.1 in reference \cite{2006CMaPh.265..275W} states that a generalized $Spin(7)$-structure on an eight-dimensional manifold is equivalent to the following data:

\begin{enumerate}

\item An orientation.

\item A metric $g$.

\item A two-form $B$.

\item A scalar function $f$.

\item Two Majorana-Weyl spinors $\chi_{a}\, , a = 1,2\,$ either of the same or different chirality such that 

\begin{equation}
\rho = e^{-f} e^{B}\wedge\chi_{1}\otimes\chi_{2}\, ,
\end{equation}

\noindent
is an invariant $Spin(7)\times Spin(7)$ spinor.
\end{enumerate}

\noindent
In proposition \ref{prop:generalizedSpin7} we have considered the simplest possibility, namely we have taken $f = B = 0$ and the two spinors of the same chirality. Notice however that, given an intermediate structure $\mathfrak{S} = \left(g, \phi^{1}_{3}\otimes v_{1}, \phi^{2}_{3}\otimes v_{2}\right)$ we have a natural scalar function $f$, given in coordinates by

\begin{equation}
f = \phi^{1}_{3\, abc}\, \phi^{2\, abc}_{3}\, ,
\end{equation}

\noindent
as well as two two-forms $B_{1}$ and $B_{2}$ given by

\begin{equation}
B_{1} =  \iota_{v_{2}}\phi^{1}_{3}\, , \qquad B_{2} =  \iota_{v_{1}}\phi^{2}_{3}\, .
\end{equation}

\noindent
The viability of the previous choices for $B$ and $f$ in the definition of a generalized $Spin(7)$-structure remains yet to be understood, but it points out to a perhaps deep connection between intermediate structures and generalized $Spin(7)$-structures.
\end{remark}


\subsection{Regular elliptic fibrations}
\label{sec:regularelliptic}


\noindent
The existence of an intermediate structure $\mathfrak{S}=\left( g, \phi,\tilde{\phi}\right)$ not only implies a topological reduction of the structure group of the generalized bundle, but it may also imply a reduction on the spin bundle of $\mathcal{M}_{8}$.  In particular, the existence of the two Majorana-Weyl spinors $\chi_{a}$ on $\mathcal{M}_{8}$  implies a reduction of the structure group of the spin bundle $S^{\mathbb{R}+}_{8}\to\mathcal{M}_{8}$ to different groups depending on whether they are linearly independent at every point or not. If they are, and they have the same chirality, then the structure group is reduced further from Spin(7) to $SU(4)$, while if they become dependent at some points in $\mathcal{M}_{8}$ then there exists no further global reduction. Notice that the existence of each $\chi_{a}$ is ensured by the existence of the corresponding $Spin(7)$ structure with associated admissible four-form $\Omega^{a}$. If $\chi_{1}$ and $\chi_{2}$ are of different chirality, then the structure group is reduced to $G_{2}$ instead of $SU(4)$.

The combined manifold and structure $\left(\mathcal{M}_{8}, \mathfrak{S}=\left( g, \phi,\tilde{\phi}\right)\right)$ is thus equipped with two globally defined vector fields, $v_{1}$ and $v_{2}$. Each of them defines a codimension-one distribution which, if integrable, gives rise to a family of $G_{2}$-structure manifolds, as we have seen in section \ref{sec:G2manifolds}. Being $\mathcal{M}_{8}$ an eight-dimensional manifold which can have $SU(4)$ structure under a mild assumption, the question is if the intermediate structure $\mathfrak{S}$ may be used to define also an elliptic fibration in $\mathcal{M}_{8}$. Hence, the goal of this subsection is to define an elliptic fibration in $\mathcal{M}_{8}$ from an intermediate structure. For phenomenologically interesting F-theory applications, the elliptic fibration must be singular, meaning that the fibre at some points of the base space is not smooth torus. Here however we are going to consider only regular elliptic fibrations, commenting about how singularities can be implemented in this set up just at the very end, while leaving a complete analysis for a future publication \cite{holonomyMCM}. By regular elliptic fibration we mean a principal torus bundle with total space $\mathcal{M}_{8}$ and six-dimensional base space that we will denote by $\mathcal{B}$. Given the structure $\mathfrak{S}$ that we have defined in our manifold $\mathcal{M}_{8}$, namely two three-forms $\phi^{a}_{3}$ and two vector fields $v_{a}$, it is natural to try to define the elliptic fibration by means of the two-dimensional distribution generated at each point by $v_{1}$ and $v_{2}$. In order to see if this is a sensible way to proceed, it is natural to ask if every elliptic fibration is equipped with two globally defined vector fields defining an integrable distribution. The following proposition answers this question in an affirmative way.

\begin{prop}
\label{prop:T2tov}
Let $\left(T^2, \mathcal{M}, \mathcal{B}\right)$ be a principal torus bundle, with total space $\mathcal{M}$, base $\mathcal{B}$ and fibre $T^2$. Then $X$ is equipped with two globally defined vector fields $v_{1}$ and $v_{2}$ that define a completely integrable two dimensional distribution. In fact, $\left[ v_{1}, v_{2}\right] = 0$.
\end{prop}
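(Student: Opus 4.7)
The plan is to exploit the fact that a principal $T^2$-bundle comes with a free smooth right action of the abelian Lie group $T^2$ on the total space $\mathcal{M}$, and to use this action to construct two commuting global vector fields as fundamental vector fields of the action.

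First, recall that since $(T^2,\mathcal{M},\mathcal{B})$ is a principal bundle, we have by definition a free (and proper) smooth right action $\psi\colon \mathcal{M}\times T^2 \to \mathcal{M}$ with orbit space $\mathcal{B}$. Writing $\mathfrak{t}$ for the (abelian, two-dimensional) Lie algebra of $T^2$, each $X\in \mathfrak{t}$ defines a \emph{fundamental vector field} $X^{\ast}\in\mathfrak{X}(\mathcal{M})$ by
\begin{equation*}
X^{\ast}_{p} \;=\; \left.\frac{d}{dt}\right|_{t=0}\psi\bigl(p,\exp(tX)\bigr),\qquad p\in\mathcal{M}.
\end{equation*}
Fix any basis $X_{1},X_{2}$ of $\mathfrak{t}$ and set $v_{1}:=X_{1}^{\ast}$, $v_{2}:=X_{2}^{\ast}$. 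Both are globally defined on $\mathcal{M}$, since the action is.

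Next I would verify the two geometric properties. For linear independence at every point, observe that the orbit map $\psi_{p}\colon T^{2}\to \mathcal{M}$, $g\mapsto \psi(p,g)$, is an immersion because the action is free (its differential at the identity sends $X\mapsto X^{\ast}_{p}$, and freeness forces this map to be injective). Hence $v_{1}|_{p}$ and $v_{2}|_{p}$ span a genuine two-dimensional subspace of $T_{p}\mathcal{M}$ for every $p$, giving a rank-two distribution on $\mathcal{M}$.

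The vanishing of the bracket is the standard identity that, for a right action, $X\mapsto X^{\ast}$ is a Lie algebra homomorphism $\mathfrak{t}\to \mathfrak{X}(\mathcal{M})$, so that $[X_{1}^{\ast},X_{2}^{\ast}] = [X_{1},X_{2}]^{\ast}$. Since $T^{2}$ is abelian, $[X_{1},X_{2}]=0$, and therefore $[v_{1},v_{2}]=0$. Complete integrability of the distribution spanned by $v_{1},v_{2}$ is then immediate from the Frobenius theorem (its leaves are of course the $T^{2}$-orbits, i.e.\ the fibres of the bundle). I do not expect a substantive obstacle: the only subtlety is bookkeeping about the direction of the action, which merely affects a sign in the Lie-algebra map and does not change the conclusion $[v_{1},v_{2}]=0$ because $\mathfrak{t}$ is abelian.
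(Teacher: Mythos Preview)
Your proof is correct and follows essentially the same approach as the paper: both construct $v_1,v_2$ as the fundamental vector fields of the free $T^2$-action and conclude $[v_1,v_2]=0$ from the fact that the infinitesimal action is a Lie algebra homomorphism together with the abelianness of $T^2$. Your version is slightly more thorough in that you also spell out why the action being free forces $v_1,v_2$ to be pointwise linearly independent, which the paper leaves implicit.
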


\begin{proof}Since $\left(T^2, \mathcal{M}, \mathcal{B}\right)$ is a principal bundle, there is a smooth free action of $T^2 = \mathbb{R}/\mathbb{Z}\times\mathbb{R}/\mathbb{Z}$ on $X$, namely 

\begin{eqnarray}
\psi : \mathbb{R}/\mathbb{Z}\times\mathbb{R}/\mathbb{Z} &\mapsto &\mathrm{Diff}\left(\mathcal{M}\right)\nonumber\\
([t],[s]) &\mapsto & \psi_{([t],[s])}\, ,
\end{eqnarray}

\noindent
Therefore, for every $p\in\mathcal{M}_{8}$, we can define two curves $\gamma^1_{p}\colon S^{1}\to\mathcal{M}_{8}$ and $\gamma^2_{p}\colon S^{1}\to\mathcal{M}_{8}$ as follows

\begin{equation}
\gamma^1_{p}([t]) =  \psi_{([t],[0])}( p) , \qquad \gamma^2_{p}([s]) =  \psi_{([0],[s])}( p).
\end{equation}  

\noindent
They satisfy $\gamma^1_{p}([0]) =\gamma^2_{p}([0]) = p$. At every point $p\in\mathcal{M}_{8}$ we define then

\begin{equation}
\left. v_{1}\right|_{p} = \frac{d}{dt}\gamma^1_{p}([0])\, , \qquad  \left. v_{2}\right|_{p} = \frac{d}{ds}\gamma^2_{p}([0])\, .
\end{equation}  

\noindent
$v_{1}$ and $v_{2}$ are the infinitesimal generators of the torus action. Since the infinitesimal action is a Lie-algebra homomorphism, we must have $[v_{1},v_{2}] = 0$.
\end{proof}

Therefore, it is reasonable to try to define an elliptic fibration by using two vector fields. We can even go further and completely characterize an elliptic fibration in terms of two globally defined vector fields.

\begin{prop}
\label{prop:v1v2T2}
$\mathcal{M}$ admits an elliptic fibration if and only if it is equipped with two globally defined, linearly independent vector fields $v_{1}$ and $v_{2}$, such that:
\begin{enumerate}
\item $\left[ v_{1}, v_{2}\right] = 0$,
\item all the leaves of the foliation integrating the rank two distribution $\mathcal{H}:=span\{v_1,v_2\}$ are compact,
\item the leaf space $\mathcal{M}/\mathcal{H}$ of the foliation is a smooth manifold,
\item $\Lambda$ is a trivial bundle, where $\Lambda$ is the bundle   of isotropy groups of the $\mathbb{R}^2$-action on $\mathcal{M}$ generated by $v_1$,$v_2$.
\end{enumerate}
\begin{remark}
a) The bundle $\Lambda$ appearing in assumption 4 is a bundle of rank two lattices over $\mathcal{M}/\mathcal{H}$, contained in the trivial vector bundle $\mathbb{R}^2 \times \mathcal{M}/\mathcal{H}$. (In particular each fiber of $\Lambda$, as a group, is isomorphic to $\mathbb{Z}^2$.)
If the leaf space $\mathcal{M}/\mathcal{H}$ happens to be simply connected, condition $4.$ is automatically satisfied.

b) The flows of $v_1$ and $v_2$ as above are not periodic in general.
Even when they are periodic of minimal period one, the principal two-torus action on $\mathcal{M}$ given by the above proposition is not the product of the $S^1$-action generated by $v_1$ with the  $S^1$-action generated by $v_2$. To see this, consider the case where $\mathcal{M}=S^1\times S^1$,
and $v_1=\frac{\partial}{\partial \theta_2}$, $v_2=2\frac{\partial}{\partial \theta_1}+\frac{\partial}{\partial \theta_2}$: the induced two-torus action is not free, for the time one flow of $\frac{1}{2}(v_1+v_2)$ is the identity.
\end{remark}
\end{prop}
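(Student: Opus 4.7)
If $\mathcal{M}$ is a principal $T^2$-bundle, proposition \ref{prop:T2tov} provides two globally defined, commuting, linearly independent vector fields $v_1,v_2$ as infinitesimal generators of the action. The orbits of the associated $\mathbb{R}^2$-action coincide with the fibres, which are compact two-tori and constitute the smooth leaf space $\mathcal{B}=\mathcal{M}/\mathcal{H}$. Pulling back the covering $\mathbb{R}^2\to T^2=\mathbb{R}^2/\mathbb{Z}^2$, the isotropy of the $\mathbb{R}^2$-action at every point is the constant sublattice $\mathbb{Z}^2$, so $\Lambda$ is trivial.

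\textbf{Backward direction, preliminaries.} Assume conditions 1--4. Since each integral curve of $v_i$ lies inside a single leaf, which is compact by hypothesis 2, the restriction of $v_i$ to a leaf is complete, hence $v_i$ is complete on $\mathcal{M}$. The commuting flows combine into a smooth action $\Psi\colon \mathbb{R}^2\times\mathcal{M}\to\mathcal{M}$ whose orbits are the leaves of $\mathcal{H}$. For any $p\in\mathcal{M}$, the isotropy $\Lambda_p\subset\mathbb{R}^2$ is a closed subgroup, and since $\mathbb{R}^2/\Lambda_p$ is diffeomorphic to the compact two-dimensional orbit through $p$, $\Lambda_p$ must be a full-rank lattice; hence each orbit is a two-torus.

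\textbf{Constructing the $T^2$-action.} By triviality of $\Lambda$ (condition 4), there exist smooth sections $\lambda_1,\lambda_2\colon \mathcal{M}/\mathcal{H}\to\mathbb{R}^2$ forming a basis of $\Lambda_b$ at each leaf $b$. Writing $\lambda_i=(a_i^1,a_i^2)$ and pulling back via $\pi\colon\mathcal{M}\to\mathcal{M}/\mathcal{H}$, I set
\begin{equation*}
\tilde v_i \;=\; (a_i^1\circ\pi)\, v_1 + (a_i^2\circ\pi)\, v_2,\qquad i=1,2.
\end{equation*}
Since the coefficients are constant along leaves (hence annihilated by $v_1,v_2$) and $[v_1,v_2]=0$, a direct Leibniz computation yields $[\tilde v_1,\tilde v_2]=0$. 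Moreover, along the orbit through $p$ the coefficients stay equal to $a_i^j(\pi(p))$, so the flow of $\tilde v_i$ on that orbit is exactly the straight-line flow of $\Psi$ in the direction $\lambda_i(\pi(p))\in\Lambda_{\pi(p)}$; in particular its time-one flow fixes $p$. Thus $\tilde v_1,\tilde v_2$ generate a smooth $T^2$-action with the leaves as orbits, and the isotropy of the induced $\mathbb{R}^2$-action at every point is the standard lattice $\mathbb{Z}^2$, so the $T^2$-action is free. Proposition \ref{thm:actionproperfree}, applied to the compact (hence proper) group $T^2$, then endows $\pi$ with the structure of a principal $T^2$-bundle.

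\textbf{Main obstacle.} The essential step is the construction of $\tilde v_1,\tilde v_2$ from a global trivialisation of $\Lambda$, together with the verification that the isotropy of the resulting $\mathbb{R}^2$-action is exactly $\mathbb{Z}^2$ in the new parametrisation. This is precisely where hypothesis 4 enters: without it, lattice generators could only be chosen locally on $\mathcal{B}$, and one would obtain a torus fibration whose failure to be a principal bundle is measured by the monodromy of $\Lambda$.
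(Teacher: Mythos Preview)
Your proof is correct and follows essentially the same route as the paper: both directions invoke proposition~\ref{prop:T2tov} and proposition~\ref{thm:actionproperfree} respectively, and the core of the backward direction is the construction of new commuting vector fields as leafwise-constant linear combinations of $v_1,v_2$ using a global trivialisation of $\Lambda$. Your write-up is in fact slightly more explicit than the paper's in verifying that $[\tilde v_1,\tilde v_2]=0$ and that the resulting $\mathbb{R}^2$-isotropy is exactly $\mathbb{Z}^2$.
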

 
\begin{proof} 
Given propositions \ref{prop:T2tov}, we just need to assume the existence of vector fields $v_{1}$ and $v_{2}$ as above and show that $\mathcal{M}$ admits an elliptic fibration. Since the leaves of $\mathcal{H}$ are compact, the vector fields $v_1$ and $v_2$ are complete, hence by assumption  $1$ they generate a $\mathbb{R}^2$-action on $\mathcal{M}$ whose orbits are exactly the leaves of $\mathcal{H}$. At every $p\in  \mathcal{M}$, the isotropy group of this action is a rank two lattice in
$\mathbb{R}^2$, for the orbits of the  $\mathbb{R}^2$ are compact. Further, the isotropy group at $p$ is equal to the isotropy group at any other point in the orbit through $p$, since $\mathbb{R}^2$ is an abelian group. We denote this isotropy group by $\Lambda_{\pi(p)}$, where   $\pi\colon \mathcal{M}\to \mathcal{M}/\mathcal{H}$ is the projection to the leaf space. 

At  every $u\in  \mathcal{M}/\mathcal{H}$ we can choose a basis (over $\mathbb{Z}$)  of the lattice $\Lambda_u$, which we denote by $\{(a_1(u),a_2(u)),\; (b_1(u),b_2(u))\}$.
The bundle $\Lambda$ over $\mathcal{M}/\mathcal{H}$ is trivial by assumption $4.$
Hence   this basis can be chosen to depend smoothly on $u\in \mathcal{M}/\mathcal{H}$, that is, so that  $a_1,a_2,b_1,b_2\in C^{\infty}(\mathcal{M}/\mathcal{H})$.
Define the following two vector fields on $\mathcal{M}$:
$$A:=\pi^*(a_1)v_1+\pi^*(a_2)v_2,\;\;\;\;\;\;\;\;\;\;\;\;\;B:=\pi^*(b_1)v_1+\pi^*(b_2)v_2.$$
These two vector fields commute. They generate an action of the torus $S^1\times S^1$ on $\mathcal{M}$, since the integral curves of $A$ and $B$ are periodic of period one. This torus action is free, since $A$ and $B$ are constructed out of a basis of $\Lambda$, and its orbits are exactly the leaves of $\mathcal{H}$.
Hence it defines an elliptic fibration by proposition \ref{thm:actionproperfree}.
\end{proof} 

\noindent
From propositions \ref{prop:T2tov} we conclude that using the two vector fields present in an intermediate manifold to define an elliptic fibration in terms of an integrable distribution is the sensible way to proceed. Therefore, let $\left(\mathcal{M}_{8}, \mathfrak{S}=\left( g, \phi^{1}_{3}\otimes v_{1}, \phi^{2}_{3}\otimes v_{2}\right)\right)$ be an intermediate manifold. We define the following two-dimensional distribution 

\begin{equation}
\label{eq:2distribution}
H_{2} = \left\{ H_{2\, p}\, , \quad p\in\mathcal{M}_{8}\right\}\, ,
\end{equation}

\noindent
as follows

\begin{equation}
H_{2\, p} = \mathrm{span}\left( v_{1}|_{p},v_{2}|_{p}\right)\, .
\end{equation}

\noindent
By Frobenius' theorem, $H_{2}$ will be completely integrable if and only if

\begin{equation}
\left[ w_{1}, w_{2}\right]\in \Gamma\left(H_{2}\right)\, , \quad\forall\, w_{1}, w_{2} \in \Gamma\left( H_{2}\right)\, .
\end{equation}

\noindent

\noindent
Now, having an elliptic fibration is equivalent to having $v_{1}$ and $v_{2}$ commuting and generating the infinitesimal, free, action of a torus on $\mathcal{M}_{8}$, and therefore the problem is fully characterized.

\begin{remark}
We may wonder how many different two-dimensional foliations we have if we require the distribution (\ref{eq:2distribution}) to be integrable but we do not necessarily require it to correspond to an elliptic fibration. In that case we obtain that $\mathcal{M}_{8}$ is foliated by parallelizable, and therefore orientable, two-dimensional manifolds. As a consequence, if the foliation is by compact leaves, then they must be oriented surfaces with zero Euler characteristic, that is, they must be elliptic surfaces. 
\end{remark}

\noindent
At this point we have completely characterized regular elliptic fibrations in terms of two vector fields on $\mathcal{M}_{8}$. However, we want to go further, and we want to study the possibility of having singular fibres at a set of points $S_{I}\subset\mathcal{B}$ on the base space $\mathcal{B}$. For F-theory applications, the space $S_{I}$ has to satisfy some extra-requeriments, in particular, in the simplest case $S_{I}$ must be of complex codimension one inside $\mathcal{B}$. Let $A$ and $B$ be the canonical basis of the homology group $H_{1}\left( T^{2}\right)$ of the torus. Then, if at every point in  $b\in S_{I}$ there is a combination of $A$ and $B$ such that the cycle

\begin{equation}
C = p A + q B\, , \qquad p, q\in\mathbb{Z}\, , \qquad \mathrm{m.c.d}\left(p_,q\right) = 1\, ,
\end{equation} 

\noindent
vanishes, one can conclude then that, from the physics point of view, there is a $\left( p, q\right)$-seven-brane that extendes over the four non-compact dimensions and wraps the four compact dimensions of $S_{I}$. We will see in section \ref{sec:singularities} how to implement this kind of singularities just in terms of two vector fields, by using the map that we are going to construct now between a subset of the vector fields of $\mathcal{M}_{8}$ and $H_{1}\left( T^{2}\right)$.

Let us assume then that we have a manifold $\mathcal{M}_{8}$ equipped with an intermediate structure $\mathfrak{S}=\left( g, \phi^{1}_{3}\otimes v_{1}, \phi^{2}_{3}\otimes v_{2}\right)$ such that $v_{1}$ and $v_{2}$ are the infinitesimal generators of a free torus action. We thus know that there is a free torus action on $\mathcal{M}_{8}$

\begin{equation}
\psi\colon T^2 = \mathbb{R}/\mathbb{Z}\times\mathbb{R}/\mathbb{Z}\to \mathrm{Diff}\left(\mathcal{M}_{8}\right)\, ,
\end{equation}

\noindent
with the corresponding infinitesimal action given by the Lie-algebra homomorphism

\begin{eqnarray}
v_{-}\colon \mathbb{R}\times\mathbb{R} &\to & \mathfrak{X}\left(\mathcal{M}_{8}\right)\, , \\
(t,s) &\mapsto & v_{t,s}\, \nonumber  ,
\end{eqnarray}

\noindent
where 

\begin{equation}
\left. v_{t,s}\right|_{p} = \left.\frac{d}{d\alpha}\right|_{\alpha=0}\, e^{\alpha(t,s)}( p) , \qquad \forall p\in\mathcal{M}_{8}\, .
\end{equation}

\noindent
Since $\mathbb{R}\times\mathbb{R}$ is equipped with the trivial Lie-bracket, the image of $v_{-}$ is given by linear combinations of $v_{1}$ and $v_{2}$ over $\mathbb{R}$ as follows

\begin{equation}
v_{t,s} =  t\, v_{1} + s\, v_{2}\, ,\qquad t, s\in\mathbb{R}\, ,
\end{equation}

\noindent
where we have identified $v_{1} = v_{(1,0)}$ and $v_{2} = v_{(0,1)}$. For every $p\in\mathcal{M}_{8}$ there is a unique curve $\gamma^{p}_{t,s}\colon \mathbb{R}\to\mathcal{M}_{8}$ such that 

\begin{equation}
\gamma^{p}_{t,s}(0) = p\, , \qquad \left.\frac{d}{d\tau}\gamma^{p}_{t,s}(\tau)\right. = \left. v_{t,s}\right|_{\gamma^{p}_{t,s}(\tau)}\, . 
\end{equation}

\noindent
Let us denote by $\pi\colon\mathcal{M}_{8}\to\mathcal{B} = \mathcal{M}_{8}/T^2$ the projection of the torus bundle. Then, from the standard properties of principal bundles we have

\begin{equation}
\gamma^{p}_{t,s}(\tau)\in\pi^{-1}(p)\simeq T^2\, , \quad \forall p\in\mathcal{M}_{8}\, ,\quad \forall \tau\in\mathbb{R}\, .
\end{equation}

\noindent
By assumption, the flow of $v_{1}$ and $v_{2}$ is a closed curve in $\pi^{-1}(p)\simeq T^2$ for every $p\in\mathcal{M}_{8}$. However, not every vector field $v_{t,s}$ will give rise to a closed curve in the corresponding torus. The condition for the flow of $v_{t,s}$ to be a closed curve is given by

\begin{equation}
\frac{t}{s}\in\mathbb{Q}\, .
\end{equation}

\noindent
In particular, if we want the image of the closed curve $\gamma^{p}_{t,s}$ to be covered just one time when $\tau$ goes from zero to one then\footnote{We thank Raffaele Savelli for a clarification about this point.}

\begin{equation}
\mathrm{m.c.d}\left(s,t\right) =1\, ,
\end{equation} 

\noindent
that is, $s$ and $t$ must be coprime. Let us use the notation $\mathfrak{X}^{c}_{s,t}\left(\mathcal{M}_{8}\right) = \left\{ v_{s,t}\, , \, \, |\,\, \mathrm{m.c.d}\left(s,t\right) =1\right\}$. Then, for every $p\in\mathcal{M}_{8}$, there is a well-defined map $\delta^{p}$ from the set of vector fields $v_{s,t}$ such that $\mathrm{m.c.d}\left(s,t\right) =1$ to the first homology group $H_{1}\left( T^2\right)$ of the torus, given by
\begin{eqnarray}
\label{eq:deltamap}
\delta^{p}\colon \mathfrak{X}^{c}_{s,t}\left(\mathcal{M}_{8}\right) &\to & H_{1}\left( T^2\right)\, ,\\
v_{s,t} &\mapsto & \left[\gamma^{p}_{t,s}\right]\, \nonumber ,
\end{eqnarray}

\noindent
Let us take as a canonical basis $v_{1,0}$ and $v_{0,1}$, and therefore under $\delta^{p}$ we have that $v_{1,0} \mapsto A=\left[\gamma^{p}_{1,0}\right]$ and $v_{0,1} \mapsto B=\left[\gamma^{p}_{0,1}\right]$, where $A$ and $B$ are the standard generators of the homology group $H_{1}\left( T^2\right) = \mathbb{Z}\times\mathbb{Z}$. 


\subsection{Possibility of singular points}	
\label{sec:singularities}

	
So far we have considered manifolds where there is at least one globally defined, no-where vanishing vector field.	As a consequence, if $\mathcal{M}_{8}$ is closed, we would conclude that the Euler characteristic must be zero: $\chi\left(\mathcal{M}_{8}\right) = 0$. 
This seems to be a too restrictive condition. However, we should not be concerned, for two reasons. On one hand, we do not necessarily require $\mathcal{M}_{8}$ to be compact, meaning that $\mathcal{M}_{8}$ does not have to be necessarily understood as a compactification space. In fact,  if we consider $\mathcal{M}_{8}$ to be compact it would only be in order to make contact with eight-dimensional compact internal spaces in F-theory. On the other hand, as we have mentioned before, even if we take $\mathcal{M}_{8}$ to be compact, phenomenologically interesting F-theory applications are based on elliptic fibrations where the fibre becomes singular at given set of points on the base space $\mathcal{B}$, including the possibility of having a singular space $\mathcal{M}_{8}$. In this case we would not require the vector fields $v_{a}$ to be globally non-vanishing nor $\mathcal{M}_{8}$ to be a smooth manifold, and therefore again the Euler number does not have to vanish. All the previous results in this paper have been derived assuming that the vector fields $v_{a}$ are globally defined and nowhere vanishing.  Hence, in order to include the possibility of having a singular elliptic fibration, we will study now the situation where we admit linear combinations of the vector fields $v_{1}$ and $v_{2}$ to have zeros on $\mathcal{M}_{8}$ (that is,
$v_1$ and $v_2$ are linearly dependent at a set of points of $\mathcal{M}_{8}$). More precisely, let us consider the case in which  the vector field

\begin{equation}
v_{s,t} = s\, v_{1} +  t\, v_{2}\, ,
\end{equation} 

\noindent 
is zero at $p\in S_{I}\subset\mathcal{B}$ for the particular combination of $v_{1}$ and $v_{2}$ given by  $\left(s,t\right)$. Therefore, from equation (\ref{eq:deltamap}) we deduce that the maps

\begin{eqnarray}
\label{eq:deltamap2}
\delta^{p}\colon \mathfrak{X}^{c}_{s,t}\left(\mathcal{M}_{8}\right) &\to & H^{1}\left( T^2\right)\, ,\\
v_{s,t} &\mapsto & \left[\gamma^{p}_{t,s}\right]\nonumber \,
\end{eqnarray}

\noindent
are singular at the points $v_{s,t}$. Hence, we conclude that at every point $p\in S_{I}$ the torus fibration is singular since the cycle $s\, A + t\, B$ of the torus $\pi^{-1}(p)$ collapses. We are thus able to see where the fibre becomes singular  just from the analysis of the zeros of the vector field $v_{t,s}$. If $S_{I}$ satisfy the corresponding requeriments, then it can be interpreted from the physics point of view as a set of $D7$-branes extending on the four non-compact dimensions and four compact dimensions on the base space $\mathcal{B}$.


\section{Conclusions}
\label{sec:conclusions}


In this paper we have studied ${\cal N}=1$ M-theory compactifications down to four dimensions in terms of  an eight-dimensional manifold $\mathcal{M}_{8}$ endowed with an intermediate structure $\mathfrak{S}$, whose presence is motivated by the exceptionally generalized geometric description of  such compactifications. We have restricted our attention to off-shell supersymmetry, namely to the topological implications of $\mathfrak{S}$ on $\mathcal{M}_{8}$. We have found that using $\mathfrak{S}$ together with some mild assumptions, it is possible to embed in $\mathcal{M}_{8}$ a family of $G_{2}$-structure seven-dimensional manifolds as the leaves of a codimension-one foliation. At the same time, it is possible to prove that $\mathcal{M}_{8}$ is equipped with a topological $Spin(7)$-structure. This, if explored further, may give a relation between seven-dimensional manifolds that are consistent internal spaces in M-theory,  and eight-dimensional manifolds in F-theory compactifications, perhaps pointing out to some kind of duality between M-theory compactified, loosely speaking, on the leaves of the foliation, and F-theory compactified on $\mathcal{M}_{8}$. With a different set of assumptions, it is possible to use $\mathfrak{S}$ to naturally define an $S^{1}$ principal-bundle structure on $\mathcal{M}_{8}$, which turns out to be equipped with a topological $Spin(7)$-structure if the base space is equipped with a topological $G_{2}$-structure.  One can go further and, considering a more general intermediate structure $\mathfrak{S}$, define in $\mathcal{M}_{8}$ an elliptic fibration. In that case, if the base space of the elliptic fibration has a topological $SU(3)$-structure, then $\mathcal{M}_{8}$ has a topological $Spin(7)$-structure\footnote{The proof of this statement will be presented elsewhere, as it involves topological $SU(3)$-structures, which were not discussed in this letter in order not to obscure the presentation.}. In addition, we show that the elliptic fibration can be completely characterized through the pair of vector fields $v_{1}$ and $v_{2}$ that are present in $\mathfrak{S}$, and that the possible singularities on the elliptic fibration correspond to zeros of a particular combination of $v_{1}$ and $v_{2}$. 

We have studied here some of the implications drawn by the existence of an intermediate structure $\mathfrak{S}$ and, more generically, we have explored the relation between eight-dimensional $Spin(7)$ and seven-dimensional $G_{2}$-manifolds. This might be of great physical interest, in particular for the study of dualities among String/M/F-Theory compactifications.  For that, further work needs to be done, mainly to understand how the differential conditions on $\mathfrak{S}$ required by on-shell supersymmetry imply reductions of the holonomy group of $\mathcal{M}_{8}$ or its preferred submanifolds.

\acknowledgments

We would like to thank  Marcos Alexandrino, Vicente Cort\'es, Thomas Grimm, Nigel Hitchin, Dominic Joyce, Ruben Minasian, David Morrison, Raffaele Savelli, Dirk T\"oben, Daniel Waldram and Frederik Witt for useful discussions. This work was supported in part by the ERC Starting Grant 259133 -- ObservableString, and by projects  
MTM2011-22612 and ICMAT Severo Ochoa  SEV-2011-0087 (Spain) and
Pesquisador Visitante Especial grant  88881.030367/2013-01 (CAPES/Brazil).

\appendix


\section{$G_{2}$-manifolds}
\label{sec:G2appendix}


This section is dedicated to summarize the main results concerning seven-dimensional manifolds of $G_{2}$-structure and holonomy contained in $G_{2}$. We will closely follow \cite{Joyce2007,2012arXiv1206.3170C}. $G^{\mathbb{C}}_{2}$ is a simply-connected, semisimple complex fourteen-dimensional Lie group\footnote{For more details see \cite{Adams,2009arXiv0902.0431Y}.}. It has two real forms, namely

\begin{itemize}

\item The real compact form $G_{2}\subset SO(7)$.

\item The real non-compact form $G^{\ast}_{2}\subset SO(4,3)$.

\end{itemize}

\noindent
One way to characterize the real forms of $G_{2}$ is by the isotropy groups of the three-forms $\phi_{0}$ (\ref{eq:phi0}) and $\phi_{1}$ (\ref{eq:phi1}). More precisely, the isotropy group of $\phi_{0}$ is the real compact form  $G_{2}$ and the isotropy group of $\phi_{1}$ is the real non-compact form $G^{\ast}_{2}$. 

\noindent
We recall the definition of $\phi_0$ (definition \ref{def:phi0}):
it is the three-form on $\mathbb{R}^{7}$ given by 
\begin{equation}
\phi_{0} = dx_{123} + dx_{145} + dx_{167} + dx_{246} - dx_{257} - dx_{347} - dx_{356}  \, ,
\end{equation}
where $dx_{ijl}$ stands for  $dx_{i}\wedge dx_{j}\wedge  dx_{l}$.

\bigskip

\noindent
Now let $\mathcal{M}$ be an oriented seven-dimensional manifold. 

\begin{definition}
For each $p\in\mathcal{M}$ we define

\begin{equation}
\mathcal{P}^{3}_{p}\mathcal{M} = \left\{ \phi\in \Lambda^{3} T^{\ast}_{p}\mathcal{M} \quad |\quad \phi = f^{\ast}_{p}\phi_{0} \right\}\, ,
\end{equation}

\noindent
where $f_{p}: T_{p}\mathcal{M}\to\mathbb{R}^{7}$ is an oriented\footnote{By this we mean: orientation preserving.} isomorphism.

\end{definition}

\noindent
 $\mathcal{P}^{3}_{p}\mathcal{M}$ is isomorphic to the quotient $GL_{+}\left(7,\mathbb{R}\right)/G_{2}$ which has dimension $49-14=35$. Since the dimension of $\Lambda^{3} T^{\ast}_{p}$ is also 35, we see that $\mathcal{P}^{3}_{p}\mathcal{M}$ is an open subset of $\Lambda^{3} T^{\ast}_{p}$ and therefore a manifold\footnote{Notice that $\mathcal{P}^{3}_{p}\mathcal{M}$ is not a vector space.}. See   proposition \ref{prop:35} for more details.

\begin{definition}
We define $\mathcal{P}^{3}\mathcal{M}\xrightarrow{\pi}\mathcal{M}$ to be the bundle over $\mathcal{M}$ whose fibre at $p\in\mathcal{M}$ is given by $\pi^{-1}(p) = \mathcal{P}^{3}_{p}\mathcal{M}$.
\end{definition}

\noindent
Notice that $\mathcal{P}^{3}\mathcal{M}$ is an open subbundle of $\Lambda^{3} T^{\ast}\mathcal{M}$ with fibre  $GL_{+}\left(7,\mathbb{R}\right)/G_{2}$.

\begin{definition}
A three-form $\phi\in \Omega^3(\mathcal{M})$ is said to be \emph{positive} if $\phi_{p}\in  \mathcal{P}^{3}_{p}\mathcal{M}$ for all $p\in\mathcal{M}$.
\end{definition}

\noindent
Let us denote by $F_{+}\mathcal{M}$ the oriented frame bundle of $\mathcal{M}$, that is, the bundle over $\mathcal{M}$ whose fibre $F_{+}\mathcal{M}_p$ over $p\in\mathcal{M}$ is the set of ordered bases of $T_{p}\mathcal{M}$ inducing the given orientation:
\beq
F_{+}\mathcal{M}_p = \left\{ e_{p} = \left(e_{1},\hdots, e_{7}\right)\, ,\quad | \quad e_p \quad\mathrm{ordered\,\,oriented\,\, basis\,\, of\,\, } T_{p}\mathcal{M}\right\} \ . \nonumber
 \eeq
Notice that such an ordered oriented basis  $e_p$ can be identified with an orientated isomorphism $T_{p}\mathcal{M}\overset{\cong}{\to}\mathbb{R}^{7}$, which sends the $j$-th basis vector of $e_p$ to the $j$-th vector of the standard basis of $
\mathbb{R}^{7}$, for all $j$.
 The Lie group $GL_+\left(7,\mathbb{R}\right)$ acts freely and transitively on each fibre as follows

\begin{equation}
A\cdot e_{p} = \left(p,\sum_{j=1}^{k} A_{ij} e_{j}\right)\, , \quad A\in GL_+\left(7,\mathbb{R}\right)\, , \quad p\in\mathcal{M}\, .
\end{equation}

\noindent
Hence
$F_{+}(\mathcal{M})$ is a principal bundle with typical fibre $GL_+\left(7,\mathbb{R}\right)$. 
Suppose now that $\mathcal{M}$ is equipped with a positive three-form $\phi$. Consider
\begin{equation}\label{eq:qm}
Q\mathcal{M}_p:=\{f_p: T_{p}\mathcal{M}\overset{\cong}{\to}\mathbb{R}^{7} \text{ oriented isomorphism such that } \phi_{p} = f^{\ast}_{p}\phi_{0}\}.
\end{equation}
Since $\phi$ is positive, this is non-empty
at every $p\in  \mathcal{M}$. By identifying an isomorphism $ T_{p}\mathcal{M}\to \mathbb{R}^{7}$ with an ordered basis of $T_{p}\mathcal{M}$ as above,
we can regard $Q \mathcal{M}_p$ as a subset of $F_{+}\mathcal{M}_p$.
 $Q\mathcal{M}$ is a  principal $G_{2}$-subbundle   of $F_+\mathcal{M}$, i.e. a  $G_{2}$-structure compatible with the orientation of $\mathcal{M}$, since by construction $\phi_{0}$ is invariant under the natural action of $G_{2}\subset GL_{+}\left(7,\mathbb{R}\right)$ on $\mathbb{R}^{7}$.   Conversely, given a $G_{2}$-subbundle $Q\mathcal{M}$ of $F_+\mathcal{M}$ we can define a positive three-form on $\mathcal{M}$ using equation \eqref{eq:qm}. Therefore, we have found a one-to-one correspondence between positive three-forms $\phi$ on $\mathcal{M}$ and $G_{2}$-structures on $\mathcal{M}$ compatible with the orientation. 

A Riemannian metric on an orientable manifold $\mathcal{M}$ implies a reduction of the structure group of the frame bundle from $GL_{+}\left(7,\mathbb{R}\right)$ to $SO(7)$. Since $G_{2}\subset SO(7)$, there is an associated metric $g$ to the $G_{2}$ structure on $\mathcal{M}$. We will call a positive form $\phi$ on $\mathcal{M}$ together with its associated metric $g$ a $G_{2}$-structure on $\mathcal{M}$, since, although $(\phi,g)$ is not a $G_{2}$-structure it  uniquely defines one. Let $\nabla$ be the Levi-Civita connection associated to $g$. We call $\nabla\phi$ the torsion of $(\phi,g)$ and we say that $(\phi,g)$ is torsion free if $\nabla\phi = 0$. The following proposition holds.

\begin{prop}
\label{prop:G2holonomyequiv}
Let $\mathcal{M}$ be an oriented seven-dimensional manifold and $(\phi,g)$ a $G_{2}$-structure on $\mathcal{M}$. Then the following are equivalent

\begin{itemize}

\item $(\phi,g)$ is torsion-free.

\item $\mathrm{Hol}\left( g\right)\subseteq G_{2}$ and $\phi$ is the induced three-form.

\item $\nabla\phi =0$ on $\mathcal{M}$, where $\nabla$ is the Levi-Civita connection of $g$.

\item $d\phi = \delta \phi = 0$ on $\mathcal{M}$.


\end{itemize}

\end{prop}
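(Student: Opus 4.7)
My plan is to organize the four equivalences around condition (3), proving the cycle $(1) \Leftrightarrow (3) \Leftrightarrow (2)$ as routine consequences of the definitions and the holonomy principle, and then focusing on the $(3) \Leftrightarrow (4)$ loop, whose nontrivial half is the main obstacle. The implication $(1) \Leftrightarrow (3)$ is a tautology: the torsion of $(\phi,g)$ was \emph{defined} to be $\nabla\phi$, so $(\phi,g)$ is torsion-free iff $\nabla\phi = 0$.

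For $(2) \Leftrightarrow (3)$ I would invoke the standard holonomy principle. Recall that $G_2 \subset SO(7)$ is precisely the stabilizer subgroup of $\phi_0 \in \Lambda^3(\mathbb{R}^7)^*$. If $\nabla\phi = 0$, then parallel transport by the Levi-Civita connection preserves $\phi$ pointwise, so at each $p \in \mathcal{M}$ the holonomy representation fixes $\phi_p$; choosing a frame in $Q\mathcal{M}_p$ (as in \eqref{eq:qm}) identifies $\phi_p$ with $\phi_0$, and the holonomy lies in the stabilizer of $\phi_0$, namely $G_2$. Conversely, if $\mathrm{Hol}(g) \subseteq G_2$ then by parallel transport from a single point one obtains a globally parallel positive three-form $\phi$, which by construction induces the given $G_2$-structure.

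The direction $(3) \Rightarrow (4)$ follows from two standard facts about the Levi-Civita connection. First, for any $k$-form $\alpha$, the exterior derivative $d\alpha$ equals the antisymmetrization of $\nabla\alpha$, so $\nabla\phi = 0$ forces $d\phi = 0$. Second, $\nabla g = 0$ implies that $\nabla$ commutes with the Hodge star, so $\nabla(\ast\phi) = 0$ and consequently $d(\ast\phi) = 0$; combining this with $\delta = \pm \ast d \ast$ yields $\delta\phi = 0$.

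The hard part is $(4) \Rightarrow (3)$, which is the Fern\'andez--Gray theorem. The plan here is to introduce the intrinsic torsion $T$ of the $G_2$-structure, which is a section of $T^*\mathcal{M} \otimes (\mathfrak{so}(7)/\mathfrak{g}_2)$, and observe that under $G_2$ this bundle splits into four irreducible components $T = \tau_0 \oplus \tau_1 \oplus \tau_2 \oplus \tau_3$ lying respectively in representations of dimensions $1, 7, 14, 27$. One then computes $d\phi$ and $d\ast\phi$ in terms of these torsion components and shows, by decomposing $\Lambda^4 T^*\mathcal{M} \cong \Lambda^4_1 \oplus \Lambda^4_7 \oplus \Lambda^4_{27}$ and $\Lambda^5 T^*\mathcal{M} \cong \Lambda^5_7 \oplus \Lambda^5_{14}$ into $G_2$-irreducibles, that the four components of $T$ inject into the pair $(d\phi, d\ast\phi)$. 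Consequently, the simultaneous vanishing $d\phi = 0$ and $d\ast\phi = 0$ (equivalently $\delta\phi = 0$, since $\delta\phi = -\ast d\ast\phi$) forces $T = 0$, which is equivalent to $\nabla\phi = 0$ because the intrinsic torsion measures exactly the failure of $\phi$ to be parallel. The obstacle is purely representation-theoretic: carrying out the explicit decomposition of $d\phi$ and $d\ast\phi$ into $G_2$-types, rather than any conceptual difficulty, and for this I would refer the reader to the detailed treatment in \cite{Joyce2007} or the original argument of Fern\'andez--Gray.
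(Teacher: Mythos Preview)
Your proof is correct and follows the standard route: the tautology $(1)\Leftrightarrow(3)$, the holonomy principle for $(2)\Leftrightarrow(3)$, the easy implication $(3)\Rightarrow(4)$ via $d=\mathrm{Alt}\circ\nabla$ and $\nabla\ast=\ast\nabla$, and finally the Fern\'andez--Gray argument for $(4)\Rightarrow(3)$ via the $G_2$-irreducible decomposition of the intrinsic torsion. The paper, however, does not prove this proposition at all: it simply cites Lemma~11.5 of \cite{Salamon}. So there is nothing to compare at the level of argument; you have supplied a genuine (and correct) proof sketch where the paper defers to the literature. One minor remark: in your $(4)\Rightarrow(3)$ sketch you might note explicitly that the $7$-dimensional component $\tau_1$ appears in both $d\phi$ and $d\ast\phi$, so the map $T\mapsto(d\phi,d\ast\phi)$ is not a dimension-counting isomorphism but an honest injection that one verifies by computation --- this is exactly the representation-theoretic work you flag as the obstacle.
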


\begin{proof}See lemma 11.5 in \cite{Salamon}.\end{proof} 

\noindent
Here $\delta$ stands for the codifferential, defined as $\delta = \ast d \ast$. Notice that none of the above conditions is linear on $\phi$, since the metric $g$ depends non-linearly on it. This implies that the operators $\nabla$ and $d\ast$ depend on $g$, which in turn depends on $\phi$ and thus the equations $\nabla\phi=0$ and $d\ast\phi=0$ should not be considered as linear in $\phi$.

We show now how torsion free $G_2$ structures arise from Calabi-Yau manifolds of complex dimension two and three respectively.

\begin{prop}
\label{prop:G2R3}
Suppose $\left(Y,g_{Y}\right)$ is a Riemannian four-dimensional manifold with holonomy $SU(2)$. Then $Y$ admits a complex structure form $J$, a K\"ahler form $\omega$ and a holomorphic volume form $\mathcal{V}$ such that $d\omega = d\mathcal{V} = 0$. Let $\mathbb{R}^{3}$ have coordinates $(x_{1},x_{2},x_{3})$ and euclidean metric $h = dx^{2}_{1} + dx^{2}_{2} + dx^{2}_{3}$. Define a metric $g$ and a three-form $\phi$ on $\mathbb{R}^{3}\times Y$ by $g=h\times g_{Y}$ and
\begin{equation}
\phi = dx_{1}\wedge dx_{2}\wedge dx_{3} + dx_{1}\wedge\omega + dx_{2}\wedge {\rm Re} \mathcal{V} - dx_{3}\wedge
{\rm Im}  \mathcal{V}\, .
\end{equation}

\noindent
Then $(\phi, g)$ is a torsion free $G_2$ structure on $\mathbb{R}^{3}\times Y$ and 
\begin{equation}
\ast \phi = \frac{1}{2} \omega\wedge\omega + dx_{2}\wedge dx_{3}\wedge \omega - dx_{1}\wedge dx_{3} \wedge {\rm Re}\mathcal{V} - dx_{1}\wedge dx_{2}\wedge {\rm Im} \mathcal{V}\, .
\end{equation}
\end{prop}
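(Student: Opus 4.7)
My plan is to reduce the proof to a direct verification in local coordinates adapted to the $SU(2)$-structure on $Y$, then invoke Proposition~\ref{prop:G2holonomyequiv} to conclude.

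First I would show that $\phi$ is a positive three-form on $\mathbb{R}^{3}\times Y$ inducing the product metric $g$. Since $Y$ has holonomy $SU(2)$, around each $y\in Y$ there exist oriented normal coordinates $(y_{1},y_{2},y_{3},y_{4})$ in which $g_{Y}|_{y}$ is standard and in which
\[
\omega|_{y}=dy_{12}+dy_{34},\qquad \operatorname{Re}\mathcal{V}|_{y}=dy_{13}-dy_{24},\qquad \operatorname{Im}\mathcal{V}|_{y}=dy_{14}+dy_{23},
\]
as one reads off from $\mathcal{V}=(dy_{1}+i\,dy_{2})\wedge(dy_{3}+i\,dy_{4})$. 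Substituting these into the given formula for $\phi$ and relabeling the coordinates $(x_{1},x_{2},x_{3},y_{1},y_{2},y_{3},y_{4})$ as $(x_{1},\dots,x_{7})$, a line-by-line comparison of the seven monomials yields exactly $\phi_{0}$ as in Definition~\ref{def:phi0}. By definition of positivity (Definition~\ref{def:positive}) this proves $\phi$ is positive at every point, and by the standard inner product formula \eqref{eq:innpr} applied in these coordinates, the metric induced by $\phi$ coincides with $g=h\times g_{Y}$.

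Next I would establish the torsion-free condition via $d\phi=0$ and $d\!\ast\!\phi=0$. Since $Y$ has holonomy $SU(2)$, the Kähler form and the holomorphic volume form are parallel, hence closed: $d\omega=0$, $d(\operatorname{Re}\mathcal{V})=d(\operatorname{Im}\mathcal{V})=0$. Each summand of $\phi$ is a wedge of closed forms on $\mathbb{R}^{3}\times Y$, so $d\phi=0$ is immediate. For the second identity I would simply verify the stated formula for $\ast\phi$ by computing $\ast\phi_{0}$ in the same adapted coordinates and rewriting in invariant form: grouping $dx_{2367}+dx_{2345}$ gives $dx_{2}\wedge dx_{3}\wedge\omega$, grouping $dx_{1357}-dx_{1346}$ gives $-dx_{1}\wedge dx_{3}\wedge\operatorname{Re}\mathcal{V}$, grouping $-dx_{1256}-dx_{1247}$ gives $-dx_{1}\wedge dx_{2}\wedge\operatorname{Im}\mathcal{V}$, while the remaining $dx_{4567}$ reassembles as $\tfrac{1}{2}\omega\wedge\omega$ using $\omega\wedge\omega=2\,dy_{1234}$. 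Once the formula for $\ast\phi$ is established, every summand is again a product of closed forms, so $d(\ast\phi)=0$.

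Having verified both closure conditions, Proposition~\ref{prop:G2holonomyequiv} immediately gives that $(\phi,g)$ is a torsion-free $G_{2}$-structure, completing the proof. There is no genuine obstacle here: the argument is entirely a bookkeeping exercise, with the only minor subtlety being the choice of coordinate ordering that identifies the product ansatz with Joyce's $\phi_{0}$, and the sign conventions when splitting $\ast\phi_{0}$ into the invariant expression involving $\omega\wedge\omega$, $\omega$, $\operatorname{Re}\mathcal{V}$ and $\operatorname{Im}\mathcal{V}$.
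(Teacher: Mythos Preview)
Your proposal is correct and complete. The paper itself does not give a proof but simply cites \cite[Proposition~11.1.1]{Joyce2007}; your direct coordinate verification is exactly the standard argument behind that citation, so there is nothing to compare beyond noting that you have spelled out what the paper leaves to the reference.
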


\begin{proof}See proposition 11.1.1 in reference \cite{Joyce2007}.\end{proof} 

\noindent
It possible to substitute $\mathbb{R}^3$ by $T^3$ and obtain a similar result.

\begin{prop}
\label{prop:G2R}
Suppose $\left(Y,g_{Y}\right)$ is a Riemannian six-dimensional manifold with holonomy $SU(3)$. Then $Y$ admits a complex structure form $J$, a K\"ahler form $\omega$ and a holomorphic volume form $\mathcal{V}$ such that $d\omega = d\mathcal{V} = 0$. Let $\mathbb{R}$ have coordinate $x$. Define a metric $g$ and a three-form $\phi$ on $\mathbb{R}\times Y$ by $g=dx^2\times g_{Y}$ and

\begin{equation}
\phi =  dx\wedge\omega + {\rm Re} \mathcal{V} \, .
\end{equation}

\noindent
Then $(\phi, g)$ is a torsion free $G_2$ structure on $\mathbb{R}\times Y$ and 

\begin{equation}
\ast \phi = \frac{1}{2} \omega\wedge\omega + dx \wedge {\rm Im} \mathcal{V}\, .
\end{equation}
\end{prop}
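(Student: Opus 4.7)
The plan is to reduce the claim to a pointwise linear-algebra computation in an $SU(3)$-adapted orthonormal coframe on $Y$, in the same spirit as the proof of proposition \ref{prop:G2R3}. Since $(Y,g_{Y})$ has holonomy contained in $SU(3)\subset SO(6)$, at each point $p\in Y$ there exists an oriented orthonormal coframe $(dy_{1},\dots,dy_{6})$ of $T_{p}^{\ast}Y$ in which the K\"ahler form and the holomorphic volume form take the standard Calabi--Yau shape
\begin{equation*}
\omega = dy_{12}+dy_{34}+dy_{56},\qquad \mathcal{V}=(dy_{1}+i\,dy_{2})\wedge(dy_{3}+i\,dy_{4})\wedge(dy_{5}+i\,dy_{6}),
\end{equation*}
so that $\mathrm{Re}\,\mathcal{V}=dy_{135}-dy_{245}-dy_{146}-dy_{236}$. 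Adjoining the coordinate $x$ on $\mathbb{R}$ yields an oriented orthonormal coframe $(dx,dy_{1},\dots,dy_{6})$ for the product metric $g=dx^{2}\times g_{Y}$. After relabeling $dx_{1}=dx$ and $dx_{j+1}=dy_{j}$ for $j=1,\dots,6$, expanding $\phi=dx\wedge\omega+\mathrm{Re}\,\mathcal{V}$ gives
\begin{equation*}
\phi = dx_{123}+dx_{145}+dx_{167}+dx_{246}-dx_{257}-dx_{347}-dx_{356},
\end{equation*}
which is precisely the model three-form $\phi_{0}$ of definition \ref{def:phi0}. Hence $\phi$ is pointwise positive, defining a topological $G_{2}$-structure on $\mathbb{R}\times Y$ whose associated Riemannian metric (obtained as in \S \ref{sec:G2in7} from the orbit identification with $\phi_{0}$) coincides with $g$.

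To verify that $(\phi,g)$ is torsion-free, I would invoke proposition \ref{prop:G2holonomyequiv} and check that $d\phi=0$ and $d\ast\phi=0$. Closedness of $\phi$ follows immediately from the Calabi--Yau conditions $d\omega=0$ and $d\mathcal{V}=0$:
\begin{equation*}
d\phi = -dx\wedge d\omega+\mathrm{Re}(d\mathcal{V}) = 0.
\end{equation*}
For $\ast\phi$ I would decompose using the product structure: for any $k$-form $\alpha$ pulled back from $Y$, a short direct calculation in the orthonormal coframe gives
\begin{equation*}
\ast(dx\wedge\alpha) = \ast_{6}\alpha, \qquad \ast\alpha = (-1)^{k}\,dx\wedge\ast_{6}\alpha,
\end{equation*}
where $\ast_{6}$ is the Hodge star on $(Y,g_{Y})$. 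Combined with two standard identities on a Calabi--Yau 3-fold---the K\"ahler identity $\ast_{6}\omega=\tfrac{1}{2}\omega\wedge\omega$, and the fact that the $(3,0)$-form $\mathcal{V}$ is an eigenform of $\ast_{6}$ (equivalently $\ast_{6}\mathrm{Re}\,\mathcal{V}=-\mathrm{Im}\,\mathcal{V}$)---this yields
\begin{equation*}
\ast\phi = \ast(dx\wedge\omega)+\ast(\mathrm{Re}\,\mathcal{V}) = \tfrac{1}{2}\omega\wedge\omega+dx\wedge\mathrm{Im}\,\mathcal{V},
\end{equation*}
which is the asserted formula. Since the right-hand side is a polynomial in $\omega$ and $\mathrm{Im}\,\mathcal{V}$, both closed, we obtain $d\ast\phi = 0$.

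The only delicate point is the sign bookkeeping in the Hodge-star identities on $Y$, in particular the eigenvalue of $\ast_{6}$ on $\mathcal{V}$: once the orientation of $\mathbb{R}\times Y$ and the conventions on $\mathcal{V}$ are fixed consistently with those implicit in definition \ref{def:phi0}, the identity $\ast_{6}\mathrm{Re}\,\mathcal{V}=-\mathrm{Im}\,\mathcal{V}$ reduces to a one-line computation in the $SU(3)$-adapted coframe above. With both $d\phi=0$ and $d\ast\phi=0$ in hand, proposition \ref{prop:G2holonomyequiv} immediately implies that $(\phi,g)$ is a torsion-free $G_{2}$-structure on $\mathbb{R}\times Y$.
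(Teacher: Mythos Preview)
Your argument is correct. The paper does not actually prove this proposition—its proof reads in full ``See proposition 11.1.1 in reference \cite{Joyce2007}''—so your direct computation supplies what the paper omits. Passing to an $SU(3)$-adapted coframe to identify $\phi$ pointwise with the model $\phi_0$, and then verifying $d\phi=d\ast\phi=0$ via the Calabi--Yau closedness conditions together with the product Hodge-star formulae, is precisely the standard argument behind Joyce's statement.

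One small point on the sign you flag: with your explicit coframe and the natural orientation $\mathrm{vol}_6=dy_{123456}$, one actually computes $\ast_6\,\mathrm{Re}\,\mathcal{V}=+\,\mathrm{Im}\,\mathcal{V}$, not $-\,\mathrm{Im}\,\mathcal{V}$; combined with your (correct) formula $\ast\alpha=(-1)^k dx\wedge\ast_6\alpha$ this yields $\ast\phi=\tfrac{1}{2}\omega\wedge\omega-dx\wedge\mathrm{Im}\,\mathcal{V}$, which is consistent with the explicit $\tilde\phi_0$ recorded in definition \ref{def:phi0}. The opposite sign in the proposition's displayed formula for $\ast\phi$ reflects a harmless convention difference in the normalisation of $\mathcal{V}$ relative to Joyce's, and is immaterial for the torsion-free conclusion since $\mathrm{Im}\,\mathcal{V}$ is closed either way. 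You were right to isolate this as the only delicate step.
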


\begin{proof}See proposition 11.1.1 in reference \cite{Joyce2007}.\end{proof} 

\noindent
It possible to substitute $\mathbb{R}$ by $S^{1}$ and obtain a similar result.


\section{$Spin\left(7\right)$-manifolds}
\label{sec:Spin7appendix}


This section is dedicated to summarize the main results concerning eight-dimensional manifolds with $Spin\left(7\right)$-structure and holonomy contained in $Spin\left(7\right)$. We will closely follow \cite{Joyce2007,2012arXiv1206.3170C}.

\begin{definition}
Let us consider $\mathbb{R}^{8}$ with coordinates $\left(x_{1},\hdots,x_{8}\right)$. We define a four-form $\Omega_{0}$ on $\mathbb{R}^{8}$ by 
\begin{align}\label{eq:fourform}
\Omega_{0} &= dx_{1234} + dx_{1256} + dx_{1278} + dx_{1357} - dx_{1368} - dx_{1458}  - dx_{1467} \nonumber \\
& - dx_{2358} - dx_{2367}- dx_{2457} +  dx_{2468} + dx_{3456} + dx_{3478} + dx_{5678} \, ,
\end{align}

\noindent
where $dx_{ij\hdots l}$ stands for the exterior form $dx_{i}\wedge dx_{l}\wedge \cdots\wedge dx_{l}$. The subgroup of $GL\left(8,\mathbb{R}\right)$ that preserves $\Omega_{0}$ is the Lie group $Spin\left(7\right)$\footnote{$Spin\left(7\right)$ is a compact, connected, simply-connected, semisimple an 21-dimensional Lie group, isomorphic as a Lie group to the double cover of $SO\left(7\right)$.}, which also fixes euclidean metric $g_{0} = dx_{1}^{2} + \cdots + dx_{8}^2$ and the orientation on $\mathbb{R}^{8}$ (that is, $Spin\left(7\right)\subset SO(8)$).  Notice that $\ast \Omega_{0} = \Omega_{0}$ where $\ast$ is the Hodge-dual operator associated to $g_{0}$. 

\end{definition}

\begin{definition}\label{def:admissible}
Let $V$ be an oriented eight-dimensional vector space. A four-form $\Omega\in\Lambda^{4} V^{\ast}$ is said to be \emph{admissible} if there exists an oriented isomorphism $f:\mathbb{R}^{8}\to V$ such that $\Omega_{0} = f^{\ast} \Omega$. 
\end{definition}
\noindent
Notice that an admissible form on $V$ induces an inner product on $V$.

\bigskip

\noindent Let $\mathcal{M}$ be an oriented eight-dimensional manifold.
\begin{definition}
 For each $p\in\mathcal{M}$ we define

\begin{equation}
\mathcal{A}^{4}_{p}\mathcal{M} = \left\{ \Omega \in \Lambda^{4} T^{\ast}_{p}\mathcal{M} \quad |\quad \Omega = f^{\ast}_{p}\Omega_{0} \right\}\, ,
\end{equation}

\noindent
where $f_{p}: T_{p}\mathcal{M}\to\mathbb{R}^{8}$ is an oriented isomorphism.

\end{definition}

\noindent
$\mathcal{A}^{4}_{p}\mathcal{M}$ is isomorphic to the quotient $GL_{+}\left(8,\mathbb{R}\right)/Spin\left(7\right)$ which has dimension $64-21=43$. The dimension of $\Lambda^{4} T^{\ast}_{p}$ is 70, and thus $\mathcal{A}^{3}_{p}\mathcal{M}$ has codimension 27 in $\Lambda^{4} T^{\ast}_{p}$ in contrast to the case of $G_{2}$ considered in appendix \ref{sec:G2appendix}, where $\mathcal{P}^{3}_{p}\mathcal{M}$ is open in $\Lambda^{3} T^{\ast}_{p}$. Notice that $\mathcal{A}^{4}_{p}\mathcal{M}$ is not a vector space.

\begin{definition}
We define $\mathcal{A}^{4}\mathcal{M}\xrightarrow{\pi}\mathcal{M}$ to be the bundle over $\mathcal{M}$ whose fibre at $p\in\mathcal{M}$ is given by $\pi^{-1}(p) = \mathcal{A}^{4}_{p}\mathcal{M}$.
\end{definition}

\begin{definition} \label{admissible}
A four-form $\Omega\in \Lambda^{4} T^{\ast}\mathcal{M}$ is said to be \emph{admissible} if $\Omega_{p}\in  \mathcal{P}^{4}_{p}\mathcal{M}$ for all $p\in\mathcal{M}$.
\end{definition}

\noindent
Let us denote by $F_+\mathcal{M}$ the oriented frame bundle of $\mathcal{M}$. Assume $\mathcal{M}$ is equipped with an admissible four-form $\phi$. Since $\Omega$ is admissible, at every point $p\in\mathcal{M}$ there exists an oriented isomorphism
\begin{equation}
f_{p}: T_{p}\mathcal{M}\to\mathbb{R}^{8}\, ,
\end{equation}

\noindent
such that $\Omega_{p} = f^{\ast}_{p}\Omega_{0}$. 
$\Omega$ can be used to define a principal subbundle $Q\mathcal{M}$ of $F_+\mathcal{M}$ as follows:
\begin{equation}\label{eq:qm8}
Q\mathcal{M}_p:=\{f_p: T_{p}\mathcal{M}\overset{\cong}{\to}\mathbb{R}^{8} \text{ oriented isomorphism such that } \Omega_{p} = f^{\ast}_{p}\Omega_{0}\}.
\end{equation}
The structure group of $Q\mathcal{M}$ is $Spin\left(7\right)$, since the stabilizer of 
$\Omega_{0}$  under the natural action of $GL_+\left(8,\mathbb{R}\right)$ on $\mathbb{R}^{8}$ is $Spin\left(7\right)$.
Conversely, given a $Spin\left( 7\right)$-subbundle $Q\mathcal{M}$ of $F_+\mathcal{M}$ we can define an admissible four-form using equation \eqref{eq:qm8}. Therefore, we have found a one-to-one correspondence between positive four-forms $\Omega$ on $\mathcal{M}$ and $Spin\left( 7\right)$-structures on $\mathcal{M}$ inducing the given orientation on $\mathcal{M}$. 

A Riemannian metric on an oriented eight-dimensional manifold $\mathcal{M}$ implies a reduction of the structure group of the frame bundle from $GL\left(8,\mathbb{R}\right)$ to $SO(8)$. Since $Spin\left( 7\right)\subset SO(8)$, there is an associated metric $g$ to the $Spin\left( 7\right)$ structure on $\mathcal{M}$. We will call an admissible form $\Omega$ on $\mathcal{M}$ together with its associated metric $g$ a $Spin\left( 7\right)$-structure on $\mathcal{M}$, since, although $(\Omega,g)$ is not a $Spin\left( 7\right)$-structure it defines uniquely defines one. Let $\nabla$ be the Levi-Civita connection associated to $g$. We call $\nabla\Omega$ the torsion of $(\Omega,g)$ and we say that $(\Omega,g)$ is torsion free if $\nabla\Omega = 0$. The following proposition holds.

\begin{prop}
Let $\mathcal{M}$ be an oriented eight-dimensional manifold and let $(\Omega,g)$ be a $Spin\left( 7\right)$-structure on $\mathcal{M}$. Then the following are equivalent

\begin{itemize}

\item $(\Omega,g)$ is torsion-free.

\item $\mathrm{Hol}\left( g\right)\subseteq Spin\left( 7\right)$ and $\Omega$ is the induced four-form.

\item $\nabla\Omega =0$ on $\mathcal{M}$, where $\nabla$ is the Levi-Civita connection of $g$.

\item $d\Omega = 0$ on $\mathcal{M}$.

\end{itemize}

\end{prop}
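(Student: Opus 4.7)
The strategy is to prove $1\Leftrightarrow 3 \Rightarrow 2 \Rightarrow 1$ first (these are either definitional or standard holonomy facts), and then to obtain the harder equivalence $3\Leftrightarrow 4$, whose nontrivial direction is Fernández's classical result that a $Spin(7)$-structure has a single torsion class, detected by $d\Omega$ alone.

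First I would dispose of the easy implications. By definition, $(\Omega,g)$ is torsion-free precisely when $\nabla\Omega=0$, giving $1\Leftrightarrow 3$. For $3\Rightarrow 2$, note that if $\Omega$ is parallel with respect to the Levi-Civita connection, then parallel transport preserves $\Omega$ pointwise. Hence the holonomy representation, viewed inside $SO(8)$ via $g$, stabilizes $\Omega_p$ at each $p\in\mathcal{M}$; since the stabilizer of an admissible four-form in $GL_+(8,\mathbb{R})$ is $Spin(7)$, we get $\mathrm{Hol}(g)\subseteq Spin(7)$. Conversely, $2\Rightarrow 3$ follows because an admissible four-form induced by a holonomy reduction is constructed, at each $p$, from the $Spin(7)$-frame via pullback of $\Omega_0$, and parallel transport along any loop lies in $Spin(7)$, which preserves $\Omega_0$; hence $\Omega$ is parallel.

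The substantive step is the equivalence $3\Leftrightarrow 4$. The implication $3\Rightarrow 4$ is immediate since $d\Omega$ can be written as an antisymmetrization of $\nabla\Omega$, so $\nabla\Omega=0$ forces $d\Omega=0$ (one may also invoke $\ast\Omega=\Omega$ to get $\delta\Omega=0$ automatically from $d\Omega=0$). The nontrivial direction is $4\Rightarrow 3$. My plan is to decompose $\nabla\Omega$ into irreducible $Spin(7)$-components acting on $T^*\mathcal{M}\otimes \Lambda^4 T^*\mathcal{M}$. Since $\Omega$ is stabilized by $Spin(7)$, $\nabla\Omega$ lives in the orthogonal complement of the $Spin(7)$-invariants, which under the $SO(8)$-action on $\Lambda^4$ reduces—using the self-duality $\ast\Omega=\Omega$ and the branching $\Lambda^4 T^*\mathcal{M}=\Lambda^4_+\oplus\Lambda^4_-$ with $\Lambda^4_+=\mathbb{R}\Omega\oplus \Lambda^4_7\oplus \Lambda^4_{35}$—to a single irreducible $Spin(7)$-module. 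Concretely, $T^*\mathcal{M}\otimes(\Lambda^4_7\oplus\Lambda^4_{35}\oplus\Lambda^4_-)$ contains the representation $\Lambda^5_{48}$ of $Spin(7)$ exactly once, which can be identified canonically with the space where $d\Omega$ lives (since $d\Omega\in\Omega^5(\mathcal{M})$ and $\Lambda^5 T^*\mathcal{M}\cong\Lambda^3 T^*\mathcal{M}$ decomposes as $\mathbf{8}\oplus\mathbf{48}$, with the $\mathbf{8}$-part vanishing automatically by the $Spin(7)$-equivariance). The hard part will be showing that the $Spin(7)$-equivariant projection from $T^*\mathcal{M}\otimes\Lambda^4 T^*\mathcal{M}$ to $\Lambda^5 T^*\mathcal{M}$, restricted to the component of $\nabla\Omega$, is injective. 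Granting this, $d\Omega=0$ forces $\nabla\Omega=0$.

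In practice I would either (a) carry out the representation-theoretic argument above, computing the relevant Clebsch--Gordan decomposition and verifying injectivity of the projection, or (b) invoke directly Fernández's classification of $Spin(7)$-structures, which asserts that the unique torsion class $W$ satisfies $d\Omega=\theta\wedge\Omega+W$ with $W$ determined by $d\Omega$, so that $d\Omega=0$ (equivalently $\theta=0$ and $W=0$) is equivalent to the Levi-Civita connection having $Spin(7)$ holonomy. For a self-contained write-up the cleanest route is to cite \cite{Salamon} or \cite{Joyce2007} (the corresponding $Spin(7)$-analogue of proposition \ref{prop:G2holonomyequiv}) for this step; the representation-theoretic obstruction being a one-step check makes the cited result directly applicable.
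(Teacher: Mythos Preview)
The paper's own proof is simply a reference to Salamon: it cites lemma~11.5 in \cite{Salamon} and nothing more. Your option~(b)---citing the standard references for the $Spin(7)$ analogue of proposition~\ref{prop:G2holonomyequiv}---is therefore exactly what the paper does, and for the purposes of this paper that is all that is expected.

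Your option~(a) sketch is more ambitious but contains a couple of slips in the representation theory that you should fix if you pursue it. First, the decomposition $\Lambda^4_+=\mathbb{R}\Omega\oplus\Lambda^4_7\oplus\Lambda^4_{35}$ cannot be right on dimensional grounds ($1+7+35\neq 35$); under $Spin(7)$ one has $\Lambda^4_+\cong\mathbf{1}\oplus\mathbf{7}\oplus\mathbf{27}$ and $\Lambda^4_-\cong\mathbf{35}$. Second, and more importantly, your claim that the $\mathbf{8}$-component of $d\Omega$ ``vanishes automatically by $Spin(7)$-equivariance'' is false: Fern\'andez's classification gives \emph{two} irreducible torsion classes for $Spin(7)$-structures, living in $\mathbf{8}\oplus\mathbf{48}$, and both are detected by $d\Omega$. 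The correct statement is that $\nabla\Omega$ takes values (pointwise) in $T^*\otimes(\mathfrak{so}(8)/\mathfrak{spin}(7))\cong\mathbf{8}\otimes\mathbf{7}\cong\mathbf{8}\oplus\mathbf{48}$, while $d\Omega$ lives in $\Lambda^5\cong\mathbf{8}\oplus\mathbf{48}$, and the $Spin(7)$-equivariant antisymmetrization map between these two copies of $\mathbf{8}\oplus\mathbf{48}$ is an isomorphism (checked by Schur on each summand). That is the content of Fern\'andez's theorem, and once stated this way the argument $4\Rightarrow 3$ goes through.
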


\begin{proof}See lemma 11.5 in \cite{Salamon}.\end{proof} 

\noindent
Although $d\Omega = 0$ is a linear condition on $\Omega$, its restriction to $\Gamma\left(\mathcal{A}^4 \mathcal{M}\right)$ is non-linear.

\noindent 
We finish displaying cases in which a torsion-free $Spin(7)$ structure arises, for instance, from  from Calabi Yau manifolds of complex dimension two, three, four and from-torsion free $G_2$ structures. In particular, proposition 
\ref{prop:G2T2}
is the one we generalized in theorem \ref{thm:Spin(7)fromG2}, in the case with torsion. 
\begin{prop}
\label{prop:S7R4}
Suppose $\left(Y,g_{Y}\right)$ is a Riemannian four-dimensional manifold with holonomy $SU(2)$. Then $Y$ admits a complex structure form $J$, a K\"ahler form $\omega$ and a holomorphic volume form $\mathcal{V}$ such that $d\omega = d\mathcal{V} = 0$. Let $\mathbb{R}^{4}$ have coordinates $(x_{1},\dots,x_{4})$ and euclidean metric $h = dx^{2}_{1} + \cdots + dx^{2}_{4}$. Define a metric $g$ and a four-form $\Omega$ on $\mathbb{R}^{4}\times Y$ by $g=h\times g_{Y}$ and
\begin{equation}
\Omega = dx_{1234}+ (dx_{12} + dx_{34})\wedge\omega + (dx_{13}-dx_{24})\wedge {\rm Re} \mathcal{V} - (dx_{14} + dx_{23})\wedge {\rm Im} \mathcal{V} + \frac{1}{2}\omega\wedge\omega\, .
\end{equation}

\noindent
Then $(\Omega, g)$ is a torsion-free $Spin(7)$ structure on $\mathbb{R}^{4}\times Y$.
\end{prop}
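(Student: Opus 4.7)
The plan is to verify the three conditions characterizing a torsion-free $Spin(7)$-structure in turn: (i) $\Omega$ is admissible (definition \ref{admissible}), (ii) the inner product associated to $\Omega$ is the product metric $g$, and (iii) $d\Omega=0$, which by the torsion-free proposition quoted after definition \ref{admissible} is equivalent to $\nabla\Omega=0$ and hence to $\mathrm{Hol}(g)\subseteq Spin(7)$.

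First I would work at a fixed point $q\in Y$ and choose local oriented orthonormal coordinates $(y_1,\ldots,y_4)$ compatible with the hyperk\"ahler triple in which
\begin{equation*}
\omega=dy_{12}+dy_{34},\quad \mathrm{Re}\,\mathcal V=dy_{13}-dy_{24},\quad \mathrm{Im}\,\mathcal V=dy_{14}+dy_{23}.
\end{equation*}
Such coordinates exist because holonomy $SU(2)$ means the three K\"ahler forms are parallel and can be put in standard form in an orthonormal frame. Relabeling these coordinates as $(x_5,\ldots,x_8)$ and combining with $(x_1,\ldots,x_4)$ on $\mathbb{R}^4$, a direct expansion of the right-hand side of the definition of $\Omega$ gives, term by term, exactly the fourteen monomials that appear in $\Omega_0$ (equation \eqref{eq:fourform}); in particular the last piece $\tfrac12\omega\wedge\omega$ produces $dx_{5678}$. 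Since the coordinate map $\mathbb{R}^4\times T_qY\to\mathbb{R}^8$ is an oriented isomorphism sending $\Omega$ to $\Omega_0$ and $g$ to the euclidean metric, $\Omega$ is admissible pointwise and its associated inner product is $g$.

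For the torsion-freeness, I would simply observe that $\Omega$ is a linear combination, with constant coefficients in the adapted coordinates, of exterior products of forms each of which is closed: the forms $dx_i$ on $\mathbb{R}^4$, the K\"ahler form $\omega$ (closed because $Y$ is K\"ahler), and the real and imaginary parts of the holomorphic volume form $\mathcal V$ (closed because $\mathcal V$ is holomorphic and parallel for $SU(2)$-holonomy). Hence $d\Omega=0$ by Leibniz and $d^2=0$. By proposition quoted after definition \ref{admissible} this is equivalent to $(\Omega,g)$ being torsion-free.

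I do not anticipate a serious obstacle. The only place where care is needed is the choice of local hyperk\"ahler frame in step one: the three parallel K\"ahler forms of an $SU(2)$-holonomy four-manifold can be brought simultaneously into the canonical self-dual normal form displayed above only up to a global sign/orientation convention, so one should check that the orientation conventions on $Y$ and on $\mathbb{R}^4\times Y$ match those implicit in $\Omega_0$. Once the standard convention is fixed the identification $\Omega=\Omega_0$ in the adapted frame is mechanical, and it simultaneously shows admissibility, identifies the induced metric with $g$, and exhibits $\Omega$ as a parallel form, which gives an alternative (and slicker) route to $d\Omega=0$ via $\nabla\Omega=0$.
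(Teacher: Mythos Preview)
Your argument is correct. The paper does not give its own proof here: it simply refers the reader to \cite[Prop.~13.1.1]{Joyce2007}. What you have written is essentially the standard verification one finds there --- put the hyperk\"ahler triple into canonical self-dual form in an orthonormal frame, check that the resulting expression for $\Omega$ matches $\Omega_0$ termwise (your bookkeeping is right: the $(dx_{12}+dx_{34})\wedge\omega$ piece gives the four terms $dx_{1256}+dx_{1278}+dx_{3456}+dx_{3478}$, and so on, and $\tfrac12\omega\wedge\omega=dx_{5678}$), and then observe closedness from $d\omega=d\mathcal V=0$. So you have supplied the argument that the paper only cites; nothing is missing.
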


\begin{proof}See proposition 13.1.1 in reference \cite{Joyce2007}.\end{proof} 

\noindent
In the previous proposition it possible to substitute $\mathbb{R}^4$ by $T^4$ and obtain a similar result.

\begin{prop}
Suppose $\left(Y,g_{Y}\right)$ is a Riemannian six-dimensional manifold with holonomy $SU(3)$. Let $\omega$ be the associated K\"ahler form and $\mathcal{V}$ the holomorphic volume form. Let $\mathbb{R}^2$ have coordinates $\left( x_{1}, x_{2}\right)$.  Define a metric $g$ and a four-form $\Omega$ on $\mathcal{M} = \mathbb{R}^2\times Y$ by $g = \left( dx^2_1+dx^2_2\right)\times g_{Y}$ and
\begin{equation}
\Omega = dx_{1}\wedge dx_{2}\wedge\omega + dx_{1}\wedge \Re{\rm e} \mathcal{V} - dx_{2}\wedge\Im{\rm m}\mathcal{V}+\frac{1}{2}\omega\wedge\omega\, .
\end{equation}

\noindent
Then $\left(\Omega, g\right)$ is a torsion-free $Spin(7)$-structure on $\mathcal{M}$.

\end{prop}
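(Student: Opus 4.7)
\bigskip
\noindent\textbf{Proof plan.} The plan is to reduce the assertion, by pointwise linear algebra, to a match with the model four-form $\Omega_0$ of equation \eqref{eq:fourform}, and then to use the characterisation of torsion-free $Spin(7)$-structures as those satisfying $d\Omega=0$.

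\emph{Step 1: admissibility.} I would fix a point $p\in\mathcal{M}$ and choose an orthonormal coframe adapted to the $SU(3)$-structure on $Y$, namely $(e^1,\dots,e^6)$ on $T_{\pi(p)}Y$ in which
\begin{equation}
\omega=e^{12}+e^{34}+e^{56},\qquad \mathcal{V}=(e^{1}+ie^{2})\wedge(e^{3}+ie^{4})\wedge(e^{5}+ie^{6}).
\end{equation}
Setting $f^{1}=dx_{1}$, $f^{2}=dx_{2}$, $f^{j+2}=e^{j}$ gives an oriented orthonormal coframe of $\mathcal{M}$ at $p$. Expanding $\Re\mathcal{V}=e^{135}-e^{146}-e^{236}-e^{245}$ and $\Im\mathcal{V}=e^{136}+e^{145}+e^{235}-e^{246}$, a direct bookkeeping calculation shows
\begin{equation}
f^{12}\wedge\omega+f^{1}\wedge\Re\mathcal{V}-f^{2}\wedge\Im\mathcal{V}+\tfrac{1}{2}\omega\wedge\omega
\end{equation}
reproduces exactly the fourteen monomials of $\Omega_{0}$ in definition \eqref{eq:fourform}. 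Hence $\Omega_{p}=f_{p}^{\ast}\Omega_{0}$ under the coordinate isomorphism $f_{p}\colon T_{p}\mathcal{M}\to\mathbb{R}^{8}$ sending $\frac{\partial}{\partial f^{i}}$ to the $i$-th standard basis vector, so $\Omega$ is admissible (definition \ref{def:admissible}).

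\emph{Step 2: the associated metric.} Since the coframe $(f^{1},\dots,f^{8})$ constructed above is orthonormal with respect to the product metric $g=(dx_{1}^{2}+dx_{2}^{2})\times g_{Y}$, and since $\Omega_{0}$ is the admissible form associated to the standard Euclidean inner product on $\mathbb{R}^{8}$, the inner product on $T_{p}\mathcal{M}$ induced by $\Omega$ coincides with $g_{p}$. Thus $(\Omega,g)$ is a $Spin(7)$-structure on $\mathcal{M}$.

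\emph{Step 3: torsion-freeness.} By the proposition characterising torsion-free $Spin(7)$-structures, it suffices to verify $d\Omega=0$. The forms $dx_{1}$ and $dx_{2}$ are closed. The $SU(3)$-holonomy hypothesis implies that $\omega$ and $\mathcal{V}$ are parallel for the Levi-Civita connection of $g_{Y}$, hence $d\omega=0$, $d\Re\mathcal{V}=0$, $d\Im\mathcal{V}=0$, and also $d(\omega\wedge\omega)=0$. Applying $d$ term by term in the defining expression for $\Omega$ therefore gives $d\Omega=0$, and consequently $\nabla\Omega=0$, i.e. $(\Omega,g)$ is torsion-free. (As a byproduct, $\mathrm{Hol}(g)\subseteq Spin(7)$, with $\Omega$ the induced four-form.)

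\emph{Main obstacle.} The non-routine step is the algebraic verification in Step 1 that the pointwise expression genuinely equals $\Omega_{0}$; everything else is either a product-metric observation or a direct consequence of the $SU(3)$-holonomy of $Y$. One must be especially careful with the signs in $\Re\mathcal{V}$ and $\Im\mathcal{V}$ and with the relabelling of indices, since the particular sign pattern of $\Omega_{0}$ in equation \eqref{eq:fourform} is what dictates the choice of the combination $dx_{1}\wedge\Re\mathcal{V}-dx_{2}\wedge\Im\mathcal{V}$ (rather than, say, $+$) in the definition of $\Omega$.
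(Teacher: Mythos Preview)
Your proposal is correct and complete: the pointwise identification with $\Omega_{0}$ in Step~1 checks out term by term (the relabelling $e^{j}\mapsto f^{j+2}$ indeed reproduces exactly the fourteen monomials of equation~\eqref{eq:fourform}), Step~2 is the standard observation, and Step~3 follows immediately from the closedness of $\omega$ and $\mathcal{V}$ on an $SU(3)$-holonomy manifold together with the characterisation of torsion-free $Spin(7)$-structures by $d\Omega=0$.

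The paper itself does not supply an argument here: it simply cites \cite[Prop.~13.1.2]{Joyce2007}. Your write-up is therefore strictly more informative than the paper's own treatment, and is in fact the direct verification that underlies Joyce's statement. The only remark worth adding is that the existence of the adapted coframe in Step~1 is exactly what the $SU(3)$-structure guarantees at each point, so there is no hidden assumption there.
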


\begin{proof}See proposition 13.1.2 in reference \cite{Joyce2007}.\end{proof} 

\noindent
In the previous proposition it possible to substitute $\mathbb{R}^2$ by $T^2$ and obtain a similar result.

\begin{prop}
\label{prop:G2T2}
Suppose $\left(Y,g_{Y}\right)$ is a Riemannian seven-dimensional manifold with holonomy $G_{2}$. Let $\phi$ and $\ast\phi$ be the associated three-form and four-form. We define a metric $g$ and a four-form $\Omega$ on $\mathcal{M} =  \mathbb{R}\times Y$ by $g = dx^2_1\times g_{Y}$ and 
\beq
\Omega = dx\wedge\phi + \ast\phi
\eeq
where $x$ is the coordinate on $\mathbb{R}$. Then $\left(\Omega , g\right)$ is a torsion-free $Spin\left( 7\right)$-structure on $\mathcal{M}$.
\end{prop}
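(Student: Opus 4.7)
The plan is to verify the two defining features of a torsion-free $Spin(7)$-structure on $\mathcal{M}=\mathbb{R}\times Y$: first, that $\Omega$ is an admissible four-form whose associated metric is $g$; and second, that $d\Omega=0$, since by the proposition characterizing torsion-free $Spin(7)$-structures the latter is equivalent to $\nabla\Omega=0$ and to $\mathrm{Hol}(g)\subseteq Spin(7)$ with induced four-form $\Omega$.

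First I would establish pointwise admissibility. Fix $p=(x_0,y)\in \mathcal{M}$, with $T_p\mathcal{M}=\mathbb{R}\,\partial_x\oplus T_yY$. Put $v:=\partial_x|_p$, take $H:=T_yY$ (oriented so that $v$ together with the orientation of $H$ recovers the product orientation on $\mathcal{M}$), and observe that $\phi|_y$ is a positive three-form on $H$ whose associated inner product is $g_Y|_y$. Lemma \ref{lem:phitoOmega} then guarantees that
\begin{equation*}
v^{\flat}\wedge\phi|_y+\ast(v^{\flat}\wedge\phi|_y)
\end{equation*}
is an admissible four-form on $T_p\mathcal{M}$, which coincides with $\Omega|_p$ once one notes that the Hodge-dual of $dx\wedge\phi$ on $(\mathcal{M},g)$ equals $\ast_Y\phi$ (this is the content of Remark \ref{rem:innonV}). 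The same remark further shows that the inner product on $T_p\mathcal{M}$ induced by $\Omega|_p$ is the product inner product, i.e. $g|_p$. Varying $p$, we obtain that $\Omega$ is a global admissible four-form whose associated Riemannian metric is exactly $g$.

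Next I would check $d\Omega=0$. Since $\phi$ and $\ast_Y\phi$ are pulled back from $Y$ along the projection $\pi\colon \mathcal{M}\to Y$, the exterior derivative on $\mathcal{M}$ reduces to the exterior derivative on $Y$ applied fiberwise, and $dx$ is closed. Thus
\begin{equation*}
d\Omega=d(dx\wedge\phi)+d(\ast_Y\phi)=-dx\wedge d_Y\phi+d_Y(\ast_Y\phi).
\end{equation*}
By Proposition \ref{prop:G2holonomyequiv}, the hypothesis $\mathrm{Hol}(g_Y)\subseteq G_2$ is equivalent to $d_Y\phi=0$ and $d_Y(\ast_Y\phi)=\delta_Y\phi\cdot\mathcal{V}_Y=0$ (i.e.\ $\phi$ is harmonic with respect to $g_Y$). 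Hence both terms vanish and $d\Omega=0$.

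Combining the two steps and invoking the equivalence in the $Spin(7)$ analogue of Proposition \ref{prop:G2holonomyequiv} (i.e.\ the proposition just before Proposition \ref{prop:S7R4}) yields that $(\Omega,g)$ is a torsion-free $Spin(7)$-structure on $\mathcal{M}$. There is no genuine obstacle in this argument: the admissibility step is a direct pointwise application of Lemma \ref{lem:phitoOmega}, and the closedness computation reduces to the two closure conditions that characterize $G_2$-holonomy on $Y$; the only mild care required is in identifying the Hodge star on $(\mathcal{M},g)$ applied to $dx\wedge\phi$ with the seven-dimensional Hodge dual $\ast_Y\phi$, which is immediate in a local oriented orthonormal frame adapted to the product decomposition.
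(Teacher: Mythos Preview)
Your argument is correct. The paper itself does not give a proof of this proposition but simply defers to \cite[Prop.\ 13.1.3]{Joyce2007}, so your write-up is in fact more self-contained: you supply the admissibility step via Lemma~\ref{lem:phitoOmega} and Remark~\ref{rem:innonV} (which is precisely the pointwise content of Theorem~\ref{thm:Spin(7)fromG2} specialized to the product $\mathbb{R}\times Y$), and then verify $d\Omega=0$ using the $G_2$-holonomy characterization of Proposition~\ref{prop:G2holonomyequiv}. This is exactly the strategy one would extract from the paper's own machinery, and it has the advantage of making transparent why the torsion-free $G_2$-structure on $Y$ is what forces $d\Omega=0$.

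One small correction: the equality $d_Y(\ast_Y\phi)=\delta_Y\phi\cdot\mathcal{V}_Y$ is not right as written, since $d_Y(\ast_Y\phi)$ is a five-form while $\delta_Y\phi$ is a two-form. What you need is simply that $\delta_Y\phi=0$ is equivalent to $d_Y(\ast_Y\phi)=0$ (because $\delta_Y=\pm\ast_Y d_Y\ast_Y$ and $\ast_Y$ is an isomorphism), and Proposition~\ref{prop:G2holonomyequiv} gives you $d_Y\phi=0$ and $\delta_Y\phi=0$ directly. With that adjustment the closedness step is clean.
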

\begin{proof}See \cite[Prop. 13.1.3]{Joyce2007}.\end{proof} 

\noindent
In the previous proposition it possible to substitute $\mathbb{R}$ by $S^1$ and obtain a similar result.

\begin{prop}
\label{prop:SU4S7}
Suppose $\left(Y,g_{Y}\right)$ is a Riemannian eight-dimensional manifold with holonomy $SU(4)$, $Sp(2)$ or $SU(2)\times SU(2)$ and associated K\"ahler form $\omega$ and holomorphic volume form $\mathcal{V}$. Define a four-form $\Omega$ on $Y$ by
\beq
\Omega = \frac{1}{2}\omega\wedge\omega +{\rm Re} \mathcal{V} \ .
\eeq
Then $\left(\Omega , g\right)$ is a torsion-free $Spin\left( 7\right)$-structure on $Y$.
\end{prop}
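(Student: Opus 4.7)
The plan is to reduce to the case of holonomy contained in $SU(4)$, use parallelism of the canonical tensors to obtain closedness (actually parallelism) of $\Omega$, verify admissibility pointwise by matching two standard models on $\mathbb{R}^{8}$, and invoke the characterization of torsion-free $Spin(7)$-structures stated earlier in this appendix.

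First I would note the chain of inclusions $SU(2)\times SU(2)\subset Sp(2)\subset SU(4)$, so in all three cases $\mathrm{Hol}(g_{Y})\subseteq SU(4)$. Consequently $Y$ carries a parallel complex structure $J$, a parallel K\"ahler form $\omega$, and a parallel holomorphic volume form $\mathcal{V}$. Since the Levi-Civita connection annihilates both $\omega$ and $\mathcal{V}$, it annihilates $\Omega=\tfrac{1}{2}\omega\wedge\omega+\mathrm{Re}\,\mathcal{V}$. Hence $\nabla\Omega=0$, and in particular $d\Omega=0$.

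Next I would verify that $\Omega$ is admissible at each $p\in Y$. Choose an oriented orthonormal frame $(e_{1},\dots,e_{8})$ of $T_{p}Y$ adapted to $J$ in the sense that $Je_{2k-1}=e_{2k}$ for $k=1,\dots,4$; then with $(e^{1},\dots,e^{8})$ the dual coframe,
\begin{equation*}
\omega_{p}=e^{12}+e^{34}+e^{56}+e^{78},\qquad \mathcal{V}_{p}=(e^{1}+ie^{2})\wedge(e^{3}+ie^{4})\wedge(e^{5}+ie^{6})\wedge(e^{7}+ie^{8})
\end{equation*}
(after a possible unit-modulus rescaling of $\mathcal{V}$). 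Expanding $\tfrac{1}{2}\omega_{p}\wedge\omega_{p}+\mathrm{Re}\,\mathcal{V}_{p}$ gives a sum of fourteen decomposable four-forms, and the claim is that, after a suitable element of $SO(8)$, this sum coincides with the model form $\Omega_{0}$ of \eqref{eq:fourform}. Conceptually, both $\tfrac{1}{2}\omega_{0}^{2}+\mathrm{Re}\,\mathcal{V}_{0}$ and $\Omega_{0}$ are stabilized by conjugate copies of $SU(4)$ inside $SO(8)$ (via the inclusion $SU(4)\subset Spin(7)\subset SO(8)$), so the required change of basis exists. This yields $\Omega_{p}\in\mathcal{A}^{4}_{p}Y$, i.e.\ $\Omega$ is admissible. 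The same frame shows that the metric associated to $\Omega$ through this identification is exactly $g_{Y}$ at $p$, so the topological $Spin(7)$-structure defined by $\Omega$ has associated Riemannian metric $g_{Y}$.

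Combining admissibility with $d\Omega=0$ from the first paragraph, the proposition of this appendix characterizing torsion-free $Spin(7)$-structures yields that $(\Omega,g_{Y})$ is torsion-free. The main obstacle is the admissibility check: although purely linear-algebraic, one must carefully reconcile the two standard presentations of the stabilized four-form on $\mathbb{R}^{8}$, namely the $Spin(7)$-adapted expression \eqref{eq:fourform} and the $SU(4)$-adapted expression $\tfrac{1}{2}\omega_{0}^{2}+\mathrm{Re}\,\mathcal{V}_{0}$, and check they differ by an oriented isometry. Once that is done, everything else follows from the general principle that parallel $G$-invariant tensors produce torsion-free $G$-structures.
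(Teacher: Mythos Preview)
Your argument is correct and is essentially the standard one. The paper, however, does not give a proof at all: it simply refers the reader to \cite[Prop.~13.1.4]{Joyce2007}. What you have written is, in outline, precisely the proof Joyce gives there: use $\mathrm{Hol}(g_Y)\subseteq SU(4)$ to obtain $\nabla\omega=\nabla\mathcal{V}=0$, hence $\nabla\Omega=0$; then check pointwise, in an $SU(4)$-adapted unitary frame, that $\tfrac{1}{2}\omega^2+\mathrm{Re}\,\mathcal{V}$ agrees with the model $\Omega_0$ up to an element of $SO(8)$.

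One small comment on your admissibility paragraph: the phrase ``after a possible unit-modulus rescaling of $\mathcal{V}$'' is slightly misleading. The proposition fixes $\mathcal{V}$ (the parallel holomorphic volume form compatible with the Calabi--Yau normalization), and you are not free to rescale it; rather, you are free to choose the unitary frame within its $SU(4)$-orbit so that $\mathcal{V}_p$ takes the standard form. That is exactly what you do, so the argument is fine, but the wording could suggest you are altering $\mathcal{V}$ rather than the frame.
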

\begin{proof}See \cite[Prop. 13.1.4]{Joyce2007}.\end{proof} 


\renewcommand{\leftmark}{\MakeUppercase{Bibliography}}
\phantomsection
\bibliographystyle{JHEP}
\bibliography{C:/Users/cshabazi/Dropbox/Referencias/References}
\label{biblio}
\end{document}